\begin{document}

\title{A holographic approach to the six-dimensional superconformal index}
\author{Surya Raghavendran and Brian R. Williams}
\maketitle

\begin{abstract}
We present a conjectural description of the space of local operators on a stack of finitely many fivebranes in $M$ theory at the level of the holomorphic twist.
Our approach is through the lens of twisted holography and utilizes a description of the minimal twist of eleven-dimensional supergravity. 
We find that the spaces of local operators are modules for the exceptional linearly compact super Lie algebra $E(3|6)$.
From the conjectural description of local operators we deduce closed formulas for the superconformal index of six-dimensional $\cN=(2,0)$ theories of type~$A_{N-1}$.
\end{abstract}

\setcounter{tocdepth}{1}
\tableofcontents

%
%


\section{Introduction}
Superconformal field theories admit a plethora of exactly computable, protected quantities, giving them a distinguished role in supersymmetric physics. 
Since it is believed that all supersymmetric field theories flow to superconformal fixed points, such quantities provide robust invariants of supersymmetric field theories. 
Examples of such a quantity is the superconformal index, which is a generating function for some collection of $R$-charges of certain BPS local operators. 
Crucial steps towards a more mathematical understanding of superconformal indices were taken in \cite{SWchar}, articulated through the construction of twisting.

Introduced by Witten \cite{WittenTwist} and further developed by Costello \cite{CostelloHol}, twisting refers to a localization, or fixed-point, construction for theories equipped with an action of a supersymmetry algebra. 
Operationally, one modifies the BRST differential of the theory by a nilpotent supercharge---the result is a theory for which the infinitesimal translations in the image of the nilpotent supercharge act homotopically trivially. 
When a twist exists (which is almost always) a supersymmetric field theory admits a so-called minimal twist; the resulting theory is a holomorphic-topological field theory that is holomorphic in the maximal number of spacetime directions.

One of the insights of \cite{SWchar} was that superconformal indices count exactly local operators in the minimal twist---accordingly, we may think of the space of local operators in the minimal twist as categorifying the index.
Moreover, the space of local operators, together with its algebra structure under operator product expansion, is part of the richer structure of a \textit{factorization algebra}.
Whilst the former governs the behavior of observables supported at points, the latter organizes observables that are supported on any open set including for example, non-local operators obtained from local ones via a descent procedure.

A principal goal of the present paper is to initiate the study of the factorization algebras of observables associated to the minimal twists of the six-dimensional~$\cN=(2,0)$ superconformal field theories. The full factorization structure is extremely rich, and in this paper we only discuss a small part of it, namely the costalk at a point as a vector space.
This is otherwise known as the space of local operators at the point. 
However, this information is enough to extract familiar quantities like the superconformal index.


Despite its ubiquity, the six-dimensional~$\cN=(2,0)$ supersymmetric theory is quite elusive---it admits no known Lagrangian formulation, and outside of the abelian case, is not known to admit a field theoretic realization.
In light of this, we propose to access the minimal twist and its local operators via the proposal of \textit{twisted holography}.

\subsection{Our approach via twisted holography}
Introduced by Costello and Li in \cite{CLsugra}, the twisted holography proposal posits an avatar of the AdS/CFT correspondence that holds at the level of supersymmetric twists.
Fundamentally, it conjectures a duality between the algebra of observables of a twisted gravitational theory and the algebra of observables of a twisted gauge theory describing the dynamics of a stack of branes coupled to the gravitational theory. There is an exciting body of work being developed around this program including tests of this proposal from both the gravitational and gauge theory sides.

To describe the kinds of theories on either side of the correspondence, we should comment on what it means to twist a theory of supergravity. We will provide more details in \S \ref{s:twisted}. Since the action of supersymmetry is gagued in theories of supergavity, the aforementioned twisting procedure does not quite make sense. Instead, Costello and Li define twisted supergravity to be the theory in perturbation theory around a background where the bosonic ghost for local supertranslations takes a nonzero nilpotent vacuum expectation value. 
Coupling the worldvolume theory of a brane to such a background has the effect of twisting the worldvolume theory in the usual sense.

It is expected that the six-dimensional $\cN=(2,0)$ superconformal field theory of type $A_{N-1}$ describes the low energy dynamics of a stack of fivebranes in M-theory on flat space. Accordingly, there is a minimal twist of the low energy limit of M-theory, eleven-dimensional supergravity, which induces the minimal twist on a stack of fivebranes. We will be interested in a twisted version of the $AdS_{7}/CFT_{6}$ correspondence which will relate the large $N$ limit of the minimally twisted six-dimensional $\cN=(2,0)$ theory with minimally twisted eleven-dimensional supergravity on $AdS_{7}\times S^{4}$.

The twisted holography propoosal moreover posits that holography can be understood as a concrete algebaic operation at the level of observables. We mentioned that the precise mathematical structure modeling observables of a quantum field theory is that of a factorization algebra. The predicted type of duality between the factorization algebras associated to a gravitational theory and to the worldvolume theory of a number of branes is a general version of \textit{Koszul duality}.
Ordinary Koszul duality for associative algebras (so quantum mechanical systems) associates to an (augmented) algebra $A$ a dual algebra $A^!$ whose appropriate derived category of modules is the same as that of $A$.
Following the work of \cite{CLsugra,CP1} (see also the review in \cite{PWkoszul}) there is a simple physical interpretation of Koszul duality.
If $A$ is the algebra of operators of some bulk quantum field theory (perturbatively we can even consider a theory of gravity) then $A^!$ is the algebra of operators on the universal topological line defect.
Universal here means that algebra of operators on any other line defect which couples to the bulk system admits a unique map of algebras from~$A^!$.

The general theory of Koszul duality for factorization algebras has not been developed, and we do not do so in this paper.
This sort of duality would allow one to make sense of universality statements as above for higher dimensional, possibly non-topological, defects in an arbitrary bulk quantum field theory.
Nevertheless, we can make the following ansatz for the Koszul dual $\cF^!$, which we refer to as the $!$-dual in this paper to avoid confusion, of a factorization algebra~$\cF$ of observables of some bulk quantum field theory.
It is the universal factorization algebra, along a specified defect, which couples to the bulk quantum field theory.
While this heuristic definition sounds natural, it does not lend itself to an explicit description.
However, for particular kinds of factorization algebras of bulk quantum field theories an explicit description is furnished by a local version of Noether's theorem, see~\S\ref{s:noether}.

Let us now make a more concrete, yet slightly informal, statement of twisted holography which fits into the approach of this paper.
Let $X$ be a smooth manifold, and let $\Obs_{grav}$ denote a factorization algebra on $X$ that we view as the observables of a bulk gravitational theory.
Suppose we have, in addition, a stack of $N$ branes, wrapping a closed submanifold $Y\hookrightarrow X$ whose worldvolume theory has a factorization algebra of observables $\Obs_{N-branes}$.
In the context of branes, it is natural to posit that the universal theory along the brane is given by the large $N$ limit of the factorization algebras $\Obs_{N-branes}$.

Note that $\Obs_{grav}$ is a factorization algebra on $X$, while $\Obs_{N-branes}$ is a factorization algebra on the closed submanifold $Y$ so we cannot yet compare them.
We can, however, restrict $\Obs_{grav}$ to a factorization algebra just on $Y$, which we denote by $\Obs_{grav}|_Y$.\footnote{In general this is given by some limit construction like for sheaves, but we will use a particularly nice model in the context of holomorphic-topological factorization algebras in this paper.}

\begin{expect}[Twisted holographic principle following \cite{CLsugra}]
After taking into account the backreaction, there is a map of factorization algebras
\[
  (\Obs_{grav}|_{Y})^{!}\to \Obs_{N-branes}
\]
that becomes an equivalence in the large $N$ limit.
\end{expect}

It is natural to wonder how this statement relates to more traditional formulations of the AdS/CFT correspondence, where local operators of the CFT are related to certain states of the gravitational theory on AdS. A precise relation will be articulated in \S\ref{sec:states} where we argue that one can indeed understand $(\Obs_{grav}|_{Y})^{!}$ as a factorization algebra enhancement of the space of (twisted) multi-particle states of the gravitational theory on the backreacted geometry. Explicitly, this geometry refers to the manifold $X - Y$ obtained by subtracting the brane together with the data of a certain Maurer-Cartan element deforming the geometric structure used to define the gravitational theory on $X - Y$.

The above expectation can be tested in instances where both sides of the duality admit explicit descriptions.
This has been carried out in many examples including:
\begin{itemize}
  \item A stack of $D3$ branes in twisted $\Omega$-deformed type IIB supergravity on flat space. The theory on the stack of $D3$ branes is dual to the closed string B-model on the deformed conifold \cite{costello2021twisted}. This can be understood as a twisted $\Omega$-deformed version of the physical AdS/CFT duality between 4d $\cN=4$ super Yang-Mills and type IIB string theory on $AdS_{5}\times S^{5}$. Here, the duality is formulated in terms of vertex algebras which are avatars of holomorphic factorization algebras on a Riemann surface.

  \item A system of twisted $D1-D5$ branes in a $T^{4}$-compactification of a twist of type IIB string theory. Upon compactifying along $T^4$ this can be understood as a variant of the above example where the deformed conifold is replaced by a certain superspace \cite{CP}. The duality is understood as a twisted version of the duality between type IIB supergravity on $AdS_{3}\times S^{3}\times T^{4}$ and the symmetric orbifold CFT $Sym^{N}(T^{4})$.

  \item Membranes and fivebranes in twisted $\Omega$-deformed $M$-theory on Taub-NUT space \cite{CostelloM5,CostelloM2}.
In the particular $\Omega$-background, membranes are localized to a topological quantum mechanical system where the duality can be phrased in terms of associative algebras and ordinary Koszul duality. Moreover, the $\Omega$-background localizes fivebranes to a complex plane and the observables of the localized theory are an affine $W_{N}$ vertex algebra. The holomorphic factorization algebras we seek in the present paper may be thought of as enhancements of these vertex algebras to three-complex dimensions.
\end{itemize}

In each of the above examples, there exist methods to characterize both sides of the duality independently and intrinsically. Indeed, conjectures of \cite{CLsugra} suggest that certain twists of type II superstrings are equivalent to certain topological strings. Utilizing such a description, one may characterize D-banes and associated open-string field theories in terms of categorical data.

For membranes and fivebranes in $M$-theory, in the nonminimal twist or its further $\Omega$-deformation, one may appeal to dimensional reduction arguments to give similar characterizations of the twisted subsectors of the relevant worldvolume theories.

However, for the minimal twist, we are not so lucky. The dimensional reduction of the nonminimal twist will necessarily involve subtle, nonperturbative effects making an intrinsic characterization along the lines of the above more difficult. We will offer some speculations in this direction at the end of the section below.

Consequentially, rather than prove a version of the twisted holographic principle, the objective of this paper is to \textit{use} the twisted holographic principal to conjecture an explicit description of the factorization algebra associated to the worldvolume theory on a stack of minimally twisted fivebranes. Such an analysis reveals that twists of worldvolume theories of branes enjoy certain infinite dimensional symmetries, enhancing finite dimensional symmetries present in the untwisted theories. In the minimal twist of eleven dimensional supergravity, these infinite dimensional symmetries are provided by certain exceptional simplie lie superalgebras.

\subsection{Infinite dimensional symmetry enhancement by exceptional simple super Lie algebras}

In \S\ref{s:twisted} we will recall a description of the minimal twist of eleven-dimensional supergravity, following \cite{RSW}. In this twist, the theory is holomorphic in a maximal number of directions, which is five complex directions, and topological in the remaining real direction. The relation between the minimal twist and other twists is summarized in the following diagram:

\[\begin{tikzcd}
	{\text{physical theory}}\ar[d]\ar[rr, dashed] & & {\Omega-\text{deformed nonminimal twist}}\ar[d, squiggly] \\
	{\text{minimal twist}}\ar[r]\ar[rr, "\text{superconformal deformation}" description, bend right = 12] & {\text{nonminimal twist}}\ar[r, dashed] & {\text{associated graded}}
\end{tikzcd}\]
We will elaborate on the meaning of the bottom arrow labeled ``superconformal deformation'' below.

Each of the above twists of supergravity on flat space admits a certain infinite dimensional algebra of symmetries. To begin with, the associated graded of the $\Omega$-deformed nonminimal twist is a holomorphic-topological theory in five dimensions. The theory on $\R\times \C^{2}$ depends on a holomorphic symplectic structure on $\C^{2}$, and the equations of motion include the Maurer-Cartan equation for an integrable deformation of such. Accordingly, the theory carries an action of the infinite dimensional lie algebra $\operatorname{Ham}(\C^{2})$ of hamiltonian vector fields on $\C^{2}$.

Surprisingly, there is a lift of this relationship to the minimal twist. In \cite{RSW} a certain exceptional simple super lie algebra called was shown to act on the minimal twist on $\R\times \C^{5}$. The super lie algebra is a certain $L_{\infty}$ extension of an exceptional simplie lie super algebra called $E(5|10)$. The algebras of observables of various twists of eleven dimensional supergravity on flat space are recorded in the diagram below; the twist is indicated by the position of the entry in comparison with the diagram above.

\[\begin{tikzcd}
	{\text{physical theory}}\ar[d]\ar[rr, dashed] & & {\clie^{\bu}(\operatorname{Diff} (\C))}\ar[d, squiggly] \\
	{\clie^{\bu}(\widehat {E(5|10)})}\ar[r]\ar[rr, "\text{superconformal deformation}" description, bend right = 12] & {\clie^{\bu}(\operatorname{Ham} (\C^{2}))}\ar[r, dashed] & {\clie^{\bu}(\operatorname{Ham}(\C^{2}))}
\end{tikzcd}\]

We note that the associated graded of the $\Omega$-deformed nonminimal twist and the nonminimal twist only differ by a tensor factor of the deRham complex on $\R^{6}$ - their algebras of local operators are quasi-isomorphic.


Strikingly, fivebranes in the minimal twist bring another exceptional super lie algebra into the spotlight. In the minimal twist, fivebranes are completely holomorphic objects which, in flat space, wrap three complex directions in the eleven-dimensional bulk theory
\beqn
\C^3 \subset \R \times \C^5 .
\eeqn
Recall that the wordvolume theory associated to a stack of fivebranes in $M$-theory on flat space is a superconformal theory with $\cN=(2,0)$ supersymmetry. In six dimensions (after complexifying) the superconformal algebra is the super Lie algebra $\lie{osp}(8|4)$, whose even part is $\lie{so}(8) \times \lie{sp}(4)$. Twisting involves the choice of a holomorphic supercharge $Q$ which leaves three directions invariant and breaks this super Lie algebra down to the smaller super Lie algebra $\lie{osp}(6|2)$. The $Q$-twist of any six-dimensional superconformal field theory has a symmetry by this super Lie algebra.

As we just pointed out, the super Lie algebra $\widehat {E(5|10)}$ describes the ghost system for symmetries of eleven-dimensional supergravity after the minimal twist.
In \cite{RSW}, we wrote down an explicit realization of the residual superconformal algebra $\lie{osp}(6|2)$ on $\widehat {E(5|10)}$.
In fact, $\lie{osp}(6|2)$ sits inside of a small (but still infinite-dimensional) super Lie algebra called $E(3|6)$. 

\begin{conj}[with Ingmar Saberi]
After the holomorphic twist, the six-dimensional $\cN = (2,0) $ superconformal algebra gets enhanced to the exceptional simple super Lie algebra $E(3|6)$.
As a consequence, after twisting, the space of local operators of any six-dimensional $\cN=(2,0)$ superconformal theory is a representation for $E(3|6)$.
\end{conj}

A consequence of this conjecture is that we can interpret, for example, the superconformal index as a character for $E(3|6)$. Comparison to typical formulae in the literature is faciliated by a judicious choice of Cartan for $E(3|6)$.

The superconformal deformation which takes the minimal twist to the nonminimal twist in the above diagrams is a certain superconformal transformation that is a Maurer-Cartan element in $\lie{osp}(6|2)$. This Maurer-Cartan element deforms the super Lie algebra $E(3|6)$ to a familiar object in chiral CFT: the Lie algebra of vector fields on the (formal) disk.
So, if we were to accordingly deform the above conjecture, we would recover the familiar consequence that a chiral conformal field theory has as part of its symmetries the Lie algebra of vector fields on the formal disk.
Of course, there is a richer algebra around---the Virasoro algebra---which extends the Lie algebra of vector fields on the {\em punctured} disk. 
We will say more about a six-dimensional lift of this object later on in this introduction.

\subsection{The fivebrane decomposition}
We will argue that the value of the factorization algebra associated to the worldvolume theory on a finite number of minimally twisted fivebranes on an arbitrary open set is a particular piece of the $!$-dual of the factorization algebra associated to the bulk gravitational theory.
After taking a certain limit, this argument relies on a presentation of the $!$-dual as
\beqn
(\Obs_{grav}|_{Y})^{!} \simeq \cF_{-1} \otimes \cF_0 \otimes \cF_1 \otimes \cdots 
\eeqn
for some factorization algebras $\cF_{-1},\cF_0, \cF_{1},\ldots$ which are defined using a certain weight decomposition of the bulk gravitational theory.
In fact, this decomposition is intimately related to a certain decomposition of the exceptional simple super Lie algebra $E(5|10)$ that we have already mentioned plays an important role in the minimal twist of eleven-dimensional supergravity. This decomposition is a natural lift of the grading on $\operatorname{Ham}(\C^{2})$ induced from taking the associated graded with respect to the order filtration on $\operatorname{Diff} (\C^{2})$. The strange indexing conventions will be explained in \S\ref{s:fact}.

Using this presentation, our conjecture for the value of the factorization algebra associated to the worldvolume theory on a stack of $N$ twisted fivebranes on an open set $U$ satisfies
\beqn\label{eqn:finiteTensor}
\Obs_{N-branes} (U) \simeq \cF_{-1}(U) \otimes \cF_2 (U) \otimes \cdots \otimes \cF_{N-2} (U) .
\eeqn
In other words, to obtain the space of observables supported on an open set $U$ we simply truncate the tensor decomposition at order $N$.
The surprising part is that the values of the factorization algebras $\cF_k$ are directly identifiable from the point of view of the gravitational theory!

A related theory is the minimal twist of the six-dimensional $\cN=(2,0)$ superconformal theory associated to a Lie algebra of type $A_{N-1}$.
We denote the corresponding factorization algebra by $\Obs_{A_{N-1}}$ for now. 
This is obtained from the worldvolume theory simply by throwing away the modes propagating transverse to the brane, which in our presentation above corresponds to stripping off the first factor $\cF_{-1}$. 
Thus, at the level of factorization algebras we have a similar proposed decomposition
\beqn
\label{eqn:AN-1}
\Obs_{A_{N-1}} (U) \simeq \cF_0 (U) \otimes \cF_1 (U) \otimes \cdots \otimes \cF_{N-2} (U).
\eeqn

We want to emphasize that these expressions only hold after taking this classical limit and taking some truncation of the differential present on the left-hand side.\footnote{An instructive avatar of this decomposition to keep in mind is the description of the $W_N$ vertex algebra as being generated by a collection of operators of spins $2,3,\ldots, N$.
Only in the classical limit does the $W_N$ vertex algebra decompose as a pure tensor product.}
This decomposition as a tensor product of graded vector spaces will break down at the quantum level and when we take into account the factorization algebra structure.
Nevertheless, the description provides a very effective way to compute certain protected quantities like the superconformal index.

Though the formalism we employ of Koszul duality and factorization algebras is rather abstract, it has concrete and fruitful consequences at the level of the superconformal index.
On one hand, we have the index of multi-particle states of the supergravity theory on the backreacted geometry; this is a generating function for masses of states. We denote this by $\chi_{grav}({\bf x})$, where ${\bf x}$ are fugacities given by coordinate functions on the four-dimensional Cartan subalgebra of~$\lie{osp}(6|2)$.
Typically, this multi-particle index can be obtained from the single-particle index $f_{grav}({\bf x})$ of gravitational states through the plethystic exponential. 
Since the index is only really sensitive to the minimal twist, it is reasonable that we can compute the index using our formulation of twisted eleven-dimensional supergravity.
We will obtain known formulas as a character of the Hilbert space of eleven-dimensional supergravity on $AdS_7\times S^{4}$ in \S\ref{sec:states} using our description of twisted supergravity.

Now we turn to the computation of superconformal indices of the type $A_{N-1}$ six-dimensional $\cN=(2,0)$ theory. As we have already pointed out, the superconformal index can be computed as a charater of local operators of the twisted theory. Moreover, the space of local operators is readily recovered from the data of the factorization algebra associated to the theory. We will see how the descriptions above lead to a presentation for the superconformal index $\chi_N({\bf x})$ of the worldvolume theory on a finite stack of $N$ fivebranes.

If we are working simply on the worldvolume $\C^3 \cong \R^6$, our description in \eqref{eqn:finiteTensor} posits that after taking the classical limit the decomposition of factorization algebras associted to a stack of $N$ twisted fivebranes induces a decomposition at the level of local operators
\beqn
\cF_{-1}(0) \otimes \cF_0(0) \otimes \cdots \otimes \cF_{N-2}(0) ,
\eeqn
where $\cF_k(0)$ stands for the space of local operators associated to the factorization algebra $\cF_k$.
Forgetting about the differentials for the time being, the local operators $\cF_k(0)$ are given as the free symmetric algebra on some vector space $V_k$ of linear local opeartors.
Thus, to compute the supersymmetric index of the space of local operators of $\Obs_{N-branes}$ it suffices to compute the index of each of the $V_k$---this is the single particle index---and then apply the plethystic exponential.
For now, denote by $g_k({\bf x})$ the index of the vector space $V_k$.
This quantity is directly computable from the gravitational side.

Putting all of this together, our approach is based on the observation that we can express the single particle supergravity index as
\beqn
f_{grav}({\bf x}) = \sum_{k=-1}^\infty g_k({\bf x}) .
\eeqn
Then, the avatar of our description in \eqref{eqn:finiteTensor} at the level of the single particle superconformal index is
\beqn
f_N ({\bf x}) = \sum_{k = -1}^{N-2} g_k({\bf x}) .
\eeqn
To obtain the full superconformal index we simply apply the plethystic exponential
\beqn
\chi_N({\bf x}) = \prod_{k=-1}^{N-2} {\rm PExp}\left[g_k({\bf x})\right] .
\eeqn
This is a formula for the index on a finite stack of $N$ fivebranes. 
A related quantity is the index of the superconformal field theory associated to the Lie algebra $A_{N-1}$. 
To obtain this we simply throw away contributions coming from a single fivebrane, which results in the expression
\beqn
\chi_{A_{N-1}}({\bf x}) = \prod_{k=0}^{N-2} {\rm PExp}\left[g_k({\bf x})\right] .
\eeqn
From these formulas it is manifest that the holographic relation $\chi_{grav} = \lim_{N \to \infty} \chi_N$ holds.

We highlight the expression that our prescription yields for the theory of type~$A_{1}$.

\begin{conj}\label{conj:6dtwo}
The superconformal index of the six-dimensional $\cN=(2,0)$ theory of type $A_1$ is
\[
\chi_{A_1} (y_i,y,q) = {\rm PExp} \left[f_{A_1}(y_i,y,q) \right] .
\]
where the single particle index $f_{A_1}(y_i,y,q)$ is
\beqn\label{eqn:A1}
\frac{q^4(y_1+y_2+y_3) + q^2 (y^2 + q + q^2 y^{-2}) - q^{3} (y + q y^{-1})(y_1^{-1} + y_2^{-1} + y_3^{-1})}{(1-y_1q) (1-y_2 q) (1-y_3 q)}.
\eeqn
We follow the same conventions for fugacities as in \cite{Kim:2013nva}.
\end{conj}

We next provide some comparitive evidence for our claim that $\chi_N({\bf x})$ really is the superconformal index associated to fivebranes.
In ~\S\ref{s:finite} we will show that for small values of $N$, appropriate expansions of our closed formulas agree with expressions which have been found in the literature.
We will also check that a number of specializations of the fugacities agree with certain unrefined indices which have been computed via other means.

Even stronger evidence would involve understanding what the gravitational side can say about the factorization algebras $\Obs_{N-branes}$. 
We will return to a treatment of this in future work.

\subsection{Modules and instantons}
We conclude this introduction by outlining a further consistency check of our proposal that $\Obs_{A_{N-1}}$ describes the observables for the minimally twisted type $A_{N-1}$ six-dimensional $\cN=(2,0)$ theory, this time involving dimensional reduction. The statements outlined below will be pursued further in future work. We begin by giving a flavor of the desired statement at the level of the $\Omega$-deformed nonminimal twist.

Recall that the $\Omega$-deformed nonminimal twist is a holomorphic theory in one complex dimension. We may place this theory on $\C^{\times}$ and attempt to dimensionally reduce along $S^{1}\subset \C^{\times}$. The result should be an $\Omega$-deformed A-twist of five-dimensional $\cN =2$ gauge theory for $SU(N)$- a perturbatively trivial theory. However, naively computing the dimensional reduction using a perturbative description of the fields yields an incorrect answer. This is not surprising: the gauge coupling in five-dimensions goes like the inverse of the radius of the circle we are reducing along. Since the twist still depends on the radius, the perturbative description is untrustworthy.

The precise relationship between the $\Omega$-deformed nonminimal twist of the six-dimensional $\cN=(2,0)$ $A_{N-1}$ theory and five-dimensional $\cN=2$ gauge theory for $SU(N)$ is articulated through the AGT correspondence \cite{AGT}. As we intimated, the observables of the $\Omega$-deformed nonminimal twist are expected to be a one-dimensional holomorphic factorization algebra which agrees with the $W_{N}$ vertex algebra. In the language of factorization algebras, the dimensional reduction along $S^{1}\subset \C^{\times}$ amounts to passing to the algebra of modes. The AGT correspondence posits that this mode algebra acts on the equivariant cohomology of the moduli of instantons of the five dimensional theory.

We propose similar statements at the level of the minimal twist. 
We fix a complex surface $S$ together with a pair of complex vector bundles $L,W \to S$ of ranks one and two respectively, satisfying the condition that $X = \text{Tot} (L \oplus W)$ is a Calabi--Yau fivefold.
The twisted eleven-dimensional supergravity lives on the eleven-manifold $\R \times X$. 

\subsection*{The abelian $\cN=(2,0)$ theory}

The abelian six-dimensional $\cN=(2,0)$ theory is also thought of as the worldvolume theory on a single fivebrane in $M$-theory. 
It admits a field theoretic description as the so-called $\cN=(2,0)$ tensor multiplet. 
In \cite{SWtensor} an explicit description of the minimal, holomorphic, twist of this theory is given.
At the level of this twist, the theory is defined on any complex three-fold (equipped with a square root of its canonical bundle). 
In this section, we consider the $\cN=(2,0)$ theory on the threefold $Z = \text{Tot}(L \to S)$ as a single fivebrane supported on the closed submanifold
\beqn
Z \hookrightarrow \R \times X .
\eeqn
We will denote the corresponding factorization algebra of observables on $Z$ by $\Obs_1$.

In the main body of the paper, we show that this agrees, in the classical limit, with the factorization algebra we denoted $\cF_{-1}$ above. 
We emphasize that we only consider the classical limit of this factorization algebra in this paper. 
However as this is a free theory, there is a simple description of its quantization as a twisted enveloping algebra.
From the holographic point of view, this quantum parameter corresponds to including effects from the backreaction, which we plan to do in future work.

Choosing a fiberwise hermitian metric affords a norm map 
\[
\pi \colon Z \to \R_{+}
\] 
whose fiber away from $0$ is the total space of the unit sphere bundle.
The pushforward factorization algebra $\pi_{*}\Obs_{1}$ is a stratified factorization algebra on the non-negative half-line $\R_+ = \{t \geq 0\}$. 
We can tweak this factorization algebra slightly which gives rise to a constructible factorization algebra on $\R_+$.
In turn, this is equivalent to a pair of an associative algebra (which may be thought of as the algebra of modes of $\Obs_{1}$ along the five-manifold which is the total space of the unit sphere bundle) together with a module for this algebra.

In \cite{SWtensor} it is shown that in the case where $L$ is the trivial bundle, the naive dimensional reduction (where we discard higher Kaluza--Klein modes along the circle fiber) along 
\beqn
S \times \C \to S \times \R_+
\eeqn
agrees with the factorization algebra associated to the minimal twist of five-dimensional $\cN = 2$ supersymmetric Yang--Mills theory on $\R\times M$ for the group $U(1)$ in perturbation theory around the zero instanton sector.

If we keep track of Kaluza--Klein modes, our expectation is that the full dimensional reduction $\pi_{*}\Obs_{1}|_{\R_{>0}}$ includes contributions from instanton operators.\footnote{The instanton charge turns out to be identical to the winding number around $S^1$.} Moreover, the module at zero $\pi_{*}\Obs_{1}|_{0}$ ought to be identified with the Hilbert space of the minimal twist of five-dimensional $\cN = 2$ gauge theory. We are left with the following conjecture.

\begin{conj}
\label{conj:AGT1}
Let $\operatorname{Higgs}_{GL(1)} (S,W)$ denote the moduli space of $GL(1)$ Higgs bundles on the complex surface $S$ where the Higgs fields has coefficients in the rank two bundle $W$.
The mode algebra $\pi_{*}\Obs_{1}|_{\R_{>0}}$ that we have just discussed acts on the space of functions on this moduli space
\beqn
\cO \left(\operatorname{Higgs}_{GL(1)} (S,W)\right) .
\eeqn
\end{conj}

We emphasize that the moduli space in question is really sensitive to the complex geometry of the entire fivefold $X$, at least in the neighborhood of the total space of~$L$.

While we leave the precise relationship of this conjectural action with the usual abelian AGT correspondence to future work, we can highlight some compatibilities. Above, we have mentioned a particular deformation of the observables of the six-dimensional $\cN = (2,0)$ theory by an explicit Maurer--Cartan element in the residual superconformal algebra $\lie{osp}(6|2)$. We point out that this deformation is of the same spirit considered in \cite{BeemEtAl}. For a particular choice of the fivefold $X$, one can identify (though we do not do that in this paper) this deformation of $\pi_{*} \Obs_1|_{\R_{>0}}$ with the Heisenberg algebra.

We expect that one may judiciously choose $X$ such that the superconformal element will also deform the Hilbert space of the minimal twist in five dimensions to the equivariant cohomology of the moduli of abelian instantons on $\C^{2}$, and that the action of~$\pi_{*}\Obs_{1}|_{\R_{>0}}$ compatibly deforms to the Heisenberg action studied by Grojnowski-Nakajima.

\subsection*{The $\cN=(2,0)$ theory of type $A_1$} 


The next simplest application of our holographic approach is to describe the observables of the six-dimensional theory of type~$A_1$.
Our expectation is that the factorization algebra $\Obs_{A_1}$ is related to the gravitational-side factorization algebra that we denoted by $\cF_0$ above.
Precisely, on each individual open set of the threefold $U \subset Z$, our conjecture in this paper is that the cohomology of $\cF_0(U)$ and of $\Obs_{A_1}(U)$ are the same.
To obtain the full factorization algebra structure from the gravitational side we should again include effects of the backreaction. 

Proceeding as in the abelian case, we can place the factorization algebra $\Obs_{A_1}$ on $Z$ and pushforward along the norm map to obtain an associative dg algebra $\pi_{*} \Obs_{A_1}|_{\R_{>0}}$.

\begin{conj}
Let $\operatorname{Higgs}_{SL(2)} (S,W)$ denote the moduli space of $SL(2)$ Higgs bundles on the complex surface~$S$ where the Higgs fields has coefficients in the rank two bundle~$W$.
The associative dg algebra $\pi_{*} \Obs_{A_1}|_{\R_{>0}}$ acts on
  \beqn
  \cO \left ( \operatorname{Higgs} _{SL(2)}(S, W)\right) .
  \eeqn
\end{conj}

As in the abelian case, we may highlight some compatibilities with the usual AGT correspondence. 
For a particular choice of the input geometric data, the same Maurer-Cartan element in $\lie{osp}(6|2)$ will deform the associative algebra $\pi_{*}\Obs_{A_{1}}|_{\R_{>0}}$ to the Virasoro algebra $U(\operatorname{Vir})$. We again expect that for a judicious choice of $X$, the Hilbert space deforms to the equivariant cohomology of the moduli of rank 2 instantons on $\C^{2}$, and the action compatibly deforms to the action in the rank 2 case of the AGT correspondence.

A result in the main body of the paper may be viewed as a check of these conjectures in the case where $Z = \operatorname{Tot}(\cO(-1)\to \C\PP^{2})$ and $W = K^{1/2}_{Z}\otimes \C^{2}$. In this case, the unit circle bundle in $L$ may be identified with the Hopf fibration $S^{5}\to \C\PP^{2}$. In \cite{Kim:2013nva}, the superconformal index of the six-dimensional $\cN=(2,0)$ theory is computed by identifying it with the $S^{1}\times S^{5}$ partition function, and reducing along the Hopf fiber of $S^{5}$ to further identify it with the superconformal index of five-dimensional $\cN=2$ gauge theory on $\C\PP^{2}$. The latter quantity is computed in terms of the Nekrasov instanton partition function, and is almost manifestly the character of $\cO \left ( \operatorname{Higgs} _{SL(2)}(S, W)\right)$. 
The fact that we recover the same expression as the character of the space of local operators $\Obs_{A_1}(0)$ corroborates the claim that $ \cO \left ( \operatorname{Higgs} _{SL(2)}(S, W)\right)$ is the vacuum module of $\pi_{*}\Obs_{A_1}|_{\R_{>0}}$.


\subsection*{Acknowledgements}
We thank K.~Costello, B.~Davison, D.~Gaiotto, N.~Garner, F.~Hahner, Si Li, N.~Paquette, and P.~Yoo for conversations and discussions about ideas that helped lead to the completion of this work. 
Special thanks are due to Ingmar Saberi for his collaborations on past, current, and future related projects.
The work of S.R. is supported by the Perimeter Institute for Theoretical Physics. Research at Perimeter Institute is supported in part by the Government of Canada, through the Department of Innovation, Science and Economic Development Canada, and by the Province of Ontario, through the Ministry of Colleges and Universities.
The work of B.W. was supported by Boston University and the University of Edinburgh. 
%

%
%

\section{Twists in dimensions eleven and six}
\label{s:twisted}

In this section we review the descriptions of twists of the key players in $M$-theory including eleven-dimensional supergravity as well as the six-dimensional superconformal field theory associated to the abelian Lie algebra~$\lie{gl}(1)$.

\subsection{A model for minimally twisted supergravity}


Recently in \cite{CLsugra}, Costello and Li gave a rigorous definition of twisted supergravity.
The idea is roughly the following.
In any theory of supergravity, the action of supersymmetry is gauged. 
When one treats the theory in the BRST formalism, the fields then include a collection of bosonic ghosts which homologically enforce invariance under the odd gauge symmetries.
By definition, twisted supergravity is supergravity in the background where a bosonic ghost for supersymmetry takes a nonzero value $Q$, where $Q$ is a nilpotent supercharge. 
Such backgrounds have the feature that the worldvolume theories of branes in their presence are twisted in a way that is compatible with the usual notion of twisting supersymmetric gauge theories.

Eleven-dimensional supersymmetry admits two inequivalent classes of twists characterized by the dimension of the subspace of translations spanned by the image of bracketing with the twisting supercharge.
\begin{itemize}
\item 
The minimal twist. 
This twist leaves six real directions invariant and is preserved by the subgroup $SU(5)$ of the Lorentz group. 
\item 
The non-minimal twist. 
This twist leaves nine real directions invariant and is preserved by the subgroup $SU(2) \times G_2$ of the Lorentz group. 
\end{itemize}

In this paper we will focus primarily on the minimal twist. 
In \cite{SWspinor} a complete description of the free limit of this twist of eleven-dimensional supergravity is given.
Within the BV formalism, the theory is only $\Z/2$ graded, a point we will elaborate on further soon. 
In \cite{RSW} we have given a proposal for the minimal twist of fully interacting eleven-dimensional supergravity as a $\Z/2$ graded interacting BV theory in a background with global symmetry group $SU(5)$.

The eleven-dimensional theory described in \cite{RSW} exists on any manifold which is locally of the form 
\[
S \times X 
\] 
where $X$ is a Calabi--Yau fivefold and $S$ is a real oriented smooth one-dimensional manifold.
More generally, the eleven-dimensional theory can be constructed on any eleven-manifold equipped with a transversely holomorphic foliation which is equipped with an appropriate holomorphic volume form on the leaves of the foliation.
We will elaborate on this further in \S \ref{sec:states}.

We recall the fields and equations of motion of this eleven-dimensional theory. 

%

\parsec[s:sugrafields]

Let $\T_X$ stand for the holomorphic tangent bundle of the Calabi--Yau fivefold $X$ and $\Bar{\T}_X$ its complex conjugate.
The complexified tangent bundle of the eleven-manifold $S \times X$ decomposes as
\[
\T_S \oplus \T_X \oplus \Bar{\T}_X 
\]
where $\T_S$ is the complexified tangent bundle of the smooth one-manifold $S$.
Locally, we choose coordinates $(t;z,\zbar)$ where $t$ is a smooth coordinate for $S$ and $z$ is a holomorphic coordinate for $X$. 

Momentarily we will give a full presentation of the fields and equations of motion within the Batalin--Vilkovisky (BV) formalism.
But first, we will give a more direct description of the equations of motion and gauge symmetries of the model.
 
One of the primary fields of eleven-dimensional supergravity describes deformations of the metric $g$.
Since we work in a background which has $SU(5)$ holonomy we can take the background metric on $S \times X$ to be the product of the flat metric on $S$ with a Calabi--Yau metric $g_{X}$ on $X$. 

Deformations of the background metric which survive the twist can be identified with Beltrami differentials, that is, sections of 
\[
\Bar{\T}_X^* \otimes \T_X \cong \Bar{\T}^*_X \otimes \Bar{\T}_X^* . 
\]
Here, we have used the isomorphism granted by the K\"ahler metric $g_{X}$.
In local holomorphic coordinates $\mu$ admits a description like
\[
\mu^{\Bar{i}}_j (t;z,\zbar) \d \zbar_{\Bar{i}} \otimes \del_{z_j} .
\]
More generally, the fields of this model include sections of the holomorphic tangent bundle with coefficients in Dolbeault forms on $X$ of arbitrary Dolbeault type and de Rham forms on $S$, which we can write in superfield notation as 
\[
\mu = \mu^{\Bar{I}}_j (t;z,\zbar) \d \zbar_{\Bar{I}} \otimes \del_{z_j} + \mu^{\Bar{I}}_{t,j} (t;z,\zbar) \d t \d \zbar_{\Bar{I}} \otimes \del_{z_j}
\]
where there is a sum over the multi-index $\Bar{I}$ as it ranges over subsets of $\{1,\ldots,5\}$.
In this notation, the component $\mu^{\Bar{I}}_j$ is {\em odd} if $|\Bar{I}|=0,2,4$ is even and {\em even} if $|I| = 1,3,5$ is odd.
Similarly, the component $\mu^{\Bar{I}}_{t,j}$ is {\em even} if $|\Bar{I}|=0,2,4$ is even and {\em odd} if $|I| = 1,3,5$ is odd.
In particular we have odd fields $\mu_j(t;z,\zbar) \del_{z_j}$ which play the role of ghosts for infinitesimal changes of coordinates.
We will see that the equations of motion dictate that we only see changes of holomorphic coordinates.

The field $\mu$ satisfies a condition that it be {\em divergence-free} for the Calabi--Yau structure on the fivefold~$X$.
This means that it is required to satisfy the equation
\beqn
\div \mu = 0
\eeqn
where $\div(-)$ is the divergence with respect to the holomorphic volume form $\Omega$ on $X$.
By including the full space of BV fields, this condition does arise from an action functional principle---it is not an additional constraint like the one present in the six-dimensional superconformal theory.
The field to add in this situation is a Dolbeault form $\nu \in \Omega^{0,\bu}(X)$ which imposes the equation above upon setting $\nu = 0$.
Another piece of the equations of motion for~$\mu$ dictate that it be locally constant along~$S$. 

Ordinarily, for $\mu$ to describe a deformation of complex structure in the $X$ direction we would require that it satisfy the Maurer--Cartan equation.
For twisted supergravity, however, we find a modification of this equation which involves the twisted analog of another familiar field.

The other primary bosonic field of eleven-dimensional supergravity is a three-form.
In the BV formalism we see all types of differential forms corresponding to a tower of ghosts for this higher gauge field together with antifields and antighosts.
 
On $S \times X$ the space of complex-valued differential forms decomposes into de Rham--Dolbeault forms as in
\[
\oplus_{k,p,q} \Omega^k(S) \otimes \Omega^{p,q}(X) .
\]
We refer to the homogenous pieces as the space of $(k;p,q)$ forms, these admit local presentations as
\[
\sum_{P,Q} f_{PQ} (t;z,\zbar) \d z_P \d \zbar_Q + \sum_{P',Q'} g_{P'Q'}(t;z,\zbar) \d t \d z_{P'}\d \zbar_{Q'} 
\]
where the first sum is over tuples~$P=(p_1,p_2,p_3),Q=(q_1,q_2,q_3)$ and
the second sum is over tuples~$P'=(p'_1,p'_2,p'_3),Q=(q'_1,q'_2,q'_3)$

In the minimal twist only forms with $p_i ,p_i'\leq 1$ survive. 
Thus, we have as part of the complex of fields in the twisted theory the space of forms
\[
\left(\oplus_{k,q} \Omega^k(S) \otimes \Omega^{0,q}(X)\right) \oplus 
\left(\oplus_{k',q'} \Omega^{k'}(S) \otimes \Omega^{1,q'}(X)\right) .
\]
We denote a general (non-homogenous) element in the first summand by $\beta$ and an element in the second summand by $\gamma$.
As above we can expand such fields in local coordinates using multi-indices by 
\begin{align*}
\beta & = \beta^{\Bar{I}} \d \zbar_{\Bar{I}} + \beta_t^{\Bar{J}} \d t \d \zbar_{\Bar{J}}\\
\gamma & = \gamma^{i \Bar{I}} \d z_i \d \zbar_{\Bar{I}} + \gamma_t^{j \Bar{J}} \d t \d z_j \d \zbar_{\Bar{J}}.
\end{align*}
Fields $\beta^{\Bar{I}}, \gamma_t^{i \Bar{I}}$ with $|\Bar{I}| = $odd (resp. even) are {\em even} (resp. {\em odd}) and fields $\beta_t^{\Bar{I}}, \gamma^{i \Bar{I}}$ with $|\Bar{I}| = $odd (resp. even) are {\em odd} (resp. {\em even}). 

The most important equation of motion of the eleven-dimensional theory involves both the fields $\mu$ and $\gamma$. 
When $\div \mu = 0$ it takes the form
\beqn\label{eqn:eom2}
\dbar \mu + \frac12 [\mu,\mu] = \del \gamma \del \gamma .
\eeqn
Because of the term on the right hand side this equation is not exactly the usual Beltrami equation for deformations of complex structures.
On the left hand side we are implicitly using an identification between the holomorphic tangent bundle $\T_X$ and the bundle $\wedge^4 \T^*_X$ granted by the holomorphic volume form on $X$.
This allows us to view this equation as taking place entirely in the graded space $\Omega^\bu(S) \hotimes \Omega^{4,\bu}(X)$. 
In particular, this equation is inhomogenous in form type.
As an example of the homogenous form type $(0;4,2)$ piece of this equation we have 
\[
\ep^{ijklm} \del_{\zbar_{\bar{j}}} \mu_m^{\bar{i}} + \ep^{ijkmn} \del_{z_l} (\mu_{m} \mu_{n}) =\del_{z_i} \gamma^{j \bar{i}} \del_{z_k} \gamma^{l \bar{j}} 
\]


\parsec[s:Lsugra]

We move on to give a complete description of our eleven-dimensional model within the Batalin--Vilkovisky (BV) formalism.
In the superfield notation above we have implicitly provided all components of the ghosts, fields, anti-fields, and anti-ghosts.
One of the key structures in the BV formalism is the graded skew-symmetric pairing between fields and antifields (ghosts and antighosts).
Once such a pairing $\omega_{BV}$ is introduced, one can opt to describe solutions to the equations of motion as Maurer--Cartan elements of a certain sheaf of ($\Z/2$-graded) $L_\infty$ algebras on $X \times S$.

An $L_\infty$ algebra is a graded vector space~$L=\oplus_{p\in \Z} L^p$ equipped with a collection of operations $\{[-]_k\}_{k = 1,2,\ldots}$ where 
\[
[-]_k \colon L^{\times k} \to L[2-k] 
\]
are multilinear maps which satisfy the ordinary higher Jacobi relations defining an $L_\infty$ algebra.
The shift $[2-k]$ means that $[-]_k$ is of cohomological degree $k-2$. 

The BV fields arise as the sections of a sheaf of $L_\infty$ algebras $\cL$ over the spacetime manifold $M$.
There is an overall shift in the relationship between the cohomological grading of $\cL$ and the ghost number in physics terminology. 
In cohomological degree zero of $\cL$ sit the ghosts, in cohomological degree one sit the fields, etc..
In this sense it is more accurate to refer to the BV fields as sections of the shift of the sheaf of $L_\infty$ algebras $\cL[1]$. 
We refer to \cite{CG2,ESW} for a further review of these conventions.
Together with the skew symmetric pairing $\omega_{BV}$ of cohomological degree $-1$ on $\cL[1]$, the structure maps $[-]_k$ organize together to define the full BV action which takes the general form
\beqn
\label{eqn:Sbv}
S(\Phi) = \sum_{k \geq 1} \omega_{BV} \left(\Phi , [\Phi,\ldots,\Phi]_{k}\right) ,
\eeqn
which we can think of as a functional on $\cL[1]$ of cohomological degree zero.

One feature of our proposal for the minimal twist of eleven-dimensional supergravity is that it only carries an overall $\Z/2$ grading. 
This $\Z/2$ grading totalizes the original ghost grading (which is by the group $\Z$) and the fermion number which is present in the untwisted theory.
We thus make use of $\Z/2$ graded versions of the usual BV formalism.
In this paper $\Z/2$ graded $L_\infty$ algebra means that we just have a $\Z/2$ graded vector space.
There are operations $\{[-]_k\}_{k = 1,2,\ldots}$ which satisfy the same higher Jacobi identities.
The operation $[-]_k$ is even if $k$ is even and odd if $k$ is odd.\footnote{This is not to be confused with a super $L_\infty$ algebra which is an $L_\infty$ algebra internal to the category of super vector spaces.
A $\Z/2$ graded $L_\infty$ algebra is simply what one gets when they apply the forgetful function from $\Z$ graded vector spaces to $\Z/2$ graded vector spaces.}

To build the $L_\infty$ algebra associated to the eleven-dimensional theory, we first introduce a sheaf of $\Z/2$ graded $L_\infty$ algebras on the Calabi--Yau fivefold $X$.
As a sheaf of super vector spaces, it will be given as the holomorphic sections of a super vector bundle that we denote by $L_X$. 
The even part of this super vector bundle is
\[
\T_X \oplus \C_X 
\]
where $\T_X$ is the holomorphic tangent bundle and $\C_X$ is the trivial bundle.
We will denote even sections by $(\mu, \beta)$ according to the decomposition. 
The odd part is 
\[
\C_X \oplus \T^*_X .
\]
We will denote odd elements by $(\nu, \gamma)$ according to the decomposition. 

The $L_\infty$ structure on the sheaf of holomorphic sections of $L_X$ is described as follows. 
First, $\d = [-]_1$, the differential, is simply given by 
\begin{align*}
\d \beta & = \del \beta \in \Omega^1_X \subset L_{X,-} \\
\d \mu & = \div \mu \in \cO_{X,\nu} \subset L_{X, -}
\end{align*}
and $\d \gamma = \d \nu = 0$. 
For $k \geq 2$ the general formula for the $k$-ary brackets is 
\begin{align*}
[\nu_1, \ldots, \nu_{k-2}, \mu_1,\mu_2]_{k} & = \div(\nu_1 \cdots \nu_k \mu_1 \wedge \mu_2) \in \PV^1_X \\
[\nu_1,\ldots, \nu_{k-3}, \mu_1,\mu_2,\gamma]_k & = \nu_1 \cdots \nu_{k-3} (\mu \wedge \mu') \vee \del \gamma \in \cO_{X,\beta} .\\
[\nu_1,\ldots,\nu_{k-2}, \mu, \gamma]_k & = \nu_1 \cdots \nu_{k-2} \mu \vee \del \gamma \in \Omega^1_X \\
[\gamma_1,\gamma_2] & = \Omega_X \vee (\del \gamma_1 \wedge \del \gamma_2) \in \PV^1_X .
\end{align*}
In \cite{RSW} it is shown that this endows $L_{X}$ with the structure of a sheaf of $\Z/2$ graded $L_\infty$ algebras.
There is a sub sheaf given by the sections $\mu$ and $\nu$ which is $L_\infty$-equivalent to the sheaf of holomorphic divergence-free vector fields on $X$. 

As all operations above are given in terms of holomorphic polydifferential operators, the $L_\infty$ algebra structure above induces a $\Z/2$ graded $L_\infty$ algebra structure on the Dolbeault resolution $\Omega^{0,\bu}(X,L_X)$ of the holomorphic vector bundle $L_X$.

Finally, to obtain a local $L_\infty$ algebra on $S \times X$, where $S$ denotes a real oriented smooth one-manifold, we simply tensor with the de Rham complex along~$S$.
We denote by $\cL_{sugra}$ the following local $\Z/2$ graded $L_\infty$ algebra on $X \times S$
\beqn
\cL_{sugra} \define \Omega^\bu(S) \hotimes \Omega^{0,\bu}(X,L_X) .
\eeqn

It is shown in \cite{RSW} that $\cL_{sugra}$ is equipped with a non-degenerate skew symmetric odd pairing $\omega_{BV,sugra}$ which is compatible with the $L_\infty$ structure above.
All together, this data prescribes the structure of a $\Z/2$ graded theory in the BV formalism where the BV action is as in \eqref{eqn:Sbv}.

The free part of the BV action is easy to describe
\beqn
\label{eqn:free}
\int_{S \times X} \left(\gamma (\d_{dR} + \dbar) \mu\right) \Omega_X + \int_{S \times X} (\beta (\d_{dR} + \dbar) \nu) \Omega_X + \int_{S \times X} (\beta \div \mu) \Omega_X ,
\eeqn
where $\Omega_X$ is the holomorphic volume form on~$X$.
Notice that in the first two terms no holomorphic derivatives appear in the direction of $X$ due to the presence of this holomorphic volume form.

The interacting part of the action is more complicated, partly for the reason that in our presentation above it is given by a formal series in the space of fields, rather than just a polynomial in the fields. 
Explicitly, the interaction can be written as
\beqn
\label{eqn:int}
\frac12 \int_{S \times X} (\mu^2 \del \gamma) \frac{\Omega_X}{1- \nu} + \frac16 \int_{S \times X} \gamma \del \gamma \del \gamma .
\eeqn
Notice that upon varying the field $\gamma$ and assuming that $\d_{dR} \mu = 0$ and $\nu = \div \mu = 0$ we recover the equation of motion \eqref{eqn:eom2} from this action functional. 
%

\parsec[s:e510]

In the case that $X = \C^5$ and $S = \R$ the eleven-dimensional model we have just described bears a close relationship to the exceptional simple super Lie algebra $E(5|10)$ classified in \cite{KacClass}.
The even part of $E(5|10)$ is the Lie algebra of divergence-free vector fields on the formal five-disk $\Hat{D}^5$. 
The odd part of $E(5|10)$ is the space of closed two-forms $\Omega^{2,cl}(\Hat{D}^5)$ on the formal five-disk.
There is a natural action of divergence-free vector fields on closed two forms. 
Additionally there is a Lie bracket between two closed two-forms $\alpha,\alpha'$ which produces a divergence-free vector field via the standard holomorphic volume form on the five-disk:
\beqn
[\alpha,\alpha'] = \Omega_{\Hat{D}^5}^{-1} \vee ( \alpha \wedge \alpha' ) .
\eeqn

Given a vector bundle $E \to M$, the bundle of $\infty$-jets $J^\infty E \to M$ is a $\infty$-dimensional pro vector bundle whose sections consist of $\infty$-jets of sections of $E$.
If $M = \R^d$ and $E$ is translation invariant, then the bundle of $\infty$-jets can be identified with $E_0 \times \C[[x_i]]$ where $\{x_i\}$ is a chosen coordinate on~$\R^d$. 
The process of taking $\infty$-jets is well-behaved for local Lie algebras:
the $\infty$-jets of a translation invariant local Lie algebra on $\R^d$ at a point carries the natural structure of a Lie algebra.

In \cite{RSW} it is shown that the $\infty$-jets of $\cL_{sugra}$ at $0 \in \R \times \C^5$ is quasi-isomorphic to a central extension $\Hat{E(5|10)}$ of the super Lie algebra $E(5|10)$.
The central extension is defined by the (totally even) three-linear cocycle
\beqn\label{eqn:e510central}
(\mu, \mu' , \alpha) \mapsto \<\mu \wedge \mu' , \alpha\>_{z=w=0} \in \C .
\eeqn
Since this is a three-linear functional the model we use for the central extension is a super $L_\infty$ algebra (with zero one-ary operation) rather than just a super Lie algebra.

\subsection{A more general background}
\label{s:thfmflds}
The eleven-dimensional model for twisted supergravity can be extended to more general geometries than products $S \times X$ where $S$ is a real one-dimensional manifold and $X$ is a Calabi--Yau fivefold.
More generally, we can define the model on an eleven-manifold $M$ equipped with a transversely holomorphic foliation (THF) equipped with the data of a non-vanishing volume form on the leaf space.
For general background on the theory of THFs we refer the reader to \cite{DuchampKalka, KamberTondeur, Rawnsley}.

A THF structure on a smooth manifold $M$ is an integrable subbundle $F \subset \T_M \otimes \C$ such that $F + \Bar{F} = \T_M \otimes \C$.
Equivalently, this allows you to choose an atlas of coordinates $(t_i, z_j)$ on $M$ which transform smoothly in the $t_i$-variables and holomorphically in the $z_j$-variables. 
Accordingly, we may take a local patch in a THF manifold to be of the form $\R^m \times \C^n$.
The bundle $F$ is locally spanned by the vector fields $\partial / \partial t_i$'s and $\del/\del \zbar_j$'s.
(Notice that when $F \cap \Bar{F} = 0$ we are just describing an ordinary complex structure on $M$.)
We will say that $M$ is of dimension $(m,n)$ (so it is a manifold of real dimension $m + 2n$).

There is a notion of the sheaf of `holomorphic functions' $\cO_F$ on any manifold $M$ equipped with a THF~$F$---in the language of foliations these are the functions on $M$ which are constant along the leaves of $F$.
Locally, on a coordinate patch it is given by functions which only depend on the holomorphic variables $z_i$.
There is a resolution $\cA^\bu_F$ for $\cO_F$ by locally free sheaves on $M$ which is the THF analog of the de Rham and Dolbeault complexes. 
In degree zero $\cA^0_F$ is just the sheaf of smooth functions on $M$.
In degree $k$ one takes $\cA^k_F$ the sheaf of sections of the bundle $\wedge^k F^*$.
The derivative along the leaves of the foliation defined by $F$ defines a map of sheaves
\[
\thfd \colon \cA^{q}_F \to \cA^{q+1}_F  .
\]
By integrability one has $\thfd^2 = \thfd \circ \thfd = 0$ and one can show that the natural map $\cO_F \xto{\simeq} \cA^\bu_F$ provides a resolution.
Locally in a split THF structure like $\R^m \times \C^n$ the operator $\thfd$ is of the form $\d_{dR} + \dbar$ where $\d_{dR}$ is the de Rham differential along $\R^d$ and $\dbar$ is the Dolbeault operator along $\C^n$.

The fields of our eleven-dimensional theory may be described as sheaves of flat sections of some natural partially flat bundles on $M$. Associated to a foliation $F$, its normal bundle is the quotient bundle $Q = T^{\C}_{M}/F$. The normal bundle $Q$ and its dual $Q^{*}$ admit natural connections defined along the leaves of $F$, the so-called Bott connection \cite{KamberTondeur}.

These furnish the notion of the sheaf of holomorphic vector fields $\vartheta_F$ and a sheaf of holomorphic one-forms $\Omega^1_F$ on a manifold $M$ equipped with a THF~$F$. These are the sheaves of sections of $Q$ and $Q^{*}$ that are flat with respect to the Bott connection. Locally, on $\R^m \times \C^n$ these are of the form $f^i \del / \del z_i$ and $f^{i}dz_{i}$ where each $f^i \in \cO_F$.

Let $\cA^k_{F}(Q), \cA^{k}_{F}(Q^{*})$ be the sheaves of $C^\infty$-sections of the bundles $\wedge^k F^{*} \otimes Q$ and $\wedge^{k}F^{*}\otimes Q^{*}$ respectively. Then we have fine resolutions

\beqn
\vartheta_F \xto{\simeq} \cA_F^\bu(Q) \ \ \ \ \ \ \ \Omega^1_F \xto{\simeq}  \cA_F^\bu(Q^*)
\eeqn
The sheaf $\vartheta_F$ is equipped with a Lie bracket which extends to endow $\cA^\bu(Q)$ the structure of a dg Lie algebra. Likewise, the sheaf $\Omega^{1}_{F}$ carries the contragradient action of $\vartheta_{F}$ which likewise extends to equip $\cA^{\bu}(Q^{*})$ with the structure of a dg-module over $\cA_{F}^{\bu}(Q)$.

The holomorphic deRham and divergence operators also have analogues in the THF setting, which are now defined on the leaf space. First, for a vector bundle with  $\cA_{F}^{p,q}(Q)$ denote the sheaf of $C^{\infty}$ sections of $\wedge^{q} F^{*}\otimes \wedge^p Q$.
The fields of the eleven-dimensional theory, which we phrased in terms of a mixed type of de Rham and Dolbeault cohomology, can now be elegantly repackaged in terms of the above.

For illustration, let us focus on the fields $\beta,\gamma$ which on $S \times X$ combine to form the complex
\beqn\label{eqn:drdol}
\Omega^{\bu}(S) \otimes \Omega^{0,\bu}(X) \xto{1 \otimes \del} \Omega^{\bu}(S) \otimes \Omega^{1,\bu}(X) .
\eeqn
As usual, we leave the $\d_{dR}$ and $\dbar$ operators implicit.
There is an analog of this operator $1 \otimes \del$ on a manifold $M$ equipped with a THF~$F$.
It is a differential operator 
\beqn
D \colon \cA^{p,\bu}_F \to \cA^{p+1,\bu}_F
\eeqn
which locally is just the holomorphic de Rham operator. 
This operator commutes with $\thfd$.
The generalized $\beta$ and $\gamma$ fields will be sections of the two-term complex of sheaves $\cA^\bu_F \xto{D} \cA^{1,\bu}_F$.

\newcommand\Div{D_\Omega}

To make sense of a THF analogue of the holomorphic divergence operator, we will assume that in addition to having a THF structure that $M$ is equipped with the data of a trivialization $\wedge^{n} Q^* \cong \C_{M}$ which we will denote by~$\Omega_F$. We require that the trivialization is constant along the leaves of $F$, i.e. that $D\Omega_{F} = 0$. If $F \cap \Bar{F} = 0$, then $\Omega_F$ returns the usual notion of a non-vanishing holomorphic top form. For our purposes, this is the appropriate notion of a transverse Calabi--Yau structure

Next we should say where the generalized $\mu$ and $\nu$ fields live. 
The volume form $\Omega_F \colon \C \cong \wedge^n Q^*$ determines an isomorphism $\cA^{p,q}_F \cong \cA^{n-p,q}_F$ for each $p,q$.
Via this isomorphism we can identify $D \colon \cA^{n-1,\bu}_F \to \cA^{n,\bu}_F$ with a differential operator
\beqn
\Div \colon \cA_F^{\bu}(Q) \to \cA^\bu_F .
\eeqn
The $\mu$ and $\nu$ fields will be sections of the two-term complex of sheaves $\cA^\bu_F(\theta_F) \xto{\Div} \cA^\bu_F$. 
This resolves the THF version of the sheaf of holomorphic vector fields on $M$ which preserve the form $\Omega_F$. 

Now we are in a place to describe a BV theory defined on an eleven-manifold $M$ equipped with a THF $F$ and a volume form $\Omega_F$ on the leaf space.
The space of fields of the theory consist of fields $\beta,\gamma, \mu, \nu$ as before but where
\beqn
\beta \in \Pi \cA_F^\bu , \quad \gamma \in \cA^{\bu}_F (Q^*) , \quad \mu \in \Pi \cA^\bu_F(Q), \quad \nu \in \cA_F^\bu  .
\eeqn

Notice that there is a natural odd pairing between the fields $\mu, \gamma$ and $\beta, \nu$.

For convenience, we will let $L$ denote the dg vector bundle
\beqn
\Pi (\C \oplus Q)\to Q^* \oplus \C
\eeqn

with differential given by the block matrix $\operatorname{diag}(D, \Div)$ and succinctly write the space of fields as $\cA^{\bu}_{F}(L)$.

The free part of the action functional (cf. \eqref{eqn:free}) is 
\beqn
\int_M (\gamma \thfd \mu) \Omega_F + \int_M (\beta \thfd \gamma) \Omega_F + \int_M (\beta \Div \mu) \Omega_F .
\eeqn
The interaction (cf. \eqref{eqn:int}) is
\beqn
\label{eqn:int}
\frac12 \int_{M} (\mu^2 D \gamma) \frac{\Omega_F}{1- \nu} + \frac16 \int_{M} \gamma D \gamma D \gamma .
\eeqn
If we assume that $\nu = \Div \mu = 0$ we recover the following equation of motion upon varying $\gamma$
\beqn
\thfd \mu + \frac12 [\mu,\mu] = D \gamma D \gamma .
\eeqn
This is a generalization of equation \eqref{eqn:eom2} for a general THF $F$ on the eleven-manifold~$M$.

\subsection{Twisted fivebranes}

It is expected that the worldvolume theory of fivebranes in $M$-theory on flat space is equipped with $\cN=(2,0)$ superconformal symmetry.
Six-dimensional $\cN=(2,0)$ supersymmetry admits two inequivalent classes of twists characterized by the number of directions which are left invariant:
\begin{itemize}
\item
The holomorphic, or minimal, twist.
This twist leaves three real directions invariant and is stabilized by the double cover of the group $U(3)$.
The holomorphic twist of any $\cN=(2,0)$ theory can be defined on any complex three-fold $Z$ equipped with a square-root of its canonical bundle.
\item
The non-minimal twist.
This twist leaves five real directions invariant and is stabilized by the group $SO(4) \times U(1)$.
The non-minimal twist of any $\cN=(2,0)$ theory can be defined on a six-manifold of the form $M^4 \times \Sigma$ where $M^4$ is a smooth four-manifold and $\Sigma$ is a Riemann surface.
\end{itemize}

An explicit characterization of these twists in the case of the stack of a single fivebrane has been given in \cite{SWtensor}.
We recall the description of the minimal twist.

\parsec[s:single]

The holomorphic twist of the fivebrane theory is defined on any complex three-fold $Z$ (which is not necessarily equipped with a Calabi--Yau structure).
In the particular twist we will use we must assume, however, that $Z$ is equipped with a square-root of the canonical bundle $K_Z^{1/2}$.

The theory is $\Z \times \Z/2$ graded where $\Z$ is the ghost number (or cohomological degree) and $\Z/2$ is parity.
There are four fundamental fields of ghost number zero $(\alpha, \omega, \phi_1,\phi_2)$ which consist of even Dolbeault forms of type $(2,1)$ and $(3,0)$ on $Z$:
\[
\alpha \in \Omega^{2,1}(Z), \quad \omega \in \Omega^{3,0}(Z),
\]
and a pair of odd $(0,1)$ forms twisted by the line bundle $K^{1/2}_Z$:
\[
(\phi_1,\phi_2) \in \Pi \Omega^{0,1}(Z , K^{1/2}_Z) \otimes \C^2 .
\]
The equations of motion read
\beqn
\label{eqn:eom}
\begin{split}
\del \alpha + \dbar \omega & = 0 \\
\dbar \alpha = \dbar \phi_i & = 0 .
\end{split}
\eeqn

There are gauge symmetries for the fields $\mu, \Omega$ determined by a ghost $b$ which is a Dolbeault form of type $(2,0)$ which acts simply by
\beqn
\label{eqn:ghost}
\begin{split}
\mu & \mapsto \mu + \dbar b  \\
\Omega & \mapsto \Omega + \del b .
\end{split}
\eeqn
There are also odd gauge symmetries for the odd fields $\phi_i$ given by
\beqn
\phi_i \mapsto \phi_i + \dbar \chi
\eeqn
where $\chi \in \Omega^0(Z, K_{Z}^{1/2}) \otimes \C^2$.

This theory notoriously does not admit a Lagrangian description.
However, it can still be put in a degenerate form of the BV formalism where the space of fields above is equipped with an odd Poisson bivector \cite{SWtensor}.
In the (degenerate) BV formalism, the space of fields of the theory is
\beqn
\cE_{\lie{gl}(1)} = \Omega^{\geq 2, \bu}(Z)[1] \oplus \Pi \Omega^{0,\bu}(Z, K^{1/2}_Z) \otimes \C^2 [1] ,
\eeqn
and the linear BRST differential is described by the following diagram
\beqn\label{eqn:weight-1a}
\begin{tikzcd}
\ul{-1} & \ul{0}\\
\Pi \Omega^{0,\bu}(Z, K_Z^{1/2} \otimes \C^2) & \\
\Omega^{2,\bu}(Z) \ar[r,"\del"] & \Omega^{3,\bu}(Z) ,
\end{tikzcd}
\eeqn
where the $\dbar$ operator acts implicitly on each summand.

\section{Twisted supergravity states}
\label{sec:states}

The first entry of the AdS/CFT dictionary in traditional treatments is a matching between \textit{supergravity states} and local operators in the CFT.
The goal of this section is to provide constructions of spaces of twisted supergravity states in our eleven-dimensional model, via geometric quantization. 
The state space on the twisted version of $AdS_{7}\times S^{4}$ has a remarkable property---it is naturally a module for a certain infinite-dimensional exceptional super Lie algebra. 
We conclude the section by computing characters for these modules and comparing them with large $N$ indices for fivebranes and membranes in the literature.

Before proceeding with the construction, let us first give some feel for the situation we hope to describe.
Suppose we consider a gravitational theory on $AdS_{d+1}\times S^{d^{\prime}}$, which we compactify to view as a theory on $AdS_{d+1}$ with all Kaluza-Klein harmonics included. Let $M^{d}$ denote the conformal boundary of $AdS_{d+1}$. A supergravity state is traditionally defined to be a solution to linearized equations of motion with a given boundary value \cite{WittenAdS}. 
Typically, this definition is made in situations where the relevant boundary value problem has a unique solution, in which case one may label states by the corresponding boundary values. Moreover, one may think of such boundary values as arising from modifications of a vacuum boundary condition at a point.

\subsection{Twisted Backreactions}
We begin by describing the relevant backgrounds. In eleven-dimensional supergravity, the $AdS_7 \times S^4$ background is obtained by backreacting a number of fivebranes in flat space \cite{Maldacena:1997re,WittenAdS}.
In \cite{RSW} we gave descriptions of twisted versions of this background. We will recall this construction, adapted to a slightly more global situation than considered previously.

We will consider the eleven-dimensional theory on eleven-manifolds that arise as total spaces of vector bundles. Placing the theory in the backreacted geometry is a two step procedure:

\begin{itemize}
  \item Place the eleven-dimensional theory on the complement of the zero section. 
  To do so, we will wish to describe the complement of the zero-section in a way that facilitates a description of the perturbative theory in terms of natural operations on holomorphic-topological local $L_{\infty}$-algebras.

  \item Deform the theory on the complement of the zero section by a certain Maurer--Cartan element.
  The Maurer--Cartan element is thought of as the flux sourced by branes wrapping the zero section.
\end{itemize}

This procedure is implemented at the level of the $\Omega$-deformed nonminimal twist on flat space in the appendix of \cite{CostelloM5}.

\parsec[s:brfive]

We now carry out the above procedure to backreact fivebranes in our eleven-dimensional model.
Let $Z$ be a three-fold that the fivebranes wrap.
We also fix a rank 2 holomorphic vector bundle $W\to Z$ such that $\wedge^{2} W \cong K_{Z}$;
this condition ensures that the total space of $W$ is a Calabi-Yau five-fold. In the main body of the paper we will choose $W$ to be the bundle $K_{Z}^{1/2}\otimes \C^{2}$.

Consider the bundle $\R\oplus W$; this bundle has a canonical partially flat connection. We wish to consider our eleven dimensional model on $M = \text{Tot} (\R\oplus W)$ which is the total space of the \textit{real} rank five bundle $\R\oplus W$ over $Z$. The partially flat connection on $\R\oplus W$ equips $M$ with a canonical THF structure $F_{M}\subset T_{M}^{\C}$.

We place a stack of $N$ fivebranes wrapping the zero section in $\R\oplus W$.
Denote the complement of the zero section by
\[
\mathring{M} = \text{Tot}(\R\oplus W) - 0(Z).
\]
We choose fiber coordinates of the bundle $t, w_{1}, w_{2}$ of $\R \oplus W$ over $Z$ and a fiberwise partially hermitian metric.
The corresponding norm defines a map
\begin{align*}
 h \colon  M & \to \R_{+} \\
  (t, w_{i}, \bar{w_{i}}, p)& \mapsto t^{2} + |w_{1}|^{2}+|w_{2}|^{2}
\end{align*}
Letting $\pi \colon M \to Z$ be the natural projection, we obtain the projection
\[
p \define (h,\pi) \colon M \to \R_{+}\times Z
\]
which restricts to an $S^4$ bundle $p|\mathring {M} \colon \mathring{M} \to \R_{>0} \times Z$.
These embeddings and projections fit inside of the following commutative diagram
\[
\begin{tikzcd}
\mathring{M} \ar[d,"p|\mathring M"'] \ar[r,hook] & M \ar[d,"p"] & \ar[l,hook',"0"'] Z \ar[d,"="] \\
\R_{>0} \times Z \ar[r,hook] & \R_{+} \times Z & \ar[l,hook',"0 \times \id"] Z.
\end{tikzcd}
\]
The inclusions on the left are the natural embeddings.
The top right inclusion is the zero section of $M = {\rm Tot}(\R \oplus V)$ and the bottom right inclusion is the embedding at radius $r = 0$.

As we elaborated in \S \ref{s:thfmflds}, the eleven-dimensional theory is defined on the THF manifold $\mathring M$---in the BV formalism this is encoded, in part, by the sheaf of $L_\infty$ algebras $\cL_{sugra}$ on $\mathring M$. Compactification of this theory along the $S^4$ link corresponds to pushing forward this sheaf along $p|\mathring M$. The resulting sheaf of $L_\infty$ algebras $(p|\mathring M)_*\cL_{sugra}$ describes, in the BV formalism, the compactified theory on the seven-manifold $\R_{>0} \times Z$.

We will compute the pushforward using the prescription for pushing forward a $\cA^{\bu}_{F}$-module along a compatible submersion outlined in \cite{KormanThesis}. Recall that we have a THF structure on $M$, given by an involutive subbundle $F_{M}\subset T^{\C}_{M}$, induced from a partially flat conneciton on $\R\oplus W$. The situation is summarized in the diagram below.
\[
  \begin{tikzcd}
    0 \ar[r] & F_{M}\cap T_{M/Z}\ar[r]\ar[d] & F_{M}\ar[r]\ar[d]  & \pi^{*}T^{0,1}_{Z}\ar[r]\ar[d] & 0 \\
    0 \ar[r] & T_{M/Z}\ar[r] & T_{M}\ar[r] & T_{Z}\ar[r] & 0.
  \end{tikzcd}
\]
We will assume that the top left corner $F_{M}\cap T_{M/Z}$ is again a vector bundle. Moreover, we adopt the convention that all vector bundles are complex unless otherwise mentioned.

The existence of a partially flat connection amounts to a choice of splitting \[\sigma \colon \pi^{*}T^{0,1}_{Z}\to F_{M}\] of the top row. Both $F_{M}$ and $\pi^{*}T^{1,0}_{Z}$ are involutive, and the flatness of the connection implies that $\sigma$ preserves the lie brackets on sections.

We have a similar diagram expressing that the sphere bundle $\mathring M \to \R_{>0}\times Z$ is a map of THF manifolds when both the domain and codomain are equipped with the induced THF structures.
\[
  \begin{tikzcd}
    0 \ar[r] & F_{\mathring M}\cap T_{\mathring M/\R_{>0}\times Z}\ar[r]\ar[d] & F_{\mathring M}\ar[r]\ar[d]  & h^{*}T_{\R_{>0}}\oplus\pi^{*}T^{0,1}_{Z}\ar[r]\ar[d] & 0 \\
    0 \ar[r] & T_{\mathring M/\R_{>0}\times Z}\ar[r] & T_{\mathring M}\ar[r] & h^{*}T_{\R_{>o}}\oplus T_{Z}\ar[r] & 0.
  \end{tikzcd}
\]

It follows from the assumption that $F_{M}\cap T_{M/Z}$ is a vector bundle that the top left hand corner of the above diagram is as well. Moreover, the splitting $\sigma$ can be extended to a splitting $\mathring \sigma : h^{*}T_{\R_{>0}}\oplus \pi^{*}T^{0,1}_{Z}\to F_{\mathring M}$; the extension is an adapted connection in the language of \cite{KamberTondeur}.

For each of the involutive subbundles $F$ appearing in the top row of the above diagram, we wish to describe certain  To make the burden less cumbersome, we will drop mention of the foliation and instead refer to the whether the leaves lie along the base or fiber of the submersion $p$.

Explicitly, we set
\[
  \cA^{\bu}_{\mathring M/\R_{>0}\times Z} = \cA^{\bu}_{F_{\mathring M}\cap T^{\C}_{\mathring M/ \R_{>0}\times Z}}
\]
for the complex resoling functions on $\mathring M$ constant in the directions of $F_{\mathring M}$ that are tangent to the fibers of $(p|\mathring M)$.

Likewise, we let
\[
  F_{\R_{>0}\times Z} = T_{\R_{>0}}\oplus T^{0,1}_{Z}
\] 
be the subbundle of the tangent bundle of the base locally spanned by $\del_{r}, \del_{z_{i}}$ and $(p|\mathring M)^{*}F_{\R_{>0}\times Z} = h^{*}T_{\R_{>0}}\oplus \pi^{*}T^{0,1}_{Z}$ the pullback, which is a subbundle of $T_{\mathring M}$. We will let $\cA^{\bu}_{\R_{>0}\times Z} = \cA^{\bu}_{F _{\R_{>0}\times Z}}$ denote the THF complex resolving functions constant along the leaves of this foliation on the base.

Our goal is to describe the derived pushforward $(p|\mathring M)_{*}\cL_{sugra}$ as a $\cA^{\bu}_{\R_{>0}\times Z}$-module.

We first note that the splitting $\mathring \sigma$ gives us an isomorphism of $C^{\infty}_{\mathring M}$-modules
\begin{align*}
  \cA^{\bu}_{\mathring M}(L) & \cong \cA^{\bu}_{(p|\mathring M)^{*}F_{\R_{>0}\times Z}}\otimes \cA^{\bu}_{{\mathring M/ \R_{>0}\times Z}}(L) \\
                         & \cong (p|\mathring M)^{*}\cA ^{\bu}_{\R_{>0}\times Z}\otimes \cA^{\bu}_{{\mathring M/ \R_{>0}\times Z}}(L) \\
\end{align*}
where we have used the fact that pullback commutes with taking the exterior power. Here, the tensor product is take in $C^{\infty}_{\mathring M}$-modules.


By the projection formula, there is an isomorphism of $C^{\infty}_{\R_{>0}\times Z}$-modules
\[
\cA ^{\bu}_{\R_{>0}\times Z}\otimes_{C^{\infty}_{\R_{>0}\times Z}} (p|\mathring M)_{*}\cA^{\bu}_{{\mathring M/ \R_{>0}\times Z}}(L) \to (p|\mathring M)_{*}\left ( (p|\mathring M)^{*}\cA ^{\bu}_{\R_{>0}\times Z}\otimes_{C^{\infty}_{\mathring M}} \cA^{\bu}_{{\mathring M/ \R_{>0}\times Z}}(L) \right )\]


The left hand side of the above isomorphism has a canonical $\cA ^{\#}_{\R_{>0}\times Z}$-module structure. 
It remains to discuss compatibility with the differential along the base. The compatibility condition is expressed through the observation that the pro-vector bundle \[(p|\mathring M)_{*} \cA^{\bullet}_{\mathring M/\R_{>0}\times Z}(L)\] will have the structure of a partial flat connection defined along the leaves of $F_{\R_{>0}\times Z}$. We can construct such a connection from analyzing the behavior of the THF differential $\thfd$ under the splitting.

 The THF differential induces a derivation of total degree 1 on the pushforward $(p|\mathring M)_{*}\cL_{sugra}$ so it is determined by its restrictions

\[\thfd_{i} \colon  (p|\mathring M)_{*} \cA^{\bullet}_{\mathring M/\R_{>0}\times Z}(L) \to \cA^{i}_{\R_{\geq >0} \times Z}\otimes _{C^{\infty}_{\R_{>0}\times Z}}(p|\mathring M)_{*} \cA^{\bullet-i+1}_{\mathring M/\R_{>0}\times Z}(L).\]

We may understand the first few terms, along with the relations implied by $\thfd^{2} = 0$:
\begin{itemize}
  \item The first term \[\thfd_{0} : (p|\mathring M_{*})\cA^{\bullet}_{\mathring M/\R_{>0}\times Z}(L)\to (p|\mathring M_{*})\cA^{\bullet+1}_{\mathring M/\R_{>0}\times Z}(L)\] is the internal differential.

  \item The second term \[\thfd_{0} : (p|\mathring M_{*})\cA^{\bullet}_{\mathring M/\R_{>0}\times Z}(L)\to \cA^{1}_{\R_{\geq 0}\times Z}\otimes _{C^{\infty}_{\R_{>0}\times Z}}(p|\mathring M_{*})\cA^{\bullet}_{\mathring M/\R_{>0}\times Z}(L)\] is a partial connection. However, it is not flat. Rather the failure for flatness is captured by \[ \thfd_{1}^{2} + [\thfd_{0}, \thfd_{2}] = 0 \] where \[ \thfd_{2} : (p|\mathring M_{*})\cA^{\bullet}_{\mathring M/\R_{>0}\times Z}(L)\to \cA^{2}_{\R_{\geq 0}\times Z}\otimes _{C^{\infty}_{\R_{>0}\times Z}}(p|\mathring M_{*})\cA^{\bullet-1}_{\mathring M/\R_{>0}\times Z}(L) \] is a nullhomotopy.
\end{itemize}
In particular, $\thfd_{1}$ induces a partial flat connection along $F_{\R_{>0}\times Z}$ the hypercohomology \[\mathbb{R}^\bu (p|\mathring M)_* \left (\cA^\bu_{\mathring M / \R_{>0}\times Z}(L)\right )\].

In \cite{KormanThesis} it is shown that though the above decomposition depends on the choice of splitting $\mathring \sigma$ initially chosen, the flat connection induced by $\thfd_{1}$ on the hypercohomology does not.

  In sum, we have proven the following:

\begin{prop}\label{p:dimred}
  The pushforward $(p| \mathring M)_{*}\cL_{sugra}$ is described by the following complex of sheaves
  \beqn
  \cA^\bu_{\R_{>0}\times Z}\otimes \mathbb{R}^\bu (p|\mathring M)_* \left (\cA^\bu_{\mathring M / \R_{>0}\times Z}(L) \right)
  \eeqn
with the $\cA^{\bullet}_{\R_{>0}\times Z}$-module structure coming from the partial flat connection induced by $\thfd_{1}$.
\end{prop}

The BV pairing naturally descends to the above complex of sheaves. It is induced from an explicit fiberwise pairing, which we may write as follows. First note that the fiber of $\mathbb{R}^\bu (p|\mathring M)_* \left (\cA^\bu_{\mathring M / \R_{>0}\times Z}(L) \right)$ at $(t,p)\in \R_{>0}\times Z$ can be computed as the cohomology of the complex
\[
H^{\bullet}\left (\Gamma \left (S^{4}_{(t,p)}, \cA^{\bullet}_{\mathring M/\R_{>0}\times Z}(L) \right ) ; \thfd_{0}\right ).
\]

Moreover, the deformation retraction along the fibers $(R\times \C^{2})_{(t,p)}- 0 \to S^{4}_{(t,p)}$ induces a quasi-isomorphism
\[
\left (\Gamma \left (S^{4}_{(t,p)}, \cA^{\bullet}_{\mathring M/\R_{>0}\times Z}(L) \right ) ; \thfd_{0}\right) \cong \left ( \Gamma \left ( \cA^{\bullet}_{(\R\times \C^{2})_{(t,p)} - 0}(L|_{{(\R\times \C^{2})_{(t,p)} - 0}}) \right ); \thfd \right )
\]
where the fibers $\R\times \C^{2}_{(t,p)}$ are equipped with the obvious transversely holomorphic foliation.

The cohomology of the last complex can be readily computed.

\begin{lem}
The cohomology of
\[
\left ( \Gamma \left ( \cA^{\bullet}_{(\R\times \C^{2})_{(t,p)} - 0}(L|_{{(\R\times \C^{2})_{(t,p)} - 0}}) \right ); \thfd \right ) 
\]
is the complex given by
\beqn\label{eqn:bigcplx}
\begin{tikzcd}
  \ul{even} & \ul{odd} \\ \hline
H^\bu \left ( \Gamma ( \cA^{\bu}_{(\R\times\C^2)-0}), \thfd \right )\{\partial_{w_a}, \partial_{z_i}\}  \ar[r, dotted, "D_{\Omega}"] & H^\bu \left ( \Gamma ( \cA^{\bu}_{(\R\times\C^2)-0}), \thfd \right )  \\
H^\bu \left ( \Gamma ( \cA^{\bu}_{(\R\times\C^2)-0}), \thfd \right )  \ar[r, dotted, "D"] & H^\bu \left ( \Gamma ( \cA^{\bu}_{(\R\times\C^2)-0}), \thfd \right ) \{dw_a, dz_i\}
\end{tikzcd}
\eeqn
\end{lem}

Now, the THF complex
\beqn\label{eqn:thfcohomology1}
\left ( \Gamma ( \cA^{\bu}_{(\R\times\C^2)-0}) \, , \, \thfd \right )
\eeqn
has a natural even pairing given by the residue along the four-sphere~$S^4$.
This combines with the natural odd pairing between vector fields and one-forms to give an odd pairing on the complex~\eqref{eqn:bigcplx}.
At the level of cohomology, the pairing on \eqref{eqn:thfcohomology1} is even symplectic and it is equipped with a standard polarization. 
In degree zero, the cohomology can be identified with holomorphic functions $\cO(\C^2)$, and the full cohomology is of the form $\T^*[-4] \cO(\C^2)$. 
So, as a $\Z/2$ graded object the cohomology of~\eqref{eqn:thfcohomology1} can be identified with $\T^* \cO(\C^2)$. 
This endows the cohomology of the complex \eqref{eqn:bigcplx}, which has an odd shifted symplectic structure, with a polarization.
Explicitly, we can describe this polarization at the first page in a spectral sequence which simply takes the $\thfd$-cohomology of \eqref{eqn:bigcplx} as follows. 
Let $V$ be the super vector space $\C\{\del_{w_a}, \del_{z_i}\} \oplus \Pi \C$. 
Then, this page in the spectral sequence is isomorphic to
\beqn
\Pi \T^* \left(\cO(\C^2) \otimes V\right) ,
\eeqn
where $\Pi \T^*$ denotes the odd shifted cotangent bundle.
The differential on this page of the spectral sequence, given by turning on $D, D_\Omega$ preserves this polarization.

Thus, we have a description of the compactification of $\cL_{sugra}$ on the link of fivebranes as a free holomorphic-topological BV theory on $\R_{>0}\times Z$. In particular, the result is local on $\R_{>0}\times Z$.

For computing the space of states, it will be useful to have a description of its global sections. The global sections of the sheaf in proposition \ref{p:dimred} appears naturally on the $E_{1}$-page of a Leray-Serre-type spectral sequence \cite{KamberTondeur}
\beqn
E_1^{p,q} = \Gamma \left (\R_{>0}\times Z; \cA^p_{\R_{>0}\times Z}\otimes \mathbb{R}^q(p|\mathring M) _* \left (\cA^\bu_{\mathring M / \R_{>0}\times Z}(L)\right) \right)\implies \mathbb{H}^{p+q} (\cA^\bu_{\mathring M}(L)).
\eeqn\label{thfss}
We emphasize that no page of this spectral sequence involves cohomology with respect to the operators $D, D_{\Omega}$; these differentials are kept internal.

The $E_{2}$ page is similarly given by flat sections with respect to $\thfd_{1}$
\beqn
E_2^{p,q} = \Gamma_{\thfd_1} \left (\R_{>0}\times Z; \cA^p_{\R_{>0}\times Z}\otimes \mathbb{R}^q (p|\mathring M) _* \left (\cA^\bu_{\mathring M / \R_{>0}\times Z}(L)\right) \right)
\eeqn

In \cite{RSW}, this spectral sequence was implicitly used to give a concrete description of the pushforward $(p|\mathring M)_{*}\cL_{sugra}$ in the case where
\beqn
M = \text{Tot} (\R\oplus \C^{2}\to \C^{3}), \ \ \ \ \ \ \mathring M = \R\times \C^2\times \C^3 - (0 \times \C^3).
\eeqn
In this case, the above spectral sequence clearly degenerates at the $E_{2}$-page.

To recall the result, fix holomorphic coordinates $z_{i}, i=1, 2, 3$ on $\C^{3}$ and holomorphic fiber coordinates $w_{a}, a= 1, 2$ on $\C^{2}$.

\begin{prop}
  Up to completion, the cohomology
  \beqn
  \mathbb H^{\bu} (\cA_{\mathring M}^\bu (L) ) = H^{\bu}\left (\Gamma(\cA_{\mathring M}^{\bu}(L)), \thfd\right)
  \eeqn
  is a direct sum of the cohomology on flat space $H^{\bu}(\Gamma(\cA_{M}^{\bu}(L), \thfd)$ with the complex

  \beqn
  \begin{tikzcd}[row sep = 1 ex]
    \ul{even} & \ul{odd} \\ \hline
w_1^{-1} w_2^{-1} \CC[w_1^{-1}, w_2^{-1}][z_1,z_2,z_3] \{\partial_{w_i}\}  \ar[r, dotted, "D_{\Omega}"] & w_1^{-1} w_2^{-1} \CC[w_1^{-1}, w_2^{-1}] [z_1,z_2,z_3] \\
w_1^{-1} w_2^{-1} \CC[w_1^{-1}, w_2^{-1}] [z_1,z_2,z_3] \{\del_{z_i}\} \ar[ur, dotted, "D_{\Omega}"'] \\
w_1^{-1} w_2^{-1} \CC[w_1^{-1}, w_2^{-1}] [z_1,z_2,z_3] \ar[r, dotted, "D"] \ar[dr, dotted, "D"'] & w_1^{-1} w_2^{-1} \CC[w_1^{-1}, w_2^{-1}][z_1,z_2,z_3] \{\d z_i\} \\ & w_1^{-1} w_2^{-1} \CC[w_1^{-1}, w_2^{-1}][z_1,z_2,z_3] \{\d w_i\} .
\end{tikzcd}
\eeqn
\end{prop}

\parsec[s:flux]
The second step in the construction of the backreacted geometry involves deforming the pushforward $(p| \mathring{M})_{*}\cL_{sugra}$ by a certain Maurer-Cartan element. The Maurer-Cartan element is determined by the lowest order term in the coupling between the eleven dimensional theory and the stack of fivebranes. It was argued in \cite{RSW} that the relevant coupling is given by the nonlocal interaction
\beqn\label{eqn:br1}
I_{fivebrane} = N\int_{Z} D_\Omega^{-1}\mu \vee \Omega +\cdots
\eeqn
where $\mu \in \Omega^0 (\R) \hotimes \PV^{1,3}(X)\subset \cA^{\bullet}_{F}(Q)$ is a component of a field in the eleven-dimensional theory which satisfies $\thfd_{\Omega} \mu = 0$.

The relevant Maurer-Cartan element is a solution to the equations of motion for the eleven-dimensional theory deformed by this interaction. After deforming with this interaction, varying with respect to $D_{\Omega}^{-1}\mu$ yields an equation for $\gamma$ that to linear order reads:
\beqn
\thfd D \gamma  = N \delta_Z.
\eeqn

The existence of a solution to this equation is a strong constraint on the THF manifold $M$. The relevant consequence for us is that it implies the degeneration of the spectral sequence \ref{thfss} at the $E_{2}$-page. From this point onwards, we assume that there exists a solution $\gamma_{BR}$ to the above equation.

Accordingly, the dimensional reduction takes the form

\[
\cA^{\bu}_{\R_{>0}\times Z} \otimes \left ( \Pi T^{*} (\cO (\C^{2}) \otimes V) \right)
\]

Deforming by the Maurer-Cartan element turns on a differential given by bracketing with $\thfd D\gamma_{BR}$. Thus the theory in the backreacted geometry is given by

\[
\cA^{\bu}_{\R_{>0}\times Z} \otimes \left ( \cO (\C^{2})\otimes V \xrightarrow{D\gamma_{BR}} \Pi (\cO (\C^{2})\otimes V)^{*} \right )
\]

\parsec[s:hilbertspace]
The description of the $S^{4}$ compactification of eleven dimensional supergravity obtained in the previous subsection facilitates a straightforward construction of the twisted supergravity states. 
The compactification is described by the pro-local Lie algebra $(p|\mathring M)_{*} \cL_{sugra}$ on the seven-manifold $\R_{>0}\times Z$ in proposition \ref{p:dimred}. We will construct the twisted supergravity states by applying geometric quantization ansatz to the phase space at $\infty \in \R_{>0}$.

Our ansatz for geometric quantization will avoid discussion of subtler aspects such as the metaplectic correction. We define the geometric quantization to be given by functions that are constant along the leaves of a lagrangian foliation of the phase space.

The phase space at $\infty\in \R_{>0}$ is computed by restricting to a neighborhood $(a, \infty)\subset \R_{>0}$:
\beqn
(p| \mathring M)_* \cL_{sugra}|_{\infty \times Z}= \Omega^{0,\bu}_Z \otimes \left ( \cO (\C^{2})\otimes V \xrightarrow{D\gamma_{BR}} \Pi (\cO (\C^{2})\otimes V)^{*} \right )
\eeqn

There is an obvious polarization whose base is given by
\beqn
\Omega^{0,\bullet}_Z \otimes \left (\cO (\C^{2})\otimes V \right )
\eeqn

\begin{dfn}
\label{dfn:states}
  The space of twisted supergravity states is
  \[ \cH_{sugra} = \Sym \left(\overline \Omega^{0,\bullet}_{Z, c} \otimes (\cO (\C^{2}) \otimes V)^{*} \right)\]
where the source denotes compactly supported distributional sections.
\end{dfn}

\subsection{Global symmetry for twisted $AdS$}
\label{s:global1}

After complexification, the~six-dimensional superconformal algebra is isomorphic to $\lie{osp}(8|4)$.
The even part of this algebra is $\lie{so}(8) \times \lie{sp}(4)$.
This algebra contains the six-dimensional $\cN=(2,0)$ supersymmetry algebra whose odd part is four copies of $S^{6d}_+$, the positive irreducible complex spin representation of $\lie{so}(6)$.

The holomorphic supercharge is a supertranslation
\[
Q \in \Pi S^{6d}_+ \otimes \C^4 \subset \lie{osp}(8|4)
\]
which is characterized (up to equivalence) by the properties that $Q^2 = 0$ and that its image
\[
{\rm Im}\left(Q|_{\Pi S_+ \otimes \C^4} \right) \subset \R^6 \otimes_\R \C \cong \C^6
\]
is three-dimensional (spanned by the anti-holomorphic translations). 
The supercharge $Q$ acts on $\lie{osp}(8|4)$ by commutator and the resulting cohomology will automatically act on the holomorphic twist of any six-dimensional superconformal field theory. 
This cohomology can readily be identified with the subalgebra $\lie{osp}(6|2)$, see \cite{SWsuco2}. 


In \cite{RSW} we have shown that solutions to equations of motion of our eleven-dimensional theory in the twisted $AdS_{7}\times S^{4}$ background recalled in the previous subsection contains the symmetry algebra $\lie{osp}(6|2)$. This is precisely the twist of the superconformal algebra we just discussed, so this statement may be interpreted as the twisted avatar of the fact that $\lie{osp}(8|4)$ acts as isometries on $AdS_{7}\times S^{4}.$ An easy consequence of the result of \cite{RSW} and the definition of $\cH_{sugra}$ in~\ref{dfn:states} is that the twisted symmetry algebra $\lie{osp}(6|2)$ acts on $\cH_{sugra}$.

We will enumerate states via their weights a Cartan in the bosonic subalgebra of the twisted superconformal algebra.
We recall below how the bosonic piece of the algebra
\beqn
\label{eqn:gut}
\lie{sl}(4) \times \lie{sl}(2) \subset \lie{osp}(6|2) 
\eeqn
embeds as symmetries, or ghosts, of the eleven-dimensional theory in this twisted background.

However, we first note that the state space $\cH_{sugra}$ in fact enjoys an action of a much larger symmetry algebra.

\parsec
Surprisingly, at the twisted level this symmetry by the super Lie algebra $\lie{osp}(6|2)$ is enhanced to a symmetry by an infinite-dimensional super Lie algebra called $E(3|6)$. 
This is an exceptional simple super Lie algebra and was classified by Kac in \cite{KacClass}. 
The even part $E(3|6)_{even}$ of this super Lie algebra is a semi-direct product 
\beqn\label{eqn:evenE36}
\Vect(\Hat{D}^3) \oplus \lie{sl}(2) \otimes \cO(\Hat{D}^3) 
\eeqn
where $\Vect(\Hat{D}^3)$ stands for the Lie algebra of (holomorphic) formal vector fields on the three-disk and $\cO(\Hat{D}^3)$ is the algebra of formal power series in three variables.
The odd part $E(3|6)_{odd}$ is 
\beqn\label{eqn:oddE36}
\Omega^{1}(\Hat{D}^3, K^{-1/2}) \otimes \C^2  .
\eeqn
This is the tensor product of the fundamental $\lie{sl}(2)$ representation with the space of one-forms on the formal three-disk twisted by the line bundle $K^{-1/2}$.
For a description of the Lie bracket we refer to \cite{KacClass}.

The embedding $\lie{osp}(6|2)$ into $E(3|6)$ is described as follows.
The even summand $\lie{sl}(4)$ simply embeds as the constant, linear, and quadratic vector fields which act on a formal neighborhood of $0 \in \C\mathbb{P}^3$.
The even summand $\lie{sl}(2)$ embeds as the constant power series in $\lie{sl}(2) \otimes \cO(\Hat{D}^3)$. 
The odd summand $\C^3 \otimes \C^2$ embeds as the constant one-forms and the other odd summand embeds as the linear one-forms 
\beqn
\frac12 (z_i \d z_j - z_j \d z_i) \otimes w_a
\eeqn
where, here $i \ne j$ and the fundamental $\lie{sl}(2)$ representation is spanned by $w_1,w_2$.
We are leaving out the twist by the $-1/2$ root of the canonical bundle for simplicity.

The embedding of $E(3|6)$ into our twisted description of supergravity immediately yields the following.
\begin{prop}
The state space $\cH_{sugra}$ is a representation for $E(3|6)$.
The representation restricts to the original $\lie{osp}(6|2)$ representation under the embedding above.
\end{prop}


\subsection{Mass Spectroscopy on twisted $AdS_7\times S^{4}$}
\label{s:ads7}

We move on to enumerating the state space $\cH_{sugra}$ on the twisted version of $AdS_7 \times S^4$ in terms of the index. 
We choose coordinates on flat space as
\[
\R \times \C^5 = \R_t \times \C^2_w \times \C_z^3 
\]
with $z = (z_i), i=1,2,3$ and $w = (w_a), a=1,2$.
The stack of fivebranes wrap 
\beqn
w_1=w_2=t=0 .
\eeqn
Important for us is to recall that part of the ghost system for our eleven-dimensional theory consists of divergence-free vector fields on $\C^5$ which are locally constant along $\R$. 

The subalgebra \eqref{eqn:gut} of the twisted superconformal algebra $\lie{osp}(6|2)$ embeds as ghosts in our eleven-dimensional model as follows.
\begin{itemize}
\item
The subalgebra $\lie{sl}(3) \subset \lie{sl}(4)$ embeds as vector fields rotating the plane $\C^3_z$
\beqn
\sum_{ij} A_{ij} z_i \frac{\del}{\del z_j} \in \PV^{1,0}(\C^5)\otimes \Omega^0(\R) , \quad (A_{ij}) \in \lie{sl}(3) .
\eeqn
By definition, these vector fields are automatically divergence-free.
Notice that this vector field is divergence-free and restricts to the Euler vector field along $t=w_{a} = 0$.
\item 
The subalgebra $\lie{sl}(2)$ ($R$-symmetry) is mapped to the triple
\beqn
 w_1 \frac{\del}{\del w_2}, \quad w_2 \frac{\del}{\del w_1}, \quad \frac{1}{2}\left (w_1\frac{\del}{\del w_1}-w_2\frac{\del}{\del w_2}\right) \in \PV^{1,0}(\C^5) \otimes \Omega^0(\R) .
\eeqn
\item Scaling on the plane $\C^3$ embeds as the vector field
\beqn\label{eqn:Delta}
        \Delta = \sum_{i=1}^3 z_i\frac{\del}{\del z_i} - \frac 32\sum_{a=1}^2 w_a\frac{\del}{\del w_a}\in \PV^{1,0}(\C^5)\otimes \Omega^0 (\R).
\eeqn
\end{itemize}

This describes an embedding of the algebra 
\beqn\label{eqn:gut2}
\lie{sl}(3) \times \lie{sl}(2) \times \lie{gl}(1) \subset \lie{sl}(4) \times \lie{sl}(2) 
\eeqn
into the ghosts of our model.
The dimension of a Cartan subalgebra of $\lie{sl}(3) \times \lie{sl}(2) \times \lie{gl}(1)$ is four and accordingly, the equivariant character we study has four fugacities.
We choose these explicitly as follows:
\begin{itemize}
  \item $t_{1}, t_{2}$ denote generators for the Cartan of $\lie{sl}(3)$ which is spanned by the vector fields
  \beqn
  h_1 = z_1 \frac{\del}{\del {z_1}} - z_2 \frac{\del}{\del{z_2}} , \quad h_2 = z_2 \frac{\del}{\del{z_2}} - z_3 \frac{\del}{\del{z_3}}.
  \eeqn
  \item $r$ denotes a generator for the Cartan of a $\lie{sl}(2)$ which is generated by the element 
  \beqn
  \label{eqn:hCartan}
  h = \frac12 \left(w_1 \frac{\del}{\del w_1} - w_2 \frac{\del}{\del w_2}\right) .
  \eeqn
\item $q$ denotes a generator for the Cartan of the~$\lie{gl}(1)$ which is generated by the element $\Delta$ from equation~$\eqref{eqn:Delta}$. 
\end{itemize}

The twisted supergravity states $\cH_{sugra}$ form a representation for $\lie{osp}(6|2)$. 
The weights of twisted supergravity states with respect to the generators of the Cartan subalgebra above are completely determined by the weights of the holomorphic coordinates on $\C^2_w \times \C^3_z$.
These are summarized in table \ref{tbl:sugraM5}.

\begin{table}
\begin{center}
\begin{tabular}{c c c c c c}
  & $z_{1}$ & $z_{2}$ & $z_{3}$ & $w_{1}$ & $w_{2}$ \\
  \hline
  $t_{1}$ & $1$ & 0 & $-1$ & 0 & 0 \\
  $t_{2}$ & 0 & 1 & $-1$ & 0 & 0 \\
  $r$ & 0 & 0 & 0 & 1 & $-1$ \\
  $q$ & $-1$ & $-1$ & $-1$ & $\frac{3}{2}$ & $\frac{3}{2}$
\end{tabular}
\caption{Fugacities for the fields of the holomorphic twist of eleven-dimensional supergravity for the geometry $\R \times \C^5 \setminus \C^3$.}
\label{tbl:sugraM5}
\end{center}
\end{table}

We enumerate single particle supergravity states via computing the super trace of the operator $q^Y t_1^{h_1} t_2^{h_2} r^h$ acting on $\cH_{sugra}$:
\beqn
f_{sugra}(t_1,t_2,r,q) = \Tr_{\cH_{sugra}} (-1)^F t_1^{h_1} t_2^{h_2} r^h q^\Delta .
\eeqn
The super trace means that there is an extra factor of $(-1)^F$, where $F$ is parity (fermion number), when computing the ordinary trace.

The space of supergravity states given in definition \ref{dfn:states} is given in terms of compactly supported Dolbeault forms on $\C^3$. 
On $Z = \C^3$, the compactly supported distributions defined by 
\beqn
\alpha \mapsto \del_{z_1}^{n_1} \del_{z_2}^{n_2} \del_{z_3}^{n_3} \alpha |_{z=0}
\eeqn
combine to span a dense subspace of $\Bar{\Omega}^{0,\bu}_c(\C^3)$.
Using this topological basis for distributions and the explicit description of $\cH_{sugra}$ in definition \ref{dfn:states} the following proposition is an immediate computation.

\begin{prop}
\label{prop:sugraindex1}
The single particle index of the space of twisted supergravity states~$\cH_{sugra}$ is given by the following expression
\beqn
\label{eqn:sugra_index}
f_{sugra} (t_1,t_2, r, q) = \frac{q^4(t_1^{-1}+t_1t_2^{-1}+t_2)-q^2(t_1+t_1^{-1}t_2+t_2^{-1})+(q^{3/2}-q^{9/2})(r+r^{-1})}{(1-t_{1}^{-1}q)(1-t_{2}q)(1-t_{1}t_{2}^{-1}q)(1-rq^{3/2})(1-r^{-1}q^{3/2})}.
\eeqn
The full (multiparticle) index is defined to be the plethystic exponential 
\beqn
{\rm PExp}\left[f_{sugra}(t_1,t_2,r,q)\right] .
\eeqn
\end{prop}

To simplify the form of this index we can introduce a different parametrization of the Cartan of $\lie{sl}(3) \times \lie{sl}(2) \times \lie{gl}(1)$.
First, we can parameterize the Cartan of $\lie{sl}(3)$ by the vector fields
  \beqn\label{eqn:ys}
  -(\log y_1) z_1 \frac{\del}{\del {z_1}} - (\log y_2) z_2 \frac{\del}{\del{z_2}} - (\log y_3) z_3 \frac{\del}{\del{z_3}} .
  \eeqn
where $y_1,y_2,y_3$ are parameters which satisfy the single constraint
\beqn
y_1 y_2 y_3 = 1 .
\eeqn
In terms of the variables $t_1,t_2$ used above we have
\beqn
y_1 = t_1^{-1},\quad y_2 = t_1 t_2^{-1}, \quad y_3 = t_2 .
\eeqn
Second, we can parametrize the Cartan of the remaining subalgebra $\lie{sl}(2) \times \lie{gl}(1)$ by the two vector fields
\beqn
\til{h} = h + \frac12 \Delta \quad \text{and} \quad \Delta
\eeqn
where $\Delta$ is as in equation \eqref{eqn:Delta} and $h$ is as in \eqref{eqn:hCartan}.
We denote by $y$ the generator of the Cartan corresponding to the vector field $\til{h}$ and by $q$ (as above) the generator corresponding to~$\Delta$.
In terms of the variable~$r$ used above we have 
\beqn
y = q^{1/2} r .
\eeqn

Using the paramterization fo the Cartan given by the variables $y_i,y,\Delta$ we obtain the equivalent expression for the index \eqref{eqn:sugra_index} as 
\beqn
\label{eqn:Kim_sugra}
f_{sugra} (y_i, y, q) = \frac{q^4(y_1+y_2+y_3)-q^2(y_1^{-1} + y_2^{-1} + y_3^{-1})+(1-q^3)(yq + y^{-1} q^2)}{(1-y_1 q)(1-y_2 q)(1-y_3 q)(1-yq)(1-y^{-1} q^2)},
\eeqn
We note that this matches exactly with the index computed in \cite[Eq. (3.23)]{Kim:2013nva} with the change of variables.

Our formula \eqref{eqn:sugra_index} also matches with \cite[Eq. (3.24)]{Bhattacharya:2008zy} where we use the change of variables
\beqn
q = x^4, \quad t_1 = y_2, \quad t_2 = y_1, \quad r^2 = z .
\eeqn
(Notice the variables $y_1,y_2$ used in \cite{Bhattacharya:2008zy} differ from the variables we introduced in \eqref{eqn:ys}.)

\parsec 
We consider the specialization of this index 
\beqn
q=r^2, t_2=1 
\eeqn
which is known as the Schur limit.
Applying this limit to \eqref{eqn:special1} yields the plethystic exponential of the following single particle index
\[
f_{sugra}(q, t_1, t_2=1, r = q^{1/2}) = \frac{q}{(1-q)^2}
\]
This plethystic exponential yields the MacMahon function, which is the character of the vacuum module of the $W_{1+\infty}$-algebra.

%


%
%

\section{Factorization algebras in twisted $M$ theory}
\label{s:fact}

In this section we use the formalism of factorization algebras to give a conjectural description of the space of observables of the universal fivebrane theory after performing the holomorphic twist.
For the case at hand, our expectation for the universal theory can be understood as the worldvolume theory on a large number of fivebranes. 
We will see how this relates to the description of the states in twisted supergravity as defined in the last section.
Later, we see how our interpretation is supported by a relationship of the supergravity index to the fivebrane superconformal index.

Our ansatz for the worldvolume theories in the large $N$ limit relies on a proposal for twisted holography proposed by Costello and Li in \cite{CLsugra} and further developed by Costello, Gaiotto, Paquette in \cite{CostelloM2,CostelloM5,costello2021twisted,CP}.
The core idea is that the algebra of operators on both sides of the duality are Koszul dual. 
Many other examples and support for this twisted holographic principle have been carried out in \cite{Oh:2021wes,Oh:2020hph,Gaiotto:2021xce}.

It is absolutely crucial for this proposal that we work in a derived setting using the Batalin--Vilkovisky (BV) formalism. 
The observables of any theory in the BV formalism have the structure of a factorization algebra on spacetime \cite{CG2}. 
Classically, if a theory in the BV formalism is described by a local $L_\infty$ algebra $\cL$ on spacetime $M$ then to an open set $U \subset M$ the factorization algebra of observables assigns the cochain complex
\[
\clie^\bu(\cL(U))
\]
which computes the Lie algebra cohomology of the $L_\infty$ algebra $\cL(U)$. 
The differential on this cochain complex is precisely the Chevalley--Eilenberg differential associated to the~$L_\infty$ structure maps. 

At the quantum level the factorization algebra of observables is deformed.
Perturbatively, Costello and Gwilliam use the BV formalism to give a systematic way to study quantization within the context of factorization algebras~\cite{CG2}. 
Even in perturbation theory, the full quantum behavior of our proposal for the minimal twist of eleven-dimensional theory is an open question. 
We leave the problem to characterize the quantum algebra structure to future work, and in this paper we do not address anything past tree-level in the bulk eleven-dimensional theory.

We start with the factorization algebra of classical observables of our prototype for the minimal twist of eleven-dimensional supergravity which we recalled in \S \ref{s:twisted}. 
This model is defined on any eleven-manifold $S \times X$ where $X$ is a Calabi--Yau fivefold and $S$ a real oriented one-manifold. 
The primary object we used to describe the solutions of the equations of motion was the $\Z/2$ graded local $L_\infty$ algebra $\cL_{sugra}$, see \S \ref{s:Lsugra}. 
In terms of this local $L_\infty$ algebra, the $\Z/2$ graded commutative dg algebra of classical observables supported on an open set $U \subset S \times X$ is 
\beqn\label{eqn:sugraobs}
\clie^\bu\left(\cL_{sugra}(U)\right) .
\eeqn
We will denote the classical factorization algebra on $S \times X$ of classical observables by~$\Obs_{sugra}$.

%
%
%

\subsection{The fivebrane decomposition of twisted supergravity}\label{s:resm5}

Fix a complex three-fold $Z$, where the fivebranes will be supported, and assume that it is equipped with a sqaure-root of its canonical bundle $K^{1/2}_Z$.
We consider the eleven-dimensional theory that we introduced in \S \ref{s:twisted} on $\R \times X_Z$ where $X_Z$ is the Calabi--Yau fivefold
\beqn
X_Z = \text{Tot}(K_Z^{1/2} \otimes \C^2).
\eeqn
This is the total space of the rank two holomorphic vector bundle $K_Z^{1/2} \otimes \C^2$ over~$Z$. 
One can also think about $\R \times X_Z$ as the total space of the {\em real} rank five bundle over~$Z$; although this description obfuscates the geometric structure used to define the eleven-dimensional theory.
Let 
\beqn
\pi \colon \R \times X_Z \to Z 
\eeqn
be the map which projects out the real factor followed by the projection to the base of the total space of the complex rank two bundle $p \colon X_Z \to Z$.

The classical observables of the eleven-dimensional theory form a factorization algebra on $\R \times X_Z$ that we have denoted by $\Obs_{sugra}$.
Like sheaves, factorization algebras can be pushed forward.
Thus, using $\pi$ we obtain a factorization algebra $\pi_* \Obs_{sugra}$ defined on the three-fold~$Z$.
Explicitly, to an open set $U \subset Z$ the factorization assigns the cochain complex
\begin{align*}
(\pi_* \Obs_{sugra}) (U) & = \Obs_{sugra} \left(\pi^{-1} U\right) \\
& = \Obs_{sugra}(\R \times p^{-1}U) .
\end{align*}

This factorization algebra is not the factorization algebra associated to an ordinary sort of field theory on $Z$.
In physics terminology this feature is due to the presence of an infinite tower of so-called Kaluza--Klein modes which means that the space of fields is not given as the sections of a finite rank vector bundle on~$Z$.
Mathematically, this is due to the fact that the map $\pi$ is not proper.
Nevertheless there is a subfactorization algebra $\Bar{\pi}_* \Obs_{sugra}$ which admits a natural grading so that each filtered component can be understood as such. 
This grading is determined by looking at an eigenspace decomposition of a certain compact abelian group acting on the fields of the eleven-dimensional theory on $\R \times X_Z$ which we now describe. 

Like $\Obs_{sugra}$ in equation \eqref{eqn:sugraobs}, the factorization algebra $\Bar{\pi}_* \Obs_{sugra}$ is of the form $\clie^\bu(\cG)$.
Here, $\cG$ is a sheaf of $L_\infty$ algebras on $Z$ which is presented as the $C^\infty$-sections of an $\infty$-dimensional pro vector bundle.

\parsec[]

We are in the situation of a complex three-fold $Z$ embedded as a submanifold of the eleven-manifold~$\R \times \text{Tot}(K_{Z}^{1/2} \otimes \C^2)$. 
Consider the group $U(1) \times U(1)$ rotating the fibers of the rank two bundle $K_Z^{1/2} \otimes \C^2$. 
Using this we can define the subfactorization algebra 
\beqn\label{eqn:taylor1b}
\Bar{\pi}_* \Obs_{sugra} \subset \pi_* \Obs_{sugra}
\eeqn
which to an open set~$U \subset Z$ assigns the subcomplex of observables in~$\Obs_{sugra} (\pi^{-1} U)$ which are finite sums of integral eigenvectors for this $U(1) \times U(1)$ action.

By construction, the restricted factorization algebra is of the form
\[
\Bar{\pi}_*\Obs_{sugra} = \clie^\bu(\cG_Z) 
\]
where $\cG_Z$ is an infinite-rank local $L_\infty$ algebra on $Z$. 
The embedding of factorization algebras \eqref{eqn:taylor1b} is induced by a partial Taylor expansion map
\[
\pi_* \cL_{sugra} \to \cG_Z .
\]

\parsec[s:cstarfive]

We now consider a particular decomposition of the factorization algebra $\Bar{\pi}_* \Obs_{sugra}$. 
In a local chart $U \subset Z$ we can write a generic field of the eleven-dimensional theory on $\pi^{-1} U$ in coordinates as $f(t;w,z)$ where $t \in \R$ is the real coordinate, $w=(w_1,w_2)$ is the holomorphic fiber coordinate of the rank two bundle $X_Z$ over $Z$, and $z = (z_1,z_2,z_3)$ is a local holomorphic coordinate for $U \subset Z$.

With this notation in place, we introduce the following $\C^\times$ action on the fields of the eleven-dimensional theory:
\begin{itemize}
\item On the fields $\mu(t;w,z) \in \Omega^\bu(\R) \otimes \PV^{1,\bu}(\C^2 \times U)$ the action is
\[
\lambda \cdot \mu(t;w,z) = \mu(\lambda t;\lambda w , z).
\]
\item On the fields $\nu(t;w,z) \in \Omega^\bu(\R) \otimes \PV^{0,\bu}(\C^2 \times U)$ the action is
\[
\lambda \cdot \nu(t;w,z) = \nu(\lambda t;\lambda w , z).
\]
\item On the fields $\beta(t;w,z) \in \Omega^\bu(\R) \otimes \Omega^{0,\bu}(\C^2 \times U)$ the action is
\[
\lambda \cdot \beta(t;w,z) = \lambda^{-1} \beta(\lambda t;\lambda w , z).
\]
\item On the fields $\gamma(t;w,z) \in \Omega^\bu(\R) \otimes \Omega^{1,\bu}(\C^2 \times U)$ the action is
\[
\lambda \cdot \gamma(t;w,z) = \lambda^{-1} \gamma(\lambda t;\lambda w , z).
\]
\end{itemize}

The following proposition summarizes that this $\C^\times$-action is compatible with the $L_\infty$ structure on $\cL_{sugra}$.
Its proof is an immediate computation. 

\begin{prop}
The $L_\infty$ structure on $\cL_{sugra}$ equivariant for this $\C^\times$ action. 
The odd symplectic form on $\cL_{sugra}$ is weight $-1$ for this $\C^\times$ action. 
\end{prop}

We consider this decomposition at the level of the local $L_\infty$ algebra $\cG_Z$.
Recall that the Chevalley--Eilenberg cochains of this local $L_\infty$ algebra is the restricted factorization algebra~$\Bar{\pi}_* \Obs_{sugra} = \clie^\bu(\cG_Z)$. 
For each $n \in \Z$ and open set $U \subset Z$, let 
\[
\cG_Z(U)^{(n)}\subset \cG_Z(U)
\]
be the weight $n$ eigenspace with respect to this $\C^\times$ action.
As a corollary of the proposition, we get a product decomposition 
\beqn
\label{eqn:Gdecomp}
\cG_Z = \prod_{n \geq -1} \cG^{(n)} .
\eeqn
In particular, we see that $\cG^{(0)}_Z$ is itself a local Lie algebra (that we will soon describe). 
Moreover, every $\cG^{(n)}$, $n \geq -1$ is a (local) module for this local Lie algebra.

\parsec[s:weight-1]

The first non trivial case is the weight $(-1)$ piece $\cG_Z^{(-1)}$.
We will show that
$\cG_Z^{(-1)}$ is equivalent as an abelian local Lie algebra to one of the form $\Omega^{0,\bu}(Z, \cV^{(-1)})$ where $\cV^{(-1)}$ is the holomorphic vector bundle
\[
\cV^{(-1)} = K^{1/2}_Z \otimes \C^2 \oplus \cO_Z \oplus \Pi \T^*_Z .
\]
This Dolbeault complex is equipped with the natural $\dbar$ operator and the $\del$ operator which goes from $\Omega^{0,\bu}(Z)$ to $\Omega^{1,\bu}(Z)$. 

Explicitly weight $(-1)$ summand $\cV^{(-1)}$ consists of:
\begin{itemize}
\item 
Vector fields which are locally of the form $f^a(z) \del_{w_a}$ where $f^{a}(z)$ is a holomorphic function on $Z$.
Notice that these vector fields are automatically divergence-free.
Since $\del_{w_a}$ is treated as a section of $K^{1/2}_Z$ we see that these are local sections of $K^{1/2}_Z \otimes \C^2$. 
\item 
Next there are holomorphic functions (type $\beta$) and holomorphic $one$-forms (type $\gamma$) which do not depend on $w_a$. 
Since $\del$ is weight zero for the decomposition these forms combine to form the $\Z/2$ graded complex of sheaves
\[
\cO_Z \xto{\del} \Pi \Omega^{1,hol}_Z .
\]
\end{itemize}

\parsec[s:weight0]

The weight zero summand $\cG_Z^{(0)}$ is special because it carries the induced structure of a local $L_\infty$ algebra on $Z$ inherited from the $L_\infty$ algebra $\cL_{sugra}$.
As a local Lie algebra it is equivalent to one of the form $\Omega^{0,\bu}(Z, \cV^{(0)})$.
We will prove that it is equivalent to a local Lie algebra version of the exceptional super Lie algebra $E(3|6)$. 

By local version, we mean a version of $E(3|6)$ which exists as a sheaf of super Lie algebras of the form $\Omega^{0,\bu}(Z, \cV^{(0)})$ on any complex threefold $Z$ equipped with a square-root of its canonical bundle.
The even part of this sheaf is simply a semi-direct product 
\beqn
\Vect^{hol}(Z) \oplus \sl(2) \otimes \cO_Z .
\eeqn
Notice the similarity with the even part of $E(3|6)$ in equation \eqref{eqn:evenE36}. 
The odd part is 
\beqn
\Omega^{1,hol}(Z, K^{-1/2}_Z) \otimes \C^2 
\eeqn
Again, we observe the similarities with the odd part of $E(3|6)$ given in equation \eqref{eqn:oddE36}. 
The Lie bracket on this sheaf of super vector spaces is defined analogously to the bracket on $E(3|6)$. 
In particular, the bracket operations only involve holomorphic differential operators and hence extends to a local dg super Lie algebra structure on the Dolbeault resolution of the above holomorphic vector bundles. 
We denote this local dg super Lie algebra by $\cE(3|6)$.

When $Z = \C^3$ the $\infty$-jets at $0 \in \C^3$ of the local Lie algebra $\cE(3|6)$ is equivalent to the exceptional super Lie algebra $E(3|6)$.

\begin{prop}\label{prop:v0}
As a $\Z/2$ graded local Lie algebra on $Z$, the weight zero summand~$\cG^{(0)}_Z$ is equivalent to $\cE(3|6)$. 
\end{prop}


To see that $\cG^{(0)}$ and $\cE(3|6)$ are equivalent as super vector bundles we must show that
\[
\cV^{(0)} = \T_Z \oplus \lie{sl}(2) \otimes \cO_Z \oplus \Pi \left( \T^*_Z \otimes K^{-1/2}_Z \otimes \C^2 \right).
\]
The local $L_\infty$ structure of $\cL_{sugra}$ endows the Dolbeault complex of this holomorphic vector bundle with a local Lie algebra structure.
The differential turns out to be the natural $\dbar$ operator.
We describe the bracket below.

Explicitly, we enumerate all holomorphic sections of $\cV^{(0)}$:
\begin{itemize}
\item Vector fields which are locally of the form 
\[
\mu = f^i(z) \del_{z_i} - \frac12 \del_{z_i} f^i(z) w_a \del_{w_a}
\]
and vector fields which are locally of the form
\[
\mu = g(z) A^{ab} w_a \del_{w_b}.
\]
The condition that $\mu$ be divergence-free implies that $(A^{ab}) \in \lie{sl}(2)$.
We identify the first such vector fields with sections of $\T_Z$ and the second such vector fields with sections of $\lie{sl}(2) \otimes \cO_Z$. 
\item Holomorphic functions of the form $\beta = w_a f(z)$ and holomorphic one-forms of the form $\gamma = g^{a}(z) \d w_a + h^{a} (z) w_a \d z$. 
Since $\del$ is weight zero, such sections are equipped with a differential 
\[
\Gamma^{hol}(Z, K^{-1/2}\otimes \C^2) \xto{D} \Pi \Gamma^{hol}(Z, K^{-1/2} \otimes \C^2) \oplus \Pi\Omega^{1,hol}(Z, K^{-1/2}\otimes \C^2) 
\]
which sends $w_a f(z) \mapsto (f(z) , w_a \del f (z))$. 
This $\Z/2$ graded complex of sheaves is equivalent to $\Pi\Omega^{1,hol}(Z, K^{-1/2}\otimes \C^2)$. 
\end{itemize}

The Lie bracket on $\cG_Z^{(0)}$ is defined from a Lie algebra structure on holomorphic sections of $\cV^{(0)}$ inherited from the $L_\infty$ algebra $\cL_{sugra}$. 
In fact, there is only a two-ary bracket, and an immediate calculation shows that it agrees with the Lie bracket on $\cE(3|6)$.

\parsec[s:general_decomp]

We move on to give the following general description of the weight $j$ component $\cG^{(j)}$.
Since we have already described $j = -1,0$ we focus on $j \geq 1$.

\begin{prop}
\label{prop:Vj}
Let $j \geq 1$. 
The complex of vector bundles $\cG^{(j)}$ is quasi-isomorphic to
\beqn
\Omega^{0,\bu}(Z, \cV^{(j)}) 
\eeqn
where $\cV^{(j)}$ is the super holomorphic vector bundle 
\beqn
\label{eqn:Vj}
\begin{tikzcd}
\ul{even} & \ul{odd} \\
S^{j}(\C^2) \otimes \T_Z \otimes K^{-j/2}_Z & S^{j-1}(\C^2) \otimes K^{-(j+1)/2}_Z \\
S^{j+2}(\C^2) \otimes K^{-j/2}_Z & S^{j+1}(\C^2) \otimes \T^*_Z \otimes K^{-(j+1)/2}_Z .
\end{tikzcd}
\eeqn
\end{prop}

\begin{proof}
As above, let $\C^2$ stand for the fundamental $\lie{sl}(2)$ representation and so the irreducible highest weight $j$ representation is $S^j(\C^2)$.

The weight $j \geq 1$ complex of vector bundles $\cG^{(j)}$ is readily seen to be of the following form
\beqn
\begin{tikzcd}
\ul{even} & \ul{odd} \\
\Omega^{0,\bu}(Z , \T_Z \otimes K^{-j/2}_Z) \otimes S^j(\C^2) \ar[dr, "D_1"] \\ & \Omega^{0,\bu}(Z, K^{-j/2}_Z) \otimes S^j(\C^2) \\ 
\Omega^{0,\bu}(Z , K^{-j/2}_Z) \otimes S^{j+1}(\C^2) \otimes \C^2 \ar[ur, "D_2"'] & \\
& \Omega^{1,\bu}(Z, K^{-(j+1)/2}_Z) \otimes S^{j+1}(\C^2) \\ 
\Omega^{0,\bu}(Z, K^{-(j+1)/2}_Z) \otimes S^{j+1}(\C^2) \ar[ur, "D_3"] \ar[dr, "D_4"'] \\
& \Omega^{0,\bu}(Z, K^{-(j+1)/2}_Z) \otimes S^j(\C^2) \otimes \C^2 . 
\end{tikzcd}
\eeqn

Here, as usual, the $\dbar$ operator is left implicit.
Let us describe the differentials $D_1,D_2,D_3,D_4$.
Recall that the five-fold $X = \text{Tot}(K_Z^{1/2} \otimes \C^2)$ is equipped with a non-vanishing holomorphic volume form. 
The corresponding divergence operator $\div$ restricted to the weight $j$ subspace is $\div = D_1 + D_2$. 
The differential $D_2$ is given by the identity on $\Omega^{0,\bu}(Z, K^{-j/2}_Z)$ tensored with the natural $\lie{sl}(2)$-equivariant projection
\beqn
S^{j+1}(\C^2) \otimes \C^2 \cong S^{j+2}(\C^2) \oplus S^{j}(\C^2) \twoheadrightarrow S^{j}(\C^2) .
\eeqn

The holomorphic de Rham operator on $X$ is $\del = D_3+D_4$. 
The differential $D_4$ is the identity on $\Omega^{0,\bu}(Z, K^{-(j+1)/2}$ tensored with the natural $\lie{sl}(2)$-equivariant inclusion
\beqn
S^{j+1}(\C^2) \hookrightarrow S^{j-1}(\C^2) \oplus S^{j+1}(\C^2) \cong S^j(\C^2) \otimes \C^2 .
\eeqn

There is a spectral sequence whose first term is computed by the $D_2,D_4$ cohomology. 
This term in the spectral sequence is isomorphic the complex $\Omega^{0,\bu}(Z, \cV^{(j)})$ where $\cV^{(j)}$ is as in equation \eqref{eqn:Vj}.
There are no further terms in the spectral sequence, so the result follows. 
\end{proof}

One can write down an explicit quasi-isomorphism
\beqn
\Phi \colon \Omega^{0,\bu}(Z, \cV^{(j)}) \to \cG^{(j)} 
\eeqn 
as follows.
\begin{itemize}
\item First we describe what the map looks like on the even summands.
Locally a section of $\Omega^{0,\bu}(Z, S^j(\C^2) \otimes T_Z \otimes K^{-j/2}_Z)$ is of the form
\beqn
f(w) \otimes g_i(z) \del_{z_i} 
\eeqn
where $f(w) \in \C[w_1,w_2]_j$ is a homogenous degree $j$ polynomial in the variables $w_1,w_2$ and the $g_i(z)$'s are Dolbeault forms on $Z$.
Define the divergence-free $\mu$-type field $\Phi(f(w) \otimes g_i(z) \del_{z_i})$ by the expression
\beqn
f(w) g(z) \del_{z_i} - \frac{1}{j+2} \left(\del_{z_i} g_i(z)\right) f(w) E_w
\eeqn
where $E_w = w_a \del_{w_a}$ is the Euler vector field in the direction transverse to the brane. 
\item Locally a section of $\Omega^{0,\bu}(Z, S^{j+2} \otimes K^{-j/2}_Z)$ is of the form
\beqn
f(w) \otimes g(z)  
\eeqn
where $f(w) \in \C[w_1,w_2]_{j+2}$ is a homogenous degree $j$ polynomial in the variables $w_1,w_2$ and $g(z)$ is a Dolbeault form on $Z$.
Define the divergence-free $\mu$-type field $\Phi(f(w) \otimes g(z))$ by the expression
\beqn
g(z)(\del_{w_1} f (w) \del_{w_2} - \del_{w_2} f(w) \del_{w_1}) 
\eeqn
\item Next for the odd summands.
Locally a section of $\Omega^{0,\bu}(Z, S^{j-1} \otimes K^{-(j+1)/2}_Z)$ is of the form
\beqn
f(w) \otimes g(z)  
\eeqn
where $f(w) \in \C[w_1,w_2]_{j-1}$ is a homogenous degree $j-1$ polynomial in the variables $w_1,w_2$ and $g(z)$ is a Dolbeault form on $Z$.
Define the $\gamma$-field $\Phi(f(w) \otimes g(z))$ by the expression
\beqn
\frac12 g(z)f(w)(w_1 \d w_2 - w_2 \d w_1) .
\eeqn
\item
Locally a section of $\Omega^{0,\bu}(Z, S^{j+1} \otimes T^*_Z \otimes K^{-(j+1)/2}_Z)$ is of the form
\beqn
f(w) \otimes g^i(z)  \d z_i
\eeqn
where $f(w) \in \C[w_1,w_2]_{j+1}$ is a homogenous degree $j+1$ polynomial in the variables $w_1,w_2$ and the $g^i(z)$'s are Dolbeault forms on $Z$.
Define the $\gamma$-field $\Phi(f(w) \otimes g^i(z) \d z_i)$ by the expression
\beqn
f(w) g^i (z) \d z_i .
\eeqn
\end{itemize}

The complex $\cG^{(j)}$ is manifestly a local module for the local Lie algebra $\cG^{(0)}$. 
With the identification $\cG^{(0)} \simeq \cE(3|6)$ and $\cG^{(j)} \simeq \Omega^{0,\bu}(Z, \cV^{(j)})$ we can read off this module structure completely explicitly. 

Recall that as a complex of vector bundles $\cE(3|6) = \Omega^{0,\bu}(Z, \cV^{(0)})$ where 
\beqn
\cV^{(0)} = \T_Z \oplus \lie{sl}(2) \otimes \cO_Z \oplus \Pi \left( \T^*_Z \otimes K^{-1/2}_Z \otimes \C^2 \right) .
\eeqn
The local Lie algebra structure on $\cE(3|6)$ arises from a Lie algebra structure on the sheaf of holomorphic sections of $\cV^{(0)}$.
Likewise, the $\cE(3|6)$-module structure on $\cG^{(j)} \simeq \Omega^{0,\bu}(Z, \cV^{(j)})$ arises from a holomorphic $\cV^{(0)}$-module structure on the holomorphic sections of $\cV^{(j)}$. 
By holomorphic, we mean that the structure maps are all holomorphic differential operators. 

We describe this $\cV^{(0)}$-module structure on $\cV^{(j)}$ explicitly.
Holomorphic sections of $\T_Z$ will act by Lie derivative on $\cV^{(j)}$.
Holomorphic $\sl(2)$-valued functions will also act in the natural way since each component of $\cV^{(j)}$ is labeled by an irreducible $\lie{sl}(2)$ representation. 

Finally, we need to explain how holomorphic sections of the odd component $\T^*_Z \otimes K^{-1/2}_Z \otimes \C^2$ of $\cV^{(0)}$ act.
We first give a global description of this action, and then we will write down the explicit formula in local coordinates.
For the local coordinate description recall that a general section of $\T^*_Z \otimes K^{-1/2}_Z \otimes \C^2$ has the form $w_a g^i(z) \d z_i$ where $g^i(z)$ is a holomorphic function. 
\begin{itemize}
\item The odd part of $\cV^{(0)}$ acts on the component $S^{j}(\C^2) \otimes \T_Z \otimes K^{-j/2}_Z$ through the composition
\beqn
\begin{tikzcd}
\left(\T^*_Z \otimes K^{-1/2}_Z \otimes \C^2\right) \otimes \left(S^{j}(\C^2) \otimes \T_Z \otimes K^{-j/2}_Z\right) \ar[r,"\cong"] & \left(S^{j-1}(\C^2) \oplus S^{j+1}(\C^2)\right) \otimes \left(\T_Z \otimes \T^*_Z \otimes K^{-(j+1)/2}_Z\right) \ar[dl] \ar[d] \\
S^{j-1}(\C^2) \otimes K^{-(j+1)/2}_Z & S^{j+1}(\C^2) \otimes \T^*_Z \otimes K^{-(j+1)/2}_Z 
\end{tikzcd}
\eeqn
Here, the leftmost downward arrow is the evident $\lie{sl}(2)$ projection together with the canonical pairing between sections of $\T_Z$ and $\T^*_Z$.
The rightmost downward arrow is the other $\lie{sl}(2)$ projection together with the Lie derivative of holomorphic one-forms.  
Given a local section $f(w) \otimes h_i(z) \del_{z_i}$ of $S^j(\C^2) \otimes \T_Z \otimes K_Z^{-j/2}$ an explicit formula for this action is
\begin{align*}
(w_a g^i(z) \d z_i) \cdot (f(w) \otimes h_k(z) \del_{z_k}) & = \ep_{ab} (\del_{w_b} f(w)) (g^i h_i)(z)  \\ & + w_a f(w) L_{h_k \del_{z_k}} (g^i \d z_i) .
\end{align*}
\item The odd part of $\cV^{(0)}$ acts on the component $S^{j+2}(\C^2) \otimes K^{-j/2}_Z$ through the composition
\beqn
\begin{tikzcd}
\left(\T^*_Z \otimes K^{-1/2}_Z \otimes \C^2\right) \otimes \left(S^{j+2}(\C^2) \otimes K^{-j/2}_Z\right) \ar[r,"\cong"] & \left(S^{j+1}(\C^2) \oplus S^{j+3}(\C^2)\right) \otimes \left(\T^*_Z \otimes K^{-(j+1)/2}_Z\right) \ar[d] \\
& S^{j+1}(\C^2) \otimes \T^*_Z \otimes K^{-(j+1)/2}_Z
\end{tikzcd}
\eeqn
where the downward arrow is induced by the evident $\lie{sl}(2)$ projection.
\item The odd part of $\cV^{(0)}$ acts on the component $S^{j-1}(\C^2) \otimes K^{-(j+1)/2}_Z$ through the composition
\beqn
\begin{tikzcd}
\left(\T^*_Z \otimes K^{-1/2}_Z \otimes \C^2\right) \otimes \left(S^{j-1}(\C^2) \otimes K^{-j/2}_Z\right) \ar[r,"\cong"] & \left(S^{j-2}(\C^2) \oplus S^{j}(\C^2)\right) \otimes \left(\T^*_Z \otimes K^{-1}_Z K^{-j/2}_Z\right) \ar[d] \\
& S^{j}(\C^2) \otimes \T_Z \otimes K^{-j/2}_Z
\end{tikzcd}
\eeqn
where the downward arrow is induced by the evident $\lie{sl}(2)$ projection together with the holomorphic de Rham operator taking holomorphic one-forms to holomorphic two-forms. 
\item Finally, the odd part of $\cV^{(0)}$ acts on the component $S^{j+1}(\C^2) \otimes \T^*_Z \otimes K^{-(j+1)/2}_Z$ through the composition
\beqn
\begin{tikzcd}
\left(\T^*_Z \otimes K^{-1/2}_Z \otimes \C^2\right) \otimes \left(S^{j+1}(\C^2) \otimes \T^*_Z \otimes K^{-j/2}_Z\right) \ar[r,"\cong"] & \left(S^{j}(\C^2) \oplus S^{j+2}(\C^2)\right) \otimes \left(\T^*_Z \otimes \T^*_Z \otimes K^{-1}_Z K^{-j/2}_Z\right) \ar[dl] \ar[d] \\
S^j (\C^2) \otimes \T_Z \otimes K_Z^{-j/2} & S^{j+2}(\C^2)  \otimes K^{-j/2}_Z
\end{tikzcd}
\eeqn
where the leftmost downward arrow is induced by the evident $\lie{sl}(2)$ projection.
The rightmost downward arrow is induced by the remaining $\lie{sl}(2)$ projection together with the holomorphic de Rham operator taking holomorphic two-forms to holomorphic three-forms.
\end{itemize}

\parsec[s:kacrelation]

In the case $Z = \C^3$, the product decomposition of $\cG = \cG_{\C^3}$ in equation \eqref{eqn:Gdecomp} is closely related to a decomposition of the exceptional simple super Lie algebra $E(5|10)$ studied in \cite{KR2}. 
Here, the eleven-manifold bulk is just 
\beqn
\R \times {\rm Tot}(K_{\C^3}^{1/2} \otimes \C^2) \simeq \R \times \C^5 .
\eeqn

In \S \ref{s:e510} we recalled the result from \cite{RSW} that the $\infty$-jets of $\cL_{sugra}$ at $0 \in \R \times \C^5$ is quasi-isomorphic to $\Hat{E(5|10)}$, a certain central extension of $E(5|10)$.
It follows that the $\infty$-jets of the local Lie algebra $\cG$ at $0 \in \C^3$ is also quasi-isomorphic to $\Hat{E(5|10)}$. 

In \cite{KR2} the following weight decomposition of $E(5|10)$ is constructed.
Denote by $\{z_i\}$ the local coordinates along the three-fold $Z=\C^3$ that the fivebrane wraps and $\{w_a\}$ for the transverse holomorphic coordinates to the zero section in the five-fold ${\rm Tot}(K_{\C^3}^{1/2} \otimes \C^2)$.
Assign the following weights to the super Lie algebra $E(5|10)$. 
\begin{itemize} 
\item the coordinate $z_i$ has weight zero, $\wt(z_i) = 0$. 
\item the coordinate $w_a$ has weight $+1$, $\wt(w_a) = +1$. 
\item the parity of an element carries an additional weight of $-1$. 
Thus, for example, the odd element $[\d w_1 \d z_1] \in \Omega^{2,cl}(\Hat{D}^5)$ carries weight $+1 - 1 = 0$. 
(If we think about the odd part as the space of closed two-forms then equivalently this grading translates to the one-form symbol $\d(-)$ as carrying weight $-1/2$.)
\end{itemize} 
The weight grading is concentrated in degrees $\geq -1$. 
In particular, there is a decomposition of super vector spaces
\beqn\label{eqn:decomp1}
E(5|10) = \til V_{-1} \times \prod_{n \geq 0} V_n 
\eeqn
with $\til V_{-1}$ being the weight $-1$ subspace and $V_n$ being the weight $n$ subspace for $n \geq 0$.  
It is straightforward to verify that this weight grading is compatible with the super Lie algebra structure on $E(5|10)$.

At the level of $\infty$-jets the the decomposition in equation \eqref{eqn:Gdecomp} induces a weight grading of $\Hat{E(5|10)}$ which extends the one on $E(5|10)$ that we just described by declaring that the central term have weight $-1$.
In this way, we get a related decomposition of super $L_\infty$ algebras
\beqn\label{eqn:decomp2}
\Hat{E(5|10)} = \prod_{n \geq -1} V_n
\eeqn
We will refer to this as the \textit{fivebrane decomposition} of $\Hat{E(5|10)}$.
Here $V_{-1}$ is a $\C$-extension of $\til V_{-1}$ defined in the decomposition \eqref{eqn:decomp1}.
Notice that for $n \geq 0$ the $V_n$'s are the same as in the non centrally extended case.

The decomposition in equation \eqref{eqn:decomp2} has the property that $V_0$ is isomorphic to $E(3|6)$ as super Lie algebras.
In particular, for each $n$, $V_n$ is a module for $V_0 = E(3|6)$.
In fact, each $V_n$ is an irreducible $E(3|6)$-module \cite{KR2}.

This is compatible with our description in proposition~\ref{prop:v0} where we showed that as local Lie algebras $\cG^{(0)} \simeq \cE(3|6)$. 
Indeed, the $\infty$-jets of the local Lie algebra $\cE(3|6)$ at $0 \in \C^3$ is quasi-isomorphic to $E(3|6)$. 
By the compatibility of our decomposition with \cite{KR2} we see that $V_n$ is quasi-isomorphic to the $\infty$-jets of $\cG^{(n)}$ at $0 \in \C^3$ as a module for $E(3|6)$.

\subsection{Koszul duality for factorization algebras: an ansatz}
\label{s:noether}

In quantum field theory Koszul duality naturally appears in the problem of coupling topological line operators to some ambient bulk theory. 
More generally, for higher dimensional topological defects, this problem is encoded by Koszul duality for the theory of $\EE_n$ algebras \cite{FrancisGaitsgory} \cite[\S 5.2]{LurieHA}.

More generally, we anticipate a general theory of Koszul duality for factorization algebras which should encode the problem of coupling arbitrary defects (without the condition of being topological).
Even for factorization algebras of holomorphic-topological nature this theory has not been studied in mathematics. 
Nevertheless, we will emphasize features that we expect this general form of Koszul duality to possess which will allow us to nail down its behavior on a rather general class of factorization algebras. 

In this first part of this subsection we briefly recall how Koszul duality enters in the problem of coupling line operators. 
We refer~\cite[\S 6]{CP1},~\cite[\S 8]{CG1}, or the review~\cite{PWkoszul} for more details. 
Then, we give an ansatz for Koszul duality for factorization algebras of the form $\clie^\bu(\cL)$ where $\cL$ is some local Lie algebra. 
From the point of view of the perturbative BV formalism this is not much of a restriction, all such factorization algebras of classical observables can be cast in this form. 

\parsec[s:lines]
Suppose that we have a bulk theory living on a spacetime of the form 
\[
\R \times M 
\]
where $M$ is some smooth manifold. 
Denote by $\cA$ the corresponding factorization algebra on $\R \times M$. 
The theory could have arbitrary behavior along $M$, but we assume that the theory is topological along $\R$. 
This means that when viewed as a factorization algebra on $\R$ that $\cA$ is locally constant and is hence equivalent to the data of an $\EE_1$ or $A_\infty$ algebra.

Next, assume that $\cB$ is another $\EE_1$ algebra, which we think of as being associated to some quantum mechanical system along the real line.
This is a local model for the desired line operator that we are attempting to couple to the bulk theory.
Koszul duality enters in the problem of coupling the two quantum mechanical systems $\cA$ and $\cB$---where we view $\cA$ simply as an $\EE_1$ algebra. 

A coupling of the two systems is Maurer--Cartan element in the algebra
\[
\alpha \in \cA \otimes \cB .
\]
That is, $\alpha$ is an element of ghost degree one which satisfies the Maurer--Cartan equation
\[
\delta \alpha + \alpha \star \alpha = 0 .
\]
Given such an $\alpha$ we can deform the algebra $\cA \otimes \cB$ by adding the term $[\alpha,-]$ to the differential. 
In other words, at the cochain level only the differential, not the product structure, is modified. 

In principle, there are more general ways to `couple' two $\EE_1$ algebras; generally this is controlled by the Hochschild cohomology which governs algebra deformations of $\cA \otimes \cB$. 
we will elaborate further on this definition. 
In \cite{CG1} (see also \cite{PWkoszul}) it is shown how this notion relates to the physicists description of coupling in terms of local Lagrangians.

To see Koszul duality, the key observation is that the data of the Maurer--Cartan element $\alpha$ is equivalent to the data of a map of $\EE_1$ algebras
\[
\alpha \colon \cA^! \to \cB 
\]
where $\cA^!$ is Koszul dual to the algebra $\cA$. 
We then have the following slogan: the Koszul dual of the algebra of observables of the bulk theory $\cA^!$ is the algebra of operators on the {\em universal} line defect supported on $\RR \times \{x\}$, where $x \in M$.  

\parsec[s:celine]

Before moving towards our definition of Koszul duality for a general class of factorization algebras, we briefly recast the case of duality for $\EE_1$ algebras in terms of factorization algebras. 

We will focus on a slight generalization of the standard Koszul duality between the exterior and symmetric algebras.

\begin{prop}
Let $\lie{g}$ be a Lie algebra and equip the filtered associative (and commutative) dg algebra $\clie^\bu(\fg)$ with the augmentation induced by the tautological homomorphism $0 \to \lie{g}$.
The Koszul dual of $\clie^\bu(\lie{g})$ with respect to this augmentation is equivalent to the universal enveloping algebra $U \fg$. 
\end{prop}

There are explicit models for the associative dg algebras $\clie^\bu(\lie{g})$ and $U \lie{g}$ as locally constant factorization algebras on $\R$.
First, observe that we can tensor $\fg$ with the commutative dg algebra of de Rham forms to obtain a dg Lie algebra $\fg \otimes \Omega^\bu(\R)$. 
This has a natural enhancement to a local dg Lie algebra as this is simply the smooth sections of the bundle of Lie algebras $\fg \otimes \wedge^\bu \T^*_\RR$ equipped with the de Rham operator.

Using this local Lie algebra, we obtain a model for the associative (and commutative) dg algebra $\clie^\bu(\fg)$ as the factorization algebra
\[
\clie^\bu(\fg \otimes \Omega^\bu_\R).
\]
To an open set $U \subset \R$ this produces the Chevellay--Eilenberg complex computing the Lie algebra {\em cohomology} of the dg Lie algebra $\fg \otimes \Omega^\bu(U)$---the $\fg$-valued de Rham forms on $U$.
 
Similarly, a model for $U \fg$ is the locally constant factorization algebra
\[
\clie_\bu(\fg \otimes \Omega^\bu_{\R,c}),
\]
see \cite[\S 3.4]{CG1}.
To an open set $U \subset \R$ this produces the Chevellay--Eilenberg complex computing the Lie algebra {\em homology} of the dg Lie algebra $\fg \otimes \Omega^\bu_c(U)$---the $\fg$-valued compactly supported de Rham forms on $U$.

\parsec[s:generalkoszul]

In analogy with the case of $\EE_1$, or locally constant factorization, algebras above we make the following definition. 

\begin{dfn}
Let $\cL$ be a local $L_\infty$ algebra on a manifold $M$ and consider the factorization algebra $\clie^\bu(\cL)$ which assigns to an open set $U \subset M$ the cochain complex $\clie^\bu(\cL(U))$. 
The \defterm{$!$-dual factorization algebra} is 
\[
\clie^\bu(\cL)^! \define \clie_\bu (\cL_{c}) 
\]
where $\clie_\bu(\cL_c)$ assigns to an open set $U \subset M$ the cochain complex $\clie_\bu(\cL_c(U))$.
In other words, the $!$-dual factorization algebra of $\clie^\bu(\cL)$ is the (untwisted) factorization enveloping algebra of the local Lie algebra $\cL$. 
\end{dfn} 

There are many things lacking in this definition. 
First, we do not define the $!$-dual for an arbitrary factorization algebra, only for ones of the form $\clie^\bu(\cL)$ where $\cL$ is a local Lie algebra. 
Also, we will not prove that $!$-dual satisfies any Koszul duality axioms. 
From the discussion above we see that $!$-dual does agree with Koszul duality in the case of associative algebras.\footnote{It is not difficult to see that $!$-duality for locally constant factorization algebras on $\R^n$ agrees with $\EE_n$ Koszul duality between $\clie^\bu(\fg)$, viewed as an $\EE_n$ algebra, and the $\EE_n$ enveloping algebra $U_{\EE_n} \fg$ \cite{Knudsen, Lurie}}

\parsec[s:noether]

While we don't prove that the factorization algebra $\clie^\bu(\cL)^!$ satisfies any sort of categorical duality, we point out a universality that is satisfied with reference to couplings.
In \cite[Part 3]{CG2} a factorization algebra enhancement of Noether's theorem is formulated. 
The general context is the following: 
\begin{itemize}
\item $\cB$ is the factorization algebra on spacetime $M$ of classical observables for some auxiliary theory in the BV formalism.
\item $\cL$ is a local Lie algebra on $M$ which acts on the theory by local symmetries. 
\end{itemize}
Then, the classical version of Noether's theorem for factorization algebras produces a map of factorization algebras 
\[
\clie_\bu(\cL_c) \to \cB .
\]

In our context, we imagine that the local Lie algebra $\cL$ describes a theory in the BV formalism.
The factorization algebra of classical observables is $\Obs = \clie^\bu(\cL)$. 
A natural way to couple the factorization algebras $\Obs$ and $\cB$ is to ask that $\cL$ act on $\cB$ as above.
Then, this result produces a map of factorization algebras $\Obs^! \to \cB$.
In this sense, $\Obs^!$ is the universal factorization algebra which couples to $\Obs$.

\subsection{Finite $N$ factorization algebras}
\label{sec:factsummary}

We summarize the key points of this section which will lead to a general conjecture for the factorization algebra of observables for the worldvolume theory on a finite number of fivebranes in the holomorphic twist.
In the beginning of this section we defined a factorization algebra $\Bar{\pi}_*\Obs_{sugra}$ which we think about as being the factorization algebra of observables of twisted supergravity restricted to the worldvolume of the fivebrane or membrane.
This factorization algebra is of the form 
\beqn
\label{eqn:factgrad}
\Bar{\pi}_*\Obs_{sugra} = \clie^\bu(\cG)
\eeqn
where $\cG = \cG_Z$ is a sheaf of $L_\infty$ algebras on the worldvolume~$Z$ of the fivebrane.

We have seen that $\cG$ is given as the sheaf of sections of a pro vector bundle. 
In fact, the local $L_\infty$ algebra $\cG$ came with a natural decomposition 
\[
\cG = \prod_{k \geq -1} \cG^{(k)} .
\]
There is a related local $L_\infty$ algebra 
\beqn
\til \cG \define \prod_{k \geq 0} \cG^{(k)}
\eeqn
where we simply forget the weight $-1$ component. 
These product decompositions also hold at the level of compactly supported sections. 

This weight grading on $\cG$ induces a filtration on the factorization algebra \eqref{eqn:factgrad} and on its $!$-dual 
\beqn
\left(\Bar{\pi}_*\Obs_{sugra}\right)^! = \clie_\bu(\cG_{c}).
\eeqn
We will focus just on the $!$-dual factorization algebra.
To construct the filtration, first consider the natural filtration on the local Lie algebra $\cG$ induced by the grading
\beqn
\cG = F^0 \cG \supset F^1 \cG \supset \cdots 
\eeqn
where for $N \geq 1$ we have
\beqn
F^N \cG = \cG^{(\geq N-2)} = \prod_{j \geq N-2} \cG^{(j)} ,
\eeqn
and similarly for $\til \cG$. 
If we define 
\beqn
\cG_1 \define \cG / \cG^{(\geq 0)} 
\eeqn
and for $N > 1$
\beqn\label{eqn:gN}
\cG_N \define \cG / \cG^{(\geq N-1)} , \quad \til \cG_N \define \til \cG / \cG^{(\geq N-1)} ,
\eeqn
then the associated graded local Lie algebras $\op{Gr} \cG$ and $\op{Gr} \til{\cG}$ satisfy $\op{Gr} \cG = \op{colim} \cG_N$ and $\op{Gr} \til \cG = \op{colim} \til \cG_N$.
These filtrations also hold at the level of compactly supported sections. 

The filtration on $\cG$ induces a filtration of the factorization algebra $\clie_\bu (\cG_{c})$ 
\beqn
\clie_\bu (\cG_c) = F^0 \clie_\bu (\cG_c) \supset F^1 \clie_\bu (\cG_c) \supset \cdots 
\eeqn
where $F^N \clie_\bu(\cG_c) = \clie_\bu(F^N \cG_c)$. 
Similarly, we have a filtration on the factorization algebra $\clie_\bu(\til \cG_c)$. 
At the level of the associated graded, we have
\beqn
\label{eqn:lim}
\op{Gr} \clie_\bu (\cG_{c}) = \op{colim}_{N \geq 1} \clie_\bu(\cG_{N,c}) .
\eeqn
Similarly $\op{Gr} \clie_\bu (\til \cG_{c}) = \op{colim}_{N \geq 2} \clie_\bu(\til \cG_{N,c})$.

The first term in the limit \eqref{eqn:lim} is 
\[
\clie_\bu(\cG_{1,c})
\]
where $\cG_1$ is the abelian local Lie algebra $\cG_{1} \cong \cG^{(-1)}$---this is just the weight $(-1)$ piece of the decomposition of $\cG$. 
In \cite{SWtensor}, Saberi and the second author have given an explicit description of the holomorphic twist of the worldvolume theory on a single fivebrane, that is, the six-dimensional superconformal theory associated to the abelian Lie algebra~$\lie{gl}(1)$.
We denote the corresponding factorization algebra of classical observables on the three-fold~$Z$ by~$\Obs^{cl}_1$.

We recollected the description of the holomorphic twist of the theory on a single fivebrane in~\S\ref{s:single}. 
This is a free theory and the underlying $\Z \times \Z/2$ graded cochain complex of fields $\cE_{\lie{gl(1)}}$ with linear BRST differential is
\beqn
\begin{tikzcd}
\ul{-1} & \ul{0} \\
\Omega^{2,\bu}(Z) \ar[r,"\del"] & \Omega^{3,\bu}(Z) \\
\Pi \Omega^{0,\bu}(Z, K_{Z}^{1/2} \otimes \C^2) . 
\end{tikzcd} 
\eeqn
Here we recall in the $\Z \times \Z/2$ bigrading the differential has bidegree $(1,0)$. 
The factorization algebra $\Obs_1$ is given by~$\cO(\cE_1)$.

\begin{prop}
\label{prop:factabelian}
There is a quasi-isomorphism of factorization algebras valued in $\Z/2$ graded commutative dg algebras on the three-fold~$Z$
\[
\clie_\bu(\cG_{1,c}) \xto{\simeq} \Obs^{cl}_{1} .
\]
\end{prop}

\begin{proof}
In weight $(-1)$ the abelian local Lie algebra takes the form
\[
\cG_Z^{(-1)} \simeq \Omega^{0,\bu}(Z,\cV^{(-1)}) 
\]
where $\cV^{(-1)}$ was defined in \ref{s:weight-1}.
As a sheaf of $\Z/2$ gradedcochain complexes the factorization algebra $\clie_\bu(\cG_Z^{(-1)})$ assigns to an open set $U\subset Z$ the graded symmetric algebra on the complex
\beqn\label{eqn:weight-1a}
\begin{tikzcd}
\ul{odd} & \ul{even}\\
\Omega_c^{0,\bu}(U, K_Z^{1/2} \otimes \C^2) & \\
\Omega_c^{0,\bu}(U) \ar[r,"\del"] & \Omega_c^{1,\bu}(U) .
\end{tikzcd}
\eeqn
On the other hand, as a $\Z/2$ graded cochain complex, the factorization algebra of observables of the theory on a single fivebrane is of the form 
\[
\cO(\cE_{1}(U)) = \Sym(\Pi \overline{\cE}_{1,c}^!(U)) .
\]
It is immediate to see that as a $\Z/2$ graded cochain complex $\cE_c^!(U)$ is exactly \eqref{eqn:weight-1a}.
The result then follows by applying ellipticity.
\end{proof}

We remark that the Chevalley--Eilenberg complex of an $L_\infty$ algebra $\clie_\bu(\lie{g})$ does not have the structure of a commutative dg algebra.
However, when $\lie{g}$ is abelian (so, a cochain complex) we can identify this complex with the symmetric algebra on the cochain complex $\fg^*[-1]$.
In order to see the quantum observables on a single fivebrane from our holographic analysis we must include effects from the backreaction, which we do not do here.

We now formulate an expectation about the worldvolume theory on a stack of holomorphically twisted fivebranes.
Evidence for this description will be given in the remaining parts of this paper, and we will formulate a more refined conjecture in the next section at the level of local operators.

Recall that the holomorphic, or minimal, twist of a theory with six-dimensional $\cN=(2,0)$ supersymmetry exists on any complex three-fold $Z$ equipped with $K_Z^{1/2}$.
For $N \geq 1$, let $\Obs_{N}$ be the factorization algebra of observables of the holomorphic twist of the worldvolume theory on a stack of $N$ fivebranes wrapping a threefold~$Z$. 
For each open set $U \subset Z$ our expectation is that there is an isomorphism of vector spaces
\beqn
H^\bu \left(\Obs_N(U) \right) \simeq H_\bu (\cG_{N,c}(U)) ,
\eeqn
where $H_\bu$ is the Lie algebra homology.

Similarly, for $N > 1$, let $\til \Obs_N$ be the factorization algebra of classical observables of the holomorphic twist of the worldvolume theory on a stack of $N$ fivebranes with the center of mass degrees of freedom removed.
Then, for each open set $U \subset Z$ we similarly expect
\beqn
H^\bu \left(\til\Obs_N(U) \right) \simeq H_\bu (\til \cG_{N,c}(U)) .
\eeqn
In the language of superconformal $\cN=(2,0)$ theories our conjecture is that the value of the factorization algebra $\clie_\bu(\til \cG_{N,c})$ on an open set $U$ is equivalent to the space of observables of the holomorphic twist of the six-dimensional superconformal field theory associated to the Lie algebra $\lie{sl}(N)$ supported on $U$.

We leave the study of the full factorization algebra structure present in $\Obs_N$ for future work, and emphasize here that we are only making expectations for the space of observables supported an open set.
In the next section we consider local operators, and we will formulate a refined conjecture of the space of local operators as a module for the exceptional super Lie algebra $E(3|6)$.



%
%

\section{Local operators in twisted $M$ theory}

The notion of a factorization algebra captures both the local operators of a theory together with the non-local operators that on can define from the local ones via descent.
From the data of a factorization algebra, one can recover the space of local operators by the following formal construction. 
Let $\Obs$ be the factorization algebra of observables of some theory defined on a smooth manifold $M$.
The space of local operators at point $p \in M$ is, in a precise sense, the limiting behavior of the factorization algebra evaluated on the system of open sets which contain the point~$p$. 

Generally this limit is difficult to compute, but for certain theories it is possible to give a concise expression which captures the essential features of the theory.
For example, in a holomorphic theory, the algebra of local operators is equivalent to the algebra generated by holomorphic derivatives of fields evaluated at a point.

In this section we recall the essentials of the theory of local operators for holomorphic-topological theories. 
We consider a way of counting operators in a topological-holomorphic theory, called the `local character' of a holomorphic-topological theory \cite{SWchar}, and compare it to the superconformal index.
We then present a few simple examples and then go on to set up the theory of local operators associated to factorization algebras~$\clie_\bu(\cG_{N,c})$ we constructed in \S \ref{s:fact}.

\subsection{Local operators in topological-holomorphic theories}

A factorization algebra encodes the many ways to combine observables supported on arbitrary open sets. 
Local operators, on the other hand, exist just at a point in spacetime.
From the factorization algebra perspective one can recover local operators by looking at observables which are supported on \text{every} open set which contains the given point; mathematically this is computed by a limit. 

Precisely, in \cite[Definition 10.1.0.1]{CG2} the space of local operators of a factorization algebra $\cF$ at a point $p \in M$ is defined by the limit $\cF(p) = \lim_{U \ni p} \cF(U)$ which runs over open sets $U \subset M$ containing~$p$.

We will only consider local operators on affine space $\R^d$. 
In this case, we will have the additional property that the factorization algebras are translation invariant.
At the level of local operators this means that the translation map $\tau_{p \to p'}$ induces an isomorphism $\cF(p) \simeq \cF(p')$. 
Without loss of generality, we will consider expressions for local operators at $0 \in \R^d$.

For topological-holomorphic theories the local operators take a very familiar form.
As an algebra they are generated by (holomorphic) derivatives of the fields evaluated at the specified point. 
More precisely, the local operators depend only on the $\infty$-jets of the fields at a point.
In this section we carefully formulate this result and give some examples.

\parsec[s:free]

%
%
A topological-holomorphic theory exists on spacetimes of the form $S \times X$ where $S$ is a smooth manifold and $X$ is a complex manifold (possibly equipped with some auxiliary geometric structures). 
The typical space of fields of a holomorphic-topological theory in the BV formalism is
\beqn\label{eqn:cE}
\cE = \Omega^\bu (S) \hotimes \Omega^{0,\bu}(X, V) 
\eeqn
where $V$ is a graded holomorphic vector bundle on $X$.
The underlying free theory is described by a differential on the space of fields of the form
\[
\d_{dR} + \dbar + Q^{hol} .
\]
Here $\d_{dR}$ is the de Rham differential acting on $S$, $\dbar$ is the Dolbeault operator acting on $X$, and $Q^{hol} \colon V \to V[1]$ is a holomorphic differential operator of cohomological degree~$+1$.
This means that the free, linear equations of motion for a field $\varphi$ take the form
\[
\d_{dR} \varphi + \dbar \varphi + Q^{hol} \varphi = 0 .
\]
Taking into account linear gauge symmetries corresponds to cohomology---solutions to the equations of motion modulo the image of $\d_{dR} + \dbar + Q^{hol}$.

Notice that $\cE$ is a sheaf of cochain complexes---it makes sense to restrict the fields to any open set $U \subset S \times X$. 
The factorization algebra of observables of the free theory whose fields are as above assigns to an open set $U \subset S \times X$ the cochain complex
\[
\Obs \colon U \mapsto \cO(\cE(U)) = \Sym \left(\cE(U)^\vee \right) 
\]
equipped with the induced differential.

Some remarks are in order:
\begin{itemize}
\item If $V$ is a topological vector space then $\cO(V) = \Sym(V^\vee)$ denotes the algebra of polynomials on~$V$.
Here~$V^\vee$ is the topological dual.
\item The topological dual of $\cE(U)$ is $\cE(U)^\vee \simeq \overline{\cE}^!_c(U)$ where the bar denotes distributional sections, the subscript $c$ denotes compact support, and $!$ denotes the Serre dual. 
Explicitly, if $U = U' \times U'' \subset S \times X$ then 
\[
\overline{\cE}^!_c(U' \times U'') \simeq \overline{\Omega}^\bu(U') \otimes \overline{\Omega}^{n,\bu}(U'',V^*)[n+m] 
\]
where $\dim_\C (X) = n$ and $\dim_\R (S) = m$. 
\end{itemize}

Let's restrict to the case that $S \times X = \R^m \times \C^n$ and suppose that the bundle $V \to \C^n$ is translation invariant with fiber $V_0$ over $0 \in \R^m \times \C^n$.
We also assume that the operator $Q^{hol}$ is translation invariant.   
 
The jet expansion at $0 \in \R^m \times \C^n$ determines a map of cochain complexes
\[
\cE(\C^n \times \R^m) \to V_0 [[x_i, \d x_i,z_j, \zbar_j, \d \zbar_j]] 
\]
The differential on the right hand side is $\d_{dR} + \dbar + Q^{hol} = \d x_i \del_{x_i} + \d \zbar_j \del_{\zbar_j} + Q^{hol}$ where $Q^{hol}$ is some holomorphic differential operator in the $z_j$ variables. 
Since all structure maps are given by holomorphic polydifferential operators, the canonical map 
\[
V_0 [[x_i, \d x_i,z_j, \zbar_j, \d \zbar_j]] \xto{\simeq} V_0 [[z_j]] 
\]
which sends $x_i, \d x_i,\zbar_j \d \zbar_j \mapsto 0$ is a quasi-isomorphism. 
The only remaining differential on the right hand side is~$Q^{hol}$. 
In summary, we see that the jet expansion at $0 \in \R^m \times \C^n$ determines a map of cochain complexes $\cE(\R^m \times \C^n) \to V_0[[z_j]]$. 

\begin{lem}
\label{lem:taylor}
Suppose that $\cE$ is the sheaf of cochain complexes representing the free topological-holomorphic theory on $S \times X = \R^m \times \C^n$ and consider the factorization algebra of observables~$\Obs = \cO (\cE)$. 
Then, the Taylor expansion map
\beqn\label{eqn:taylor}
\cE(\C^n \times \R^m) \to V_0[[z_0,\ldots,z_n]]
\eeqn
induces a quasi-isomorphism of commutative dg algebras
\[
\Obs(0) \simeq \cO \left( V_0[[z_1,\ldots,z_n]] \right) .
\]
\end{lem}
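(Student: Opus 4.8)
The plan is to unwind the definition of the costalk, move the functor $\cO$ past the limit, and reduce everything to the dual of the jet quasi-isomorphism established just above. By definition $\Obs(0)=\lim_{U\ni 0}\cO(\cE(U))$, where $\cO(\cE(U))=\Sym(\cE(U)^\vee)$ and, as recorded in the excerpt, $\cE(U)^\vee\simeq\overline{\cE}^!_c(U)$ is the complex of compactly supported Serre-dual distributional sections on $U$. The structure maps of the factorization algebra are the extension-by-zero maps $\overline{\cE}^!_c(U)\to\overline{\cE}^!_c(U')$ for $U\subseteq U'$, so on symmetric algebras they are injective and the limit over a shrinking neighborhood basis of $0$ is computed as an intersection. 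First I would isolate the linear part and argue that $\lim_{U\ni 0}\cE(U)^\vee$ is the complex of Serre-dual distributional sections supported at the single point~$0$.

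The key elementary observation is that a distribution lying in $\overline{\cE}^!_c(U)$ for \emph{every} neighborhood $U\ni 0$ is supported in $\bigcap_U U=\{0\}$, hence is of finite order---a finite combination of derivatives of the delta distribution at $0$ paired against $V_0^*$. This finite-order property is exactly what makes the \emph{formal} (rather than convergent) power series appear on the right-hand side, since the pairing between point-supported distributions and the formal jet space $V_0[[x_i,\d x_i,z_j,\zbar_j,\d\zbar_j]]$ is perfect. The complex of point-supported distributions is thus the continuous dual of the jet complex, and dualizing the quasi-isomorphism $V_0[[x_i,\d x_i,z_j,\zbar_j,\d\zbar_j]]\xto{\simeq}V_0[[z_1,\ldots,z_n]]$ recalled before the statement identifies it with $\bigl(V_0[[z_1,\ldots,z_n]]\bigr)^\vee$, the differential being the transpose of $Q^{hol}$. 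Concretely, the de Rham directions on $\R^m$ and the antiholomorphic Dolbeault directions on $\C^n$ contribute nothing beyond evaluation at $0$ (their dual jet complexes are acyclic away from the bottom degree, after the shift $[n+m]$ from the Serre dual), while the holomorphic directions contribute the finite-order holomorphic derivatives $V_0^*\otimes\C[\del_{z_1},\ldots,\del_{z_n}]$.

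Granting the interchange of $\cO$ with the limit, I would finish by invoking exactness of the symmetric algebra: over $\C$ the functor $\Sym$ preserves quasi-isomorphisms of cochain complexes (each $\Sym^k$ is a summand of $\otimes^k$, to which Künneth applies), so applying it to the identification of the linear costalk with $(V_0[[z_1,\ldots,z_n]])^\vee$ yields a quasi-isomorphism $\Obs(0)\simeq\Sym\bigl((V_0[[z_1,\ldots,z_n]])^\vee\bigr)=\cO(V_0[[z_1,\ldots,z_n]])$ of commutative dg algebras, manifestly induced by the Taylor expansion map and compatible with products.

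The step I expect to be the genuine obstacle is precisely the interchange of $\cO=\Sym\circ(-)^\vee$ with the limit defining $\Obs(0)$, i.e.\ establishing $\lim_U\Sym(\cE(U)^\vee)\simeq\Sym\bigl(\lim_U\cE(U)^\vee\bigr)$. This is not a formal consequence of abstract category theory, since $\Sym$ does not commute with limits in general; it requires the functional-analytic input that the $\cE(U)$ are nuclear (Fréchet, and dually DF), so that completed symmetric and tensor powers commute with the relevant countable limits and dualization is exact, together with a Mittag--Leffler / $\varprojlim^1$-vanishing argument for the tower of injections indexed by the shrinking neighborhood basis. The geometric inputs---the de Rham Poincar\'e lemma on $\R^m$, the Dolbeault--Grothendieck lemma on $\C^n$, and the reduction to $V_0[[z_j]]$---are already in hand from the preceding discussion, so the remaining work is this topological bookkeeping and the verification that the costalk is genuinely modeled by point-supported distributions.
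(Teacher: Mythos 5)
Your argument is correct and lands on the same essential mechanism as the paper---local operators are functions on the $\infty$-jets of the fields at the origin, with the topological and antiholomorphic directions contracted away---but you run it on the dual side, whereas the paper runs it on the function side. The paper first computes the observables on a product of disks $D_\R\times D_\C$ as $\cO(\cO^{hol}(D_\C)\otimes V_0)$ and then asserts that any observable depending only on the jet of the field at $0$ factors through the Taylor expansion $\cO^{hol}(D_\C)\otimes V_0\to V_0[[z_1,\ldots,z_n]]$; you instead identify $\lim_{U\ni 0}\cE(U)^\vee$ with distributional sections supported at the single point $0$ and recognize these as the continuous dual of the jet complex. Your route has the advantage of making transparent \emph{why} formal (rather than convergent) power series appear---point-supported distributions have finite order---and of honestly isolating the step the paper elides entirely, namely commuting $\Sym$ with the limit. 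Two small remarks on that step. First, injectivity of the extension-by-zero maps identifies the \emph{strict} limit with an intersection but does not by itself kill ${\lim}^1$ for the homotopy limit (towers of injections can have nonvanishing ${\lim}^1$); you do flag this, but it is worth noting that the cleanest resolution is to observe that the tower of point-supported subcomplexes is eventually constant in each graded piece. Second, the interchange $\Sym^k\bigl(\lim_U\cE(U)^\vee\bigr)\simeq\lim_U\Sym^k(\cE(U)^\vee)$ is in fact handled by the same support argument you already use, applied on the $k$-fold product: a distribution lying in $\overline{\cE}^!_c(U)^{\hotimes k}$ for every $U\ni 0$ is supported at the origin of $(\R^m\times\C^n)^{\times k}$, hence is a finite sum of tensor products of delta derivatives, so the nuclearity input reduces to this elementary observation. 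With those points made explicit your proof is complete and, if anything, more careful than the one in the paper.
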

\begin{proof}
Suppose that $D_\R \times D_\C \subset \R^m \times \C^n$ is a product of a real $m$-disk times a complex $n$-disk containing the origin.
The algebra of observables supported on $D_\R \times D_\C$ is quasi-isomorphic to 
\[
\cO\left( \cO^{hol}(D_\C) \otimes V_0 \right) .
\]

Observe that there is a canonical map on fields 
\[
\cO^{hol}(D_\C) \otimes V_0 \to V_0[[z_1,\ldots,z_n]]
\]
given by taking the power series expansion at $0 \in D_\R \times D_\C$. 
If an observables on $D_\R \times D_\C$ depends on only the value of the field and its derivatives at $0 \in D_\R \times D_\C$ then it automatically factors through this map. 
In particular, this means that there is a quasi-isomorphism of local operators with functions on $V_0[[z_1,\ldots,z_n]]$,
\[
\Obs(0) \simeq \cO\left(V_0 [[z_1,\ldots,z_n]]\right).
\] 
\end{proof}

Let's unpack this result explicitly. 
Using the $n$-dimensional residue, we can identify the topological dual of $V_0[[z_1,\ldots,z_n]]$ with the vector space
\beqn
\frac{\d z_1}{z_1} \cdots \frac{\d z_n}{z_n} V_0^* [z_0^{-1}, \ldots,z_n^{-1}] .
\eeqn
This is the space of linear local operators. 
If $\chi \colon V_0 \to \C$ is a dual vector in~$V_0^*$
then we obtain a linear local operator at $0 \in \R^m \times \C^n$ on the space of fields by the assignment
\[
\varphi \mapsto \del_{z_1}^{k_1} \cdots \del_{z_n}^{k_n} \<\chi,\varphi\> (0) 
\]
where $k_i \geq 0$. 
Under the quasi-isomorphism of the lemma above, this corresponds to the linear local operator 
\[
\frac{\d z_1}{z_1^{k_1+1}} \cdots \frac{\d z_n}{z_n^{k_n+1}} \chi .
\]

\parsec[s:interaction]

It is not hard to turn on interactions in the description above. 
An interacting theory in the BV formalism is described by a local $L_\infty$ algebra structure on $\cL = \cE[-1]$, where $\cE$ is the sheaf of fields.
For a topological-holomorphic theory the higher $L_\infty$ structure maps $[\cdot]_k$ of the local $L_\infty$ algebra are required to be given by holomorphic polydifferential operators and $[\cdot]_1 = \d_{dR} + \dbar + Q^{hol}$.  
For more details we refer to the definitions in \cite{GRWthf}.

In this situation, the factorization algebra of classical observables supported on an open set $U \subset S \times X$ is given by the Chevalley--Eilenberg cochains on the $L_\infty$ algebra $\cL(U)$. 
This defines a factorization algebra 
\[
\Obs \colon U \mapsto \clie^\bu(\cL(U)) .
\]
We will now give a concise presentation for the {\em local} operators in a topological-holomorphic theory. 

On $S \times X = \R^m \times \C^n$ we can also ask that all $L_\infty$ structure maps be translation invariant. 
If this is the case, one obtains the induced structure of an $L_\infty$ algebra on the (shift of the) jets of the fields supported at $0 \in \R^m \times \C^n$
\[
V_0 [[z_1,\ldots,z_n]] [-1] .
\]
The $[\cdot]_1$ operation is precisely $Q^{hol}$ as above.
The Taylor expansion map \eqref{eqn:taylor} is a map of $L_\infty$ algebras. 
Combining this with Lemma \ref{lem:taylor}, one gets a quasi-isomorphism of cochain complexes between the local operators of an interacting topological-holomorphic theory in terms of Lie algebra cohomology
\[
\Obs(0) \simeq \clie^\bu\left(V_0[[z_1,\ldots,z_n]][-1]\right) .
\]


\parsec[s:envelope]

There is another way that observables are presented in a degenerate version of the BV formalism.
Suppose that~$\cE$ is the sheaf of sections of some graded vector bundle~$E$ on a manifold~$M$.
We have seen that the observables~$\cO(\cE) = \Sym(\cE^*)$ has the structure of a factorization algebra---we now consider the $!$-dual factorization algebra.
That is, we consider the factorization algebra 
\[
U \subset M \mapsto \Sym \left(\cE_c(U) \right) 
\]
where $U \to \cE_c(U)$ is the cosheaf of compactly supported sections of the bundle~$E$.

%

\begin{lem}
\label{lem:envelope}
Suppose that $\cE$ is the sheaf of fields of a free holomorphic theory as in~\eqref{eqn:cE} and consider the factorization algebra~$\cF = \Sym(\cE_c)$. 
Then, the algebra of classical local operators at~$0 \in \C^n$ of the factorization algebra~$\cF$ is quasi-isomorphic to 
\begin{align*}
\cF(0) & \simeq {\rm Sym} \left(\Omega^{n,hol}(\Hat{D}^n,V_0^*)^\vee [-n]\right) \\ & \cong \cO\left(\Omega^{n,hol}(\Hat{D}^n,V_0^*) [n] \right) 
\end{align*}
where the differential on the right hand side is~$Q^{hol}$.
\end{lem}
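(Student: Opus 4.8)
The plan is to reduce the computation of the local operators $\cF(0) = \lim_{U \ni 0} \cF(U)$ to a computation of the compactly supported Dolbeault cohomology of a shrinking family of polydiscs, in close parallel with the proof of Lemma~\ref{lem:taylor}, and then to dualize. First I would choose a cofinal family of polydiscs $D \ni 0$ in $\C^n$ and use that $\cF(D) = \Sym(\cE_c(D)) = \Sym\left(\Omega^{0,\bu}_c(D, V)\right)$, with differential $\dbar + Q^{hol}$. Since the structure maps of $\cF$ are induced by the extension-by-zero inclusions $\cE_c(D') \hookrightarrow \cE_c(D)$ for $D' \subset D$, and $\Sym$ is assembled out of these, it suffices to identify the limiting behavior of the linear pieces $\cE_c(D)$ and then to record that $\Sym$ intertwines this with the symmetric algebra of the limit.

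The key computational step is the identification of $\cE_c(D)$ in cohomology. For a polydisc $D$, the complex $\left(\Omega^{0,\bu}_c(D, V), \dbar\right)$ has cohomology concentrated in the top degree $n$, and the $n$-dimensional residue pairing $\alpha \otimes \omega \mapsto \int_D \langle \omega, \alpha \rangle$, for $\alpha \in \Omega^{0,n}_c(D, V)$ and $\omega \in \Omega^{n,hol}(D, V_0^*)$, exhibits a quasi-isomorphism
\[
\cE_c(D) \simeq \Omega^{n,hol}(D, V_0^*)^\vee[-n] .
\]
This is Serre duality for the polydisc in its compactly supported Dolbeault incarnation; the only input beyond the classical statement is that the pairing is compatible with $Q^{hol}$, so that the residual differential on the right hand side is exactly $Q^{hol}$ acting on holomorphic $n$-forms.

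It then remains to pass to the limit $D \to 0$. Under the quasi-isomorphism above, the extension-by-zero maps dualize to the restriction maps $\Omega^{n,hol}(D, V_0^*) \to \Omega^{n,hol}(D', V_0^*)$ on holomorphic $n$-forms. As in the proof of Lemma~\ref{lem:taylor}, an observable localized at $0$ depends only on the $\infty$-jet at $0$ of the holomorphic $n$-form it is paired against, so the germ expansion $\Omega^{n,hol}(D, V_0^*) \to \Omega^{n,hol}(\Hat{D}^n, V_0^*)$ induces a quasi-isomorphism on local operators. Dualizing and applying $\Sym$ then yields
\[
\cF(0) \simeq \Sym\left(\Omega^{n,hol}(\Hat{D}^n, V_0^*)^\vee[-n]\right) \cong \cO\left(\Omega^{n,hol}(\Hat{D}^n, V_0^*)[n]\right),
\]
with differential $Q^{hol}$, as claimed.

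I expect the main obstacle to be the analytic content hidden in the passage to the formal disk together with the interchange of $\Sym$ and the limit: one must argue, in the topological (nuclear Fr\'echet) setting, that replacing germs of holomorphic $n$-forms by their formal Taylor series is a quasi-isomorphism after dualizing, and that this persists through the nonlinear completion built into $\Sym$. This is precisely where the free holomorphic hypothesis enters, since it guarantees that all structure maps are holomorphic polydifferential operators and hence that only the $\infty$-jet at $0$ contributes, exactly as in Lemma~\ref{lem:taylor}.
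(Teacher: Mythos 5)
Your proposal is correct and lands on the same two essential ingredients (Serre duality and the jet computation), but it is organized differently from the paper's proof. The paper does not compute the compactly supported Dolbeault cohomology of the polydisc directly. Instead it observes that, already at the level of graded topological vector spaces, $\cE_c(U) \simeq \left(\overline{\cE}^!(U)\right)^\vee$, so that $\Sym(\cE_c(U)) \simeq \cO\left(\overline{\cE}^!(U)\right)$: the $!$-dual factorization algebra \emph{is} the algebra of observables of the Serre-dual free theory with distributional fields $\overline{\cE}^!$. Ellipticity then replaces $\overline{\cE}^!(U)$ by $\cE^!(U) \simeq \Omega^{n,\bu}(U,V^*)[n]$ on Stein opens, and Lemma \ref{lem:taylor} applied to this dual theory immediately gives $\cF(0) \simeq \cO\left(\Omega^{n,hol}(\Hat{D}^n,V_0^*)[n]\right)$. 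Your route---concentration of $H^\bu\left(\Omega^{0,\bu}_c(D,V),\dbar\right)$ in degree $n$ together with the residue pairing, followed by a hands-on passage to the limit over shrinking polydiscs---establishes the same intermediate identification $\cE_c(D) \simeq \Omega^{n,hol}(D,V_0^*)^\vee[-n]$, but then has to re-derive the germ-to-formal-disk step rather than quoting Lemma \ref{lem:taylor} wholesale. What the paper's phrasing buys is that all the analysis you correctly flag as the main obstacle (the topological dual of germs of holomorphic forms being a priori larger than the dual of formal power series, and the commutation of $\Sym$ with the limit) is absorbed into two inputs already accepted elsewhere: the quasi-isomorphism $\cE^!(U) \hookrightarrow \overline{\cE}^!(U)$ and Lemma \ref{lem:taylor} itself. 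Two small points to tighten in your version: the ``cohomology concentrated in top degree'' claim is exactly the combination of the Dolbeault lemma for Stein opens with the duality the paper invokes, so you should say which form of Serre duality with supports you are using; and the differential induced on $\Omega^{n,hol}(\Hat{D}^n,V_0^*)$ is the formal adjoint of $Q^{hol}$ obtained from the pairing by integration by parts in the holomorphic variables, which is what both you and the paper are abbreviating as ``$Q^{hol}$'' on the right-hand side.
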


\begin{proof}
First, notice that as graded topological vector spaces one has an isomorphism for any open set $U \subset M$ 
\beqn\label{eqn:dist}
\left(\overline{\cE}^!(U)\right)^\vee \simeq \cE_c(U) 
\eeqn
%
This implies there is an isomorphism
\beqn\label{eqn:dist2}
\Sym(\cE_c(U)) \simeq \cO\left(\overline{\cE}^!(U)\right) 
\eeqn
for any open set $U$.
By assumption, the linear differential $[\cdot]_1$ is elliptic, in particular the embedding of smooth sections into distributional sections
\beqn\label{eqn:dist3}
\cE^!(U) \hookrightarrow \overline{\cE}^! (U)
\eeqn
is a quasi-isomorphism for any open set~$U$. 

We can assume that $U \subset \C^n$ is a Stein open set containing~$0 \in \C^n$.
Then we have a sequence of quasi-isomorphisms
\begin{align*}
\overline{\cE}^! (U) & \simeq \cE^!(U) \\ & \simeq \Omega^{n,\bu}(U, V^*)[n].
\end{align*}
The result now follows from Lemma~\ref{lem:taylor}.

\end{proof}

\subsection{Local characters for topological-holomorphic theories}\label{s:localchar}

Suppose that $\cF$ is the factorization algebra of observables of a topological-holomorphic theory on $\R^m \times \C^n$. 
We will restrict our attention to cases where $\cF$, as a graded vector space, is of the form $\Sym(\cE^*)$ or $\Sym(\cE_c)$ where $\cE$ is of the form \eqref{eqn:cE}.

The local character $\chi_\cF ({\bf q})$ is, by definition, the graded character of algebra of local operators $\cF(0)$ with respect to some group of symmetries $H$, see \cite{SWchar}.
The particular group of symmetries depends on the theory, and we will present some examples momentarily. 

By assumption, as a graded algebra, the space of local operators $\cF(0)$ of a topological-holomorphic theory is of the form
\beqn
\cF(0) = \Sym (\lie{s})
\eeqn
where $\lie{s}$ is a graded topological vector space which we interpret as the linear local operators.

We will also assume that the group of symmetries $H$ acting on $\cF(0)$ arises from an action of $H$ on the linear local operators $\lie{s}$. 
Denote by $f_{\cF}({\bf q})$ the character of $\lie{s}$ with respect to this group action---this is the so-called `single particle' character. 
The full character of $\cF(0)$ is then given as the plethystic exponential of this single particle character
\beqn
\chi_{\cF}({\bf q}) = {\rm PExp}\left[f_{\cF}({\bf q}) \right] .
\eeqn

\subsection{Examples}

We present some simple examples. 

\begin{eg}
Suppose that $V$ is the trivial bundle on $\C^n$ and consider the theory whose fields are
\[
\cE = \Omega^\bu(\R^m) \otimes \Omega^{0,\bu}(\C^n) 
\]
where the differential is just $\d_{dR} + \dbar$. 
Then, the space of local operators is the symmetric algebra on the topological vector space which is linear dual to 
\[
\cO^{hol}(\Hat{D}^n) = \C[[z_1,\ldots,z_n]] .
\]
Via the $n$-dimensional residue one can identify the algebra of local operators with 
\[
\Sym\left(\frac{\d z_1}{z_1} \cdots \frac{\d z_n}{z_n}  \C[z_1^{-1}, \ldots , z_n^{-1}]\right) ,
\]
where $\lie{s} \simeq \frac{\d z_1}{z_1} \cdots \frac{\d z_n}{z_n}  \C[z_1^{-1}, \ldots , z_n^{-1}]$ is (equivalent to) the space of linear local operators. 

Consider the standard torus action $\C^\times \times \cdots \times \C^\times$ on $\C^n$. 
We would like to observe that the character of local operators with respect to this symmetry would be given by the plethystic exponential of the single particle index (the character of the space of linear local operators) which is immediate to compute:
\[
\frac{1}{(1-q_1)\cdots (1-q_n)} .
\]
However, the plethystic exponential cannot be applied to such an expression since as a power series in $q_1,\ldots,q_n$ there is a nonzero constant term.
This is related to the fact that there is an infinite number of operators for which the fugacities satisfy $q_1=\ldots=q_n=1$, so counting local operators in this way is ill-defined. 
One can remedy this by introducing a single extra variable fugacity $y$ and modify the single particle index to 
\[
\frac{y}{(1-q_1)\cdots (1-q_n)} .
\]
The plethystic exponential of such an expression returns the local character
\[
\chi(q_1,\ldots,q_n,y) = \prod_{k_1,\ldots,k_n \geq 0} \frac{1}{1-y q_1^{k_1}\cdots q_n^{k_n}}
\]
which now makes sense as a power series in the variables $y,q_1,\ldots,q_n$.
\end{eg}

Its instructive to see how local operators differ between $!$-dual factorization algebras.
Let us first point out a simple example. 
\begin{eg}
Consider the sheaf of cochain complexes
\[
\cE = \Omega^{0,\bu}\left(\C, K_{\C}^{\otimes r}\right),
\]
where $r \in \Z$ and the differential is~$\dbar$. 
Then, we can consider both the factorization algebra $\Obs = \cO(\cE)$ and its $!$-dual $\Obs^! = \Sym(\cE_c)$. 

The $\infty$-jets at $0 \in \C$ of $\cE$ is quasi-isomorphic to $\Gamma(\Hat{D}^n, K^{\otimes r}) = \d z^{\otimes r} \C[[z]]$. 
Thus the algebra of local operators $\Obs(0)$ is quasi-isomorphic to 
\[
\Obs(0) \simeq \cO \left(\Gamma(\Hat{D}, K^{\otimes r})\right) .
\]
In particular, the character of local operators $\Obs(0)$ is the plethystic exponential of
\[
\frac{q^{r}}{1-q} 
\]
where $q$ represents the fugacity for the standard~$\C^\times$ action on~$\C$.
Notice that when $r = 0$ we run into a similar problem as in the previous example. 
It is therefore convenient to introduce an extra fugacity $y$ which enters the single particle character as
\[
\frac{y q^{r}}{1-q}  .
\]



On the other hand, by Lemma \ref{lem:envelope} we see that the local operators associated to the $!$-dual $\Obs^!(0)$ is identified with the vector space
\[
\cO\left(\Gamma(\Hat{D}, K^{1-r})[1]\right) .
\]
In particular, the character of local operators $\Obs^!(0)$ is the plethystic exponential of
\[
-\frac{q^{1-r}}{1-q} 
\]
where $q$ represents the fugacity for the standard $\C^\times$ action on $\C$.
This time, when $r=1$ there is a problem with defining the plethystic exponential. 
To get an expression that makes sense for all $r$ we can again introduce a variable $y$ which enters the single particle character as
\[
- \frac{y q^{1-r}}{1-q} .
\]
\end{eg}

\subsection{Local characters for twisted superconformal theories}\label{s:localchar}

Suppose that $\cF$ is the factorization algebra of observables of a topological-holomorphic theory on $\R^m \times \C^n$. 
The local character $\chi_\cF ({\bf q})$ is, by definition, the graded character of algebra of local operators $\cF(0)$ with respect to some group of symmetries \cite{SWchar}.
The particular group of symmetries depends on the theory.
In this section we focus on local characters of factorization algebras that arise as twists of six-dimensional $\cN=(2,0)$ supersymmetric theories.

The (complexified) superconformal algebra in dimension six is $\lie{osp}(8|4)$. 
The holomorphic twist of this superconformal algebra is $\lie{osp}(6|2)$. 
We will consider the symmetry by the bosonic subalgebra
\beqn\label{eqn:cartan3}
\lie{sl}(3) \times \lie{sl}(2) \times \lie{gl}(1) \subset \lie{osp}(6|2)  .
\eeqn
The corresponding generators of the Cartan, as in \S \ref{sec:states}, were denoted $h_1,h_2,h,$ and $\Delta$ and the respective fugacities $t_1,t_2,r,q$.

We have described how this subalgebra embeds as fields in the twist of eleven-dimensional supergravity in \S \ref{s:ads7}. 
In particular, the holomorphic twist of any six-dimensional superconformal theory will have as a symmetry the subalgebra \eqref{eqn:cartan3}.
If the corresponding factorization algebra is $\cF$, and the local operators $\cF(0)$, the local character is then defined by the formal expression
\beqn
\chi_{\cF}(t_1,t_2,r,q) = {\rm Tr}_{\cF(0)} \left((-1)^F t_1^{h_1} t_2^{h_2} r^h q^\Delta\right) .
\eeqn
In the next section we will compute these characters in the case that the factorization algebra $\cF$ is $\clie_\bu(\cG_{N,c})$ where $N = 1,2,\ldots$.

We pointed out in \S \ref{sec:states} an alternative parametrization of the fugacities in terms of the parameters $y_1,y_2,y_3,y,q$ which satisfy the constraint $y_1 y_2 y_3 = 1$.
These parameters are related by $y_1=t_1^{-1}, y_2 = t_1 t_2^{-1}, y_3 = t_2$ and $y = q^{1/2} r$. 
We will also consider formulas for the local character $\chi(y_i,y,q)$ in terms of these variables.
  

\subsection{A relationship to the superconformal index}
\label{sec:sucaindex}
The local character for the holomorphic twist of a six-dimensional $\cN=(2,0)$ supersymmetric theory agrees with the well-known superconformal index.
Generally, in any dimension, the superconformal index counts states $\cH^Q$ which are annihilated by a particular supercharge~$Q$.
The index is defined as a function on the Cartan of a commuting subalgebra with respect to $Q$.
For six-dimensional superconformal theories, a natural choice of a supercharge is the holomorphic twisting supercharge. 
Then the index is sensitive to the so-called $\tfrac{1}{16}$-BPS states.

Recall that the odd part of the $\cN=(2,0)$ supersymmetry algebra is $S_+ \otimes R$ where $S_+$ is the positive irreducible spin representation of $\lie{so}(6)$ and $R \cong \C^4$.
Square-zero supercharges $Q \in S_+ \otimes R$ are stratified by the rank of the corresponding map $R \to (S_+)^* \cong S_-$.
A holomorphic supercharge $Q$ has rank one (such elements automatically square to zero). 
Thus, the superconformal index counts precisely the states in the holomorphic twist.
In the terminology above these states comprise the algebra of local operators $\cH^Q = \Obs(0)$ in the holomorphic twist of the six-dimensional $\cN=(2,0)$ theory.

The six-dimensional superconformal algebra (before twisting) is~$\lie{osp}(8|4)$.
The Cartan of the Lie super algebra is six-dimensional generated by elements
\[
H, J_1,J_2,J_3,R_1,R_2 .
\]
The holomorphic twisting supercharge $Q \in \lie{osp}(8|4)$ and the (super) commuting subalgebra is $\lie{osp}(6|2)$ together with the element 
\[
\Delta \define [Q,Q^\dagger] = H - (J_1 + J_2 + J_3) - 2 (R_1 + R_2) 
\]
where $Q^\dagger$ denotes the superconformal partner to the supercharge $Q$. 
The superconformal index counts states which saturate the BPS bound $\Delta \geq 0$ as a representation for the subalgebra $\lie{osp}(6|2)$. 
To fit with the notation used in this paper, the superconformal index can be written as
\beqn
\cI(y_i,y,q) = \Tr_{\cH^Q} (-1)^F q^{H + \frac13 (J_1+J_2+J_3)} y_1^{J_1} y_2^{J_2} y_3^{J_3} y^{R_1 - R_2} .
\eeqn
This agrees precisely with the local character $\chi(y_i,y,q)$ with the evident change of coordinates for the Cartan of $\lie{osp}(6|2)$. 

\subsection{Exceptional symmetry and a finite $N$ conjecture}

Generally speaking, after twisting there are enhancements of symmetries which are present in the original theory. 
We expect that the same occurs for any six-dimensional superconformal theory. 
In \cite{SW6d} we have shown that at the level of the holomorphic twist the twisted superconformal algebra $\lie{osp}(6|2)$ gets enhanced to the infinite-dimensional exceptional super Lie algebra $E(3|6)$ \cite{KacClass}. 
For the case of the theory on a stack of $N$ fivebranes, whose factorization algebra we denote by $\Obs_N$, this implies that the local operators $\Obs_{N}(0)$ form a representation for $E(3|6)$.

Our goal is to gain knowledge of the structure of $\Obs_N(0)$ as an $E(3|6)$-representation from our holographic analysis of the previous section.
Indeed, in \S \ref{s:fact} we have expressed the restriction of the factorization algebra of observables of twisted eleven-dimensional supergravity to the three-fold $Z$ as the Chevalley--Eilenberg cochains of a local $L_\infty$ algebra $\cG$. 
Recall that we have a decomposition of local Lie algebras $\cG = \oplus_{j \geq -1} \cG^{(j)}$ on the three-fold $Z$. 
From this decomposition we have defined a family of local Lie algebras $\cG_{N}$ on $Z = \C^3$ for $N=1,2,\ldots$.

In \S \ref{sec:factsummary} we explained the expectation that to an open set $U \subset \C^3$, the Lie algebra cohomology of $\cG_{N,c}(U)$ is equivalent to the observables $\Obs_N(U)$ of the six-dimensional theory supported on $U$. 
Each $\cG_N$ is acted on by the local Lie algebra $\cG^{(0)} = \cE(3|6)$.
The $\infty$-jets of $\cE(3|6)$ at $0 \in \C^3$ is exactly the exceptional super Lie algebra $E(3|6)$.  
Thus, for every $N$, the space of local operators of the factorization algebra $\clie_\bu(\cG_{N,c})$ is naturally and $E(3|6)$-representation. 
At the level of local operators we can make the following conjecture, which we will further elucidate at the level of characters for $E(3|6)$ in the next section.

\begin{conj}
\label{conj:ops}
Let $\Obs_N(0)$ be the local operators of the theory on a stack of $N$ fivebranes wrapping $\C^3$ in $\R \times \C^5$. 
There is an equivalence of $E(3|6)$-representations
\beqn
\Obs_N(0) \simeq \clie_\bu (\cG_{N,c}) (0) .
\eeqn
Similarly, let $\til \Obs_{N}(0)$ be the local operators of the theory on a stack of $N \geq 2$ fivebranes with the center of mass degrees of freedom removed. 
There is an equivalence of $E(3|6)$-representations
\beqn
\til \Obs_N(0) \simeq \clie_\bu (\til\cG_{N,c}) (0) .
\eeqn
\end{conj}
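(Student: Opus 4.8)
The plan is to deduce the stated equivalence of local operators from the factorization-algebra equivalence described in \S\ref{sec:factsummary} by passing to the limit over open sets containing the origin, and then to verify separately that the resulting map intertwines the two $E(3|6)$-actions. Recall that local operators are computed as the limit $\cF(0) = \lim_{U \ni 0}\cF(U)$. The expected equivalence $\clie_\bu(\cG_{N,c})(U) \simeq \Obs_N(U)$ is an equivalence of factorization algebras, hence natural in $U$ and compatible with the restriction maps of the system $\{U \ni 0\}$. Applying $\lim_{U \ni 0}$ to both sides, and using that a natural quasi-isomorphism of diagrams induces a quasi-isomorphism of the (homotopy) limits, yields the desired equivalence $\Obs_N(0) \simeq \clie_\bu(\cG_{N,c})(0)$ of commutative dg algebras, and likewise for the center-of-mass-removed statement.

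To make the right-hand side concrete, I would compute $\clie_\bu(\cG_{N,c})(0)$ directly using the $L_\infty$ generalization of Lemmas~\ref{lem:taylor} and~\ref{lem:envelope}. Since $\clie_\bu(\cG_{N,c})$ is the $!$-dual, enveloping-type factorization algebra built from the compactly supported sections of the local Lie algebra $\cG_N$ on $\C^3$, the argument of Lemma~\ref{lem:envelope} identifies its local operators with the Chevalley--Eilenberg chains on the $\infty$-jets of $\cG_N$ at the origin: writing $J_0\cG_N$ for these jets, this gives a quasi-isomorphism $\clie_\bu(\cG_{N,c})(0) \simeq \clie_\bu(J_0\cG_N)$, with differential assembled from $Q^{hol}$ and the higher brackets of the $L_\infty$ structure. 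The same reasoning applied to $\til\cG_N$ produces the analogous model for $\til\Obs_N(0)$.

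The equivariance is where the structure of the problem is used. The local Lie algebra $\cG^{(0)} = \cE(3|6)$ acts on $\cG_N$ by derivations, so it acts on $\clie_\bu(\cG_{N,c})$ and hence on its local operators; passing to $\infty$-jets at the origin, the jets $J_0\,\cE(3|6) = E(3|6)$ act on $\clie_\bu(\cG_{N,c})(0)$. On the six-dimensional side, the enhanced symmetry $E(3|6)$ acts on $\Obs_N$ by the result of~\cite{SW6d}. To conclude, I would promote the factorization-algebra equivalence to one that is equivariant for the action of the local Lie algebra $\cE(3|6)$ on $\C^3$, so that naturality of the jet and limit constructions forces the induced equivalence on local operators to be $E(3|6)$-equivariant.

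The main obstacle is the factorization-level equivalence itself, which is the holographic input and the reason this remains a conjecture. Unlike the bulk side, which is the mathematically defined object $\clie_\bu(\cG_{N,c})$ extracted from twisted eleven-dimensional supergravity, the boundary factorization algebra $\Obs_N$ of the six-dimensional $\cN=(2,0)$ theory has no independent Lagrangian construction, so the two sides cannot yet be matched by a direct computation. A genuine proof would require either a first-principles definition of $\Obs_N$ at finite $N$ or a precise finite-$N$ AdS$_7$/CFT$_6$ dictionary controlling the $1/N$ corrections; the character matching carried out in the next section supplies strong evidence but not a proof. For the reduced statement one must additionally identify $\til\cG_N$ with the interacting factor and match it against the removal of the free $\lie{u}(1)$ tensor multiplet describing the center of mass on the boundary.
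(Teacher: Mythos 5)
The statement you were asked to prove is explicitly a \emph{conjecture}: the paper offers no proof, only the motivation in the surrounding text (the factorization-level expectation of \S\ref{sec:factsummary} together with the observation that $E(3|6)$ arises as the $\infty$-jets of $\cE(3|6)$ at the origin). Your proposal follows exactly that motivation --- passing from the conjectural equivalence of factorization algebras to local operators via the limit over opens and the jet construction, with equivariance induced from the $\cE(3|6)$-action --- and you correctly identify that the unproven holographic input (the finite-$N$ matching of $\Obs_N$ with $\clie_\bu(\cG_{N,c})$) is precisely why the statement remains a conjecture, so your assessment is consistent with the paper.
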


\section{Conjectures for indices of operators on fivebranes}
\label{s:finite}

In conjecture \ref{conj:ops} we have formulated a conjectural description of the space of local operators $\Obs_{N}(0)$ associated to the worldvolume theory on a stack of $N$ fivebranes in the holomorphic twist.
As we reviewed just in the previous section, the space of local operators is what categorifies the superconformal index that we study in this paper.
In this section we begin to provide some evidence for this description at the level of characters.

For a stack of $N=1$ fivebranes, which corresponds to the abelian six-dimensional superconformal field theory, we find that our local character matches exactly with the expressions in the literature. 
This is not a surprise as we have shown that even at the level of factorization algebras $\clie_\bu(\cG_{1,c})$ is quasi-isomorphic to the classical limit of~$\Obs_1$, see Proposition~\ref{prop:factabelian}.

The main computation of this section is a closed formula for the local character of the factorization algebra $\clie_\bu(\cG_{N,c})$ for $N > 1$, see Theorem \ref{thm:finite}. 
Following conjecture \ref{conj:ops} and the general discussion of \S \ref{sec:sucaindex} we are led to hypothesize a closed formula for the superconformal index for the theory on a finite number of fivebranes (in flat space).
As far as the authors are aware of there is no closed formula for the refined superconformal index (with four independent fugacities) for the theory on a stack of $N > 1$ fivebranes.
For small values of $N$ we expand our closed formulas to low orders in the fugacity $q$ (which roughly counts instanton charge) to match exactly with expressions in the literature. 

\subsection{Operators on a single fivebrane}

We deduce the character of the holomorphic twist of the theory on a single fivebrane and will find an exact match with the index of the six-dimensional superconformal theory associated to the abelian Lie algebra $\lie{gl}(1)$.
By the proposition \ref{prop:factabelian} we can compute this character either from a first principles description of the theory, or holographically by focusing on the weight $(-1)$ part of the decomposition of $\Bar{\pi}_* \Obs_{sugra}$.

\begin{lem}
\label{lem:single}
The $\Z \times \Z/2$ graded algebra of local operators $\Obs_{1}(0)$ of the holomorphic twist of the worldvolume theory of a single fivebrane is quasi-isomorphic to the graded symmetric algebra on the linear dual of the topological vector space
\beqn\label{eqn:localfree}
V_0[[z_1,z_2,z_3]] \simeq \Omega^{2}_{cl} (\Hat{D}^3)[1] \oplus \Pi \Omega^0(\Hat{D}^3, K^{1/2}) \otimes \C^2 [1].
\eeqn
\end{lem}

\begin{proof}
The jet expansion at $0 \in \C^3$ determines a map from the sections of the abelian holomorphic-topological local Lie algebra on $\C^3$ to the cochain complex
\beqn
\begin{tikzcd}
\ul{-1} & \ul{0} \\
\Omega^{2}(\Hat{D}^3) \ar[r,"\del"] & \Omega^{3}(\Hat{D}^3) \\
\Pi \Omega^0(\Hat{D}^3, K^{1/2}_{\Hat{D}^3}\otimes \C^2) . 
\end{tikzcd} 
\eeqn
On the formal disk all closed two-forms are automatically exact, which implies the lemma.
\end{proof}

We present the character of $\Obs_1(0)$ as the plethystic exponential of the character $f_1(t_1,t_2,r,q)$ of the space of linear local operators
\beqn
\chi_{1} (t_1,t_2,r,q) = {\rm PExp} \big[f_1(t_1,t_2,r,q) \big] .
\eeqn
According to the weights listed above and using the description of local operators in Lemma \ref{lem:single} we have the following contributions to the single particle character~$f_{1}(t_1,t_2,r,q)$.

\begin{itemize}
\item Single particle operators on the odd copy of holomorphic two-forms $\Pi \Omega^{2,hol}$ contribute
\[
- q^2 \frac{\chi^{\lie{sl}(3)}_{[0,1]}(t_1,t_2)}{(1-t_1^{-1}q) (1-t_1 t_2^{-1} q) (1-t_2 q)} = - q^2 \frac{t_1  + t_1^{-1} t_2  + t_2^{-1} }{(1-t_1^{-1}q) (1-t_1 t_2^{-1} q) (1-t_2 q)}
\]
where $\chi^{\lie{sl}(3)}_{[1,0]}(t_1,t_2)$ is the $\lie{sl}(3)$ character of highest weight $[1,0]$.
\item Single particle operators on the even copy of holomorphic three-forms $\Omega^{3,hol}$ contribute
\[
q^3 \frac{1}{(1-t_1^{-1}q) (1-t_1 t_2^{-1} q) (1-t_2 q)} 
\]
\item Single particle operators on $K^{1/2} \otimes \C^2$ contribute
\[
q^{3/2} \frac{\chi_{1}^{\lie{sl}(2)} (r)}{(1-t_1^{-1}q) (1-t_1 t_2^{-1} q) (1-t_2 q)} = q^{3/2}\frac{(r + r^{-1})}{(1-t_1^{-1}q) (1-t_1 t_2^{-1} q) (1-t_2 q)}
\]
where $\chi_{1}^{\lie{sl}(2)} (r)$ is the $\lie{sl}(2)$ character of highest weight one.
\end{itemize}

Putting this all together we obtain the following.

\begin{prop}
\label{prop:6done}
The local character $\chi_{1}(t_1,t_2,r,q)$ of the holomorphic twist of the theory on a single fivebrane is given by the plethystic exponential of the single particle character
\beqn\label{eqn:6done}
f_{1} (t_1,t_2,r,q) = \frac{q^{3/2}(r + r^{-1}) - q^2(t_1 + t_1^{-1} t_2 + t_2^{-1} ) + q^3}{(1-t_1^{-1}q) (1-t_1 t_2^{-1} q) (1-t_2 q)} .
\eeqn
\end{prop}

In terms of the parameters $y_1,y_2,y_3,y,q$ this single particle character reads
\beqn
\label{eqn:6done1}
f_{1} (y_i,y,q) = \frac{qy + q^2y^{-1} - q^2(y^{-1}_1+y^{-1}_2+y^{-1}_3) + q^3}{(1-y_1q) (1-y_2 q) (1-y_3 q)} .
\eeqn
The expression matches exactly with the index of the abelian six-dimensional superconformal theory.
For example, compare with \cite[Eq. (3.1)]{Kim:2013nva} or \cite[Eq. (3.35)]{Bhattacharya:2008zy}.
From now on, we will give all formulas for the index in terms of the parameters $y_1,y_2,y_3,y,q$.

In Proposition \ref{lem:single} we have shown that $\clie_\bu (\cG_{c}^{(-1)})$ is equivalent to the factorization algebra $\Obs^{cl}_{1}$ encoding the classical observables  of the holomorphic twist on a single fivebrane.
On $\C^3$, the global sections of the local Lie algebra $\cG^{(0)}$ is closely related to $E(3|6)$---the $\infty$-jets of $\cG^{(0)}$ at $0 \in \C^3$ is quasi-isomorphic to $E(3|6)$.
Combining these facts we see that $\Obs_{1}(0)$ is a module for $E(3|6)$. 
This module turns out to be a one-dimensional extension of an irreducible $E(3|6)$-module which was classified in~\cite{KR2}. 


%
%

\parsec
There are various degenerations, or specializations, of this character which are interesting to consider.
A particularly meaningful one is related to two different deformations of the theory by elements in the (twisted) superconformal algebra and is known as the Schur limit of the index.

Recall that after performing the holomorphic twist the residual superconformal algebra is~$\lie{osp}(6|2)$.
We have recalled in \S\ref{s:global1} how the bosonic part of this algebra is represented by fields of the eleven-dimensional theory. 
There are two types of odd elements of~$\lie{osp}(6|2)$ that also have a natural interpretation in the eleven-dimensional theory.
The odd part of~$\lie{osp}(6|2)$ can be identified with the twelve-dimensional space
\[
\C^3 \otimes \C^2 \oplus \wedge^2(\C^3) \otimes \C^2 
\]
where $\C^3, \C^2$ are the fundamental $\lie{sl}(3)$ and $\lie{sl}(2)$ representations, respectively. 
The $\lie{gl}(1)$ factor in the bosonic part of $\lie{osp}(6|2)$ acts with weight $1/2$ on both summands. 

\begin{itemize}
\item The summand $\C^3 \otimes \C^2$ embeds into the ghosts of twisted supergravity via the $\gamma$-type fields which satisfy
\[
\del \gamma = \d w_a \d z_i .
\]
where $i=1,2,3$ and $a = 1,2$.
Note that $\gamma$ appears to be ambiguous up to a closed holomorphic one-form, but since there is a linear gauge symmetry which sends $\beta \mapsto \del \beta$, it implies that $\gamma$ is unique up to a BRST exact term. 
Since in our model all closed one-forms are rendered trivial in cohomology
\item The summand $\wedge^2(\C^3) \otimes \C^2$ embeds as another $\gamma$-type field which satisfies 
\[
\del \gamma = w_a \d z_i \d z_j .
\]
\end{itemize}

Both deformations break the global Cartan subalgebra down to $\lie{gl}(1) \times \lie{gl}(1)$ according to the specializations
\beqn\label{eqn:special1}
y=1 , \quad y_3 = 1 .
\eeqn
Notice that due to the constraint $y_1y_2y_3=1$ this forces $y_1 = y_2^{-1}$.
As one can easily check, this specialization yields the following single particle index
\[
f_{1}(y_1, y_1^{-1},y_3=1, y=1, q) = \frac{q}{1-q} 
\]
which recovers the single particle index of a single chiral boson on the Riemann surface $\Sigma = \C_{z_1}$. 
Notice that the dependence on the parameter $y_1$ has completely dropped out even though we have not specialized it to any value.

\subsection{A conjectural description of operators on a stack of two fivebranes}

In \S\ref{sec:factsummary} we saw that the decomposition of the local $L_\infty$ algebra $\cG = \cG_Z$ on $Z$ induces a filtration of the factorization algebra $\clie_\bu(\cG_c)$. 
\[
\clie_{\bu}(\cG_{1,c}) \subset \clie_{\bu}(\cG_{2,c}) \subset \cdots .
\]
We now turn to the factorization algebra $\clie_{\bu}(\cG_{2,c})$.

Recall that $\cG_{2}$ is the local $L_\infty$ algebra on $Z$ defined as $\cG_{2} = \cG / \cG^{\geq 1}$. 
Since $\cG$ is concentrated in weights $\geq -1$ we see that $\til \cG_{2}$ is of the form
\[
\cG_2 = \til \cG_2 \ltimes \cG_1 
\]
where $\cG_1 = \cG^{(-1)}$ is the weight $(-1)$ piece and $\til \cG_2 = \cG^{\geq 0} / \cG^{\geq 1} = \cG^{(0)}$.  
We focus mostly on the factorization algebra $\clie_\bu(\til \cG_{2,c})$.

We have already characterized the local dg Lie algebra $\til \cG_{2} = \cG^{(0)}$ as the weight zero part of $\cG$ on on any threefold $Z$ in \S\ref{s:weight0}. 
We have also shown that $\cG^{(0)}$ is equivalent to the local Lie algebra $\cE(3|6)$. 
The even part of $\cE(3|6)$ is
\[
\Omega^{0,\bu}(Z, \T_Z) \oplus \Omega^{0,\bu}(Z) \otimes \lie{sl}(2) 
\]
with its natural cohomological grading by Dolbeault form type. 
The odd part of $\cE(3|6)$ is
\[
\Omega^{1,\bu}(Z, K_Z^{-1/2}) \otimes \C^2 .
\]
The differential is $\dbar$ and the Lie bracket has been described in \S\ref{s:weight0}.


\parsec

We continue by computing the character of local operators associated to the factorization algebra $\clie_\bu(\cG_{2,c})$ using Lemma~\ref{lem:envelope}.
For simplicity we will use the fugacities $y_i, y, q$.

\begin{itemize}
\item Single particle operators coming from the copy of holomorphic vector fields $\Vect^{hol}(\C^3)$ contribute
\[
q^4 \frac{\chi^{\lie{sl}(3)}_{[1,0]}(y_i)}{(1-y_1q) (1-y_2 q) (1-y_3 q)}  = q^4 \frac{y_1 + y_2 + y_3}{(1-y_1q) (1-y_2 q) (1-y_3 q)} 
\]
\item Single particle operators coming from $\lie{sl}(2)$-valued holomorphic functions $\lie{sl}(2) \otimes \cO^{hol}(\C^3)$ contribute
\[
q^3 \frac{\chi_2^{\lie{sl}(2)} (q^{-1/2}y)}{(1-y_1q) (1-y_2 q) (1-y_3 q)}  = \frac{q^2 y^2 + q^3 + q^4 y^{-2}}{(1-y_1q) (1-y_2 q) (1-y_3 q)} 
\]
\item Single particle operators coming from the odd piece of $E(3|6)$ which is $\Omega^{1,hol} \otimes K^{-1/2} \otimes \C^2$ contribute
\[
q^{7/2}\frac{\chi^{\lie{sl}(2)}_{1}(q^{-1/2} y) \, \chi_{[0,1]}^{\lie{sl}(3)} (y_i)}{(1-y_1q) (1-y_2 q) (1-y_3 q)} = q^{3}\frac{(y + q y^{-1})(y_1^{-1} + y_2^{-1} + y_3^{-1})}{(1-y_1q) (1-y_2 q) (1-y_3 q)} 
\]
\end{itemize}

Combining these expressions we obtain the following.

\begin{prop} \label{prop:6dtwo}
The character of local operators of the factorization algebra $\clie_\bu(\til \cG_{2,c})$ on $\C^3$ is given by the plethystic exponential of the following expression
\beqn\label{eqn:6dtwo}
\til f_{2} (y_i,y,q) = \frac{q^4(y_1+y_2+y_3) + q^2 (y^2 + q + q^2 y^{-2}) - q^{3} (y + q y^{-1})(y_1^{-1} + y_2^{-1} + y_3^{-1})}{(1-y_1q) (1-y_2 q) (1-y_3 q)}.
\eeqn
\end{prop}

Recall that our conjecture for the space of local operators associated to the holomorphic twist of the six-dimensional worldvolume theory on a stack of two fivebranes is $\Obs_2 (0) \simeq \clie_\bu(\cG_{2,c})(0) \cong \clie_\bu(\cG_{1,c})(0) \otimes \clie_\bu(\til \cG_{2,c})(0)$. 
And after removing the center of mass degrees of freedom, our conjecture is $\til \Obs_2 \simeq \clie_\bu(\til \cG_{2,c})$.

Just as in the abelian case, the local operators $\til \Obs_2(0)$ form a module over $E(3|6)$.
It turns out that this module is irreducible~\cite{KR2}.

We can now state a decategorified version of conjecture \ref{conj:ops} at the level of superconformal indices, or local characters.

\begin{conj}\label{conj:6dtwo}
The superconformal index of the six-dimensional superconformal theory of type $A_1$ is given by
\[
\til \chi_{2} (y_i,y,q) = {\rm PExp} \left[\til f_2(y_i,y,q) \right] .
\]
where $\til f_2(y_i,y,q)$ is as in \eqref{eqn:6dtwo}.
\end{conj}

Similarly, the index associated to the $\lie{gl}(2)$ theory, which is the local character of $\clie_\bu(\cG_{2,c})$, is conjectured to be simply the product 
\[
\chi_{2} (y_i,y,q) = \chi_{2} (y_i,y,q) \cdot \chi_{1}(y_i,y,q)
\]
where the character $\chi_{1}$ for the $\lie{gl}(1)$ theory is given in proposition~\ref{prop:6done}
Equivalently, $\chi_2$ is the plethystic exponential of $f_2 = f_1 + \til f_2$. 

%

\parsec[]

The Schur limit $y=1, y_3=1$ of $\til f_2$ in \eqref{eqn:special1} yields 
\[
\til f_{2}(y_1, y_2, y_3=1, y=1, q) = \frac{q^2}{1-q} 
\]
which is the single particle index of Virasoro vacuum module on the Riemann surface $\Sigma = \C_{z_1}$. 

\subsection{A closed formula for the finite $N$ index}

Before exhibiting the general formula for the local character of the factorization algebra $\clie_\bu(\cG_{N,c})$ on $\C^3$ we set up some notation. 
As above, we let $\chi_k^{\lie{sl}(2)}$ and $\chi^{\lie{sl}(3)}_{[k,l]}$ denote the highest weight $\lie{sl}(2)$ and $\lie{sl}(3)$ characters. 
We also define the following expression which appears in the denominator in all of our characters
\beqn
d(y_i,y,q) = (1-y_1 q)(1-y_2q)(1-y_3q) .
\eeqn 
To simplify formulas, we will temporarily denote the single particle character for the $N=1$ theory $\Obs_1$ by 
\beqn
g_{-1} (y_i,y,q) = f_1(y_i,y,q)
\eeqn
where $f_1(y_i,y,q)$ is as in equation \eqref{eqn:6done1} and also denote by 
\beqn
g_0 (y_i,y,q) = \til f_2(y_i,y,q)
\eeqn
where $\til f_2(y_i,y,q)$ is as in equation \eqref{eqn:6dtwo}. 
Thus $g_2$ is the single particle local character of $\clie_\bu(\til \cG_{2,c}) = \clie_{\bu}(\cG_c^{(0)})$.
Finally, for $k \geq 1$ let
\beqn
\label{eqn:gk}
\begin{array}{lllll}
g_k (y_i,y,q) \define & q^{3} \left(q^{1 + 3 k/2} \chi^{\lie{sl}(2)}_{k}(q^{-1/2} y)\chi^{\lie{sl}(3)}_{[1,0]}(y_i) + q^{3k/2} \chi^{\lie{sl}(2)}_{k+2}(q^{-1/2} y) \right. \\
&\displaystyle \frac{\left.  - q^{3(k+1)/2} \chi^{\lie{sl}(2)}_{k-1}(q^{-1/2}y) - q^{-1 + 3(k+1)/2} \chi^{\lie{sl}(2)}_{k+1} (q^{-1/2} y) \chi^{\lie{sl}(3)}_{[0,1]}(y_i) \right)}{d(y_i,y,q)} .
\end{array}
\eeqn

\begin{thm}
\label{thm:finite}
Let $N \geq 3$. 
The local character of the factorization algebra $\clie_{\bu}(\cG_{N,c})$ is
\beqn
\chi_{N}(y_1,y_2,y_3,y,q) = \text{PExp}\left[\sum_{k=-1}^{N-2} g_k(y_1,y_2,y_3,y,q)\right].
\eeqn
Similarly, the local character of the factorization algebra $\clie_\bu(\til{\cG}_{N,c})$ is 
\beqn
\til{\chi}_{N}(y_1,y_2,y_3,y,q) = \text{PExp}\left[\sum_{k=0}^{N-2} g_k(y_1,y_2,y_3,y,q)\right].
\eeqn
\end{thm}
\begin{proof}
By Lemma~\ref{lem:envelope} the character of $\clie_\bu (\cG_{N,c})$ is given by 
\beqn
\chi_N = \text{PExp} \left[f_N\right]
\eeqn
where $f_N$ is the single particle local character.
Thus, it suffices to show that $f_N = \sum_{k = -1}^{N-2} g_k$.
Recall that from the description \eqref{eqn:gN} we have, as local Lie algebras:
\beqn
\cG_N = \cG / \cG^{(\geq N-2)} ,
\eeqn 
for $N \geq 1$. 
In particular, as a super vector bundle on the threefold $Z = \C^3$ we have
\[
\cG_N = \cG^{(-1)} \oplus \cG^{(0)} \oplus \cdots \oplus \cG^{(N-2)} .
\]
So, it suffices to observe that $g_k$ is the single particle index of the factorization algebra $\clie_\bu(\cG^{(k)}_c)$, which is a direct observation using the description of $\cG^{(k)}$ we have given in Proposition \ref{prop:Vj}.
\end{proof}

We thus arrive at the following conjecture for the index of the worldvolume theory on a stack of a finite number of fivebranes which we phrase in terms of the six-dimensional superconformal theory associated to the Lie algebra $\lie{sl}(N)$.

\begin{conj} 
The superconformal index of the six-dimensional superconformal theory associated to the Lie algebra of type $A_{N-1}$ is $\til \chi_{N}(y_1,y_2,y_3,y,q)$. 
\end{conj}

We proceed to give some concrete evidence for this conjecture.
First, we show that when we take the limit as $N \to \infty$ that we recover the index computed from the gravitational side.

\parsec

It follows from the limit description in \eqref{eqn:lim} that the large $N$ limit of $\chi_N$ is precisely the multiparticle supergravity index we computed in proposition~\ref{prop:sugraindex1}. 
Alternatively, we have the following direct proof of this fact. 

\begin{prop}
One has
\beqn
\chi_{sugra}(y_i, y, q) = \lim_{N \to \infty} \chi_N(y_i,y,q)
\eeqn
\end{prop}

\begin{proof}
It suffices to show that at the level of single particle indices one has
\beqn
f_{sugra}(y_i, y, q) = \lim_{N \to \infty} f_N(y_i,y,q) ,
\eeqn
where $f_N = \sum_{k = -1}^{N-2} g_k$. 

We will use the following identity 
\beqn
\sum_{k=0}^\infty q^{3k/2} \chi_{k}^{\lie{sl}(2)}(q^{-1/2}y) = \frac{1}{(1-q y)(1-q^2 y^{-1})} .
\eeqn
We will denote this expression by $S(y,q)$.

Using this identity one can directly see that the result reduces to observing that
\begin{multline}
\left(q^4 (y_1+y_2+y_3) + 1 - q^6 - q^2 (y_1^{-1} + y_2^{-1} + y_3^{-1})\right)S(y,q) - 1 + q^3= \\
\left(q^4(y_1+y_2+y_3)-q^2(y_1^{-1} + y_2^{-1} + y_3^{-1})+(1-q^3)(yq + y^{-1} q^2) \right) S(y,q) .
\end{multline}


\end{proof}

As an immediate corollary we have the following result.
\begin{cor}
For any $N \geq 1$ one has
\beqn
\chi_{sugra}(y_i,y,q) = \til{\chi}_N(y_i,y,q) \mod q^{N+1} .
\eeqn
\end{cor}
\begin{proof}
This follows from observing that at the level of single particle states $f_N$ is of order $q^{N}$.
\end{proof}

\parsec

We can also apply the Schur limit  $y,y_3\to 1$ to $\chi_N$.
\begin{prop}
Upon specializing $y=1,y_3=1$ (so that $y_1 y_2 = 1$) one has the following single particle index
\beqn
f_N (y_1,y_2, y_3=1,y=1,q) = \frac{q^2 + q^3 + \cdots + q^{N}}{1-q} 
\eeqn
The plethystic exponential of the right hand side agrees with the vacuum character of the $W_{N}$ vertex algebra.\footnote{By this we mean the principal $W$-algebra of type $A_{N-1}$.}
\end{prop}
\begin{proof}
By induction it suffices to show that the specialization of the single particle local character $g_k$ of the factorization algebra $\clie_\bu(\cG^{(k)})$ is $q^{k+2} / (1-q)$. 
We have already seen this in the case $k=-1,0$, so it suffices to show this when $k \geq 1$.

First observe that the denominator becomes
\beqn
d(y_1,y_2,y_3=1,y=1,q) = (1-y_1 q)(1-y_2q) (1-q) .
\eeqn

Next, we observe that the numerator of $g_k (y_1,y_2,y_3=1,y=1,q)$ can be factored as
\begin{align*}
q^{3 + 3k/2} \left(q^{-(k+2)/2} + q^{-(k-2)/2} - q^{-k/2} (y_1+y_2) \right) 
& = q^{k+2} (1 + q^2 - q (y_1 + y_2)) \\
& = q^{k+2} (1 - y_1 q) (1-y_2 q) 
\end{align*}
where in the last line we have used $y_1 y_2 = 1$.
The result follows.
\end{proof}

\parsec[]

We would also like to point out compatibility of our expression with a certain ``minimally reduced'' index considered in \cite{Gaiotto:2021xce}. 
This minimal reduction is the result of sending certain parameters to zero while keeping some expression in the fugacities fixed.
To consider it it is useful to make the following change of variables: 
\beqn
z_i = y_i q, \quad w_1 = yq, \quad w_2 = y^{-1} q^2 .
\eeqn
These variables satisfy the constraint $z_1 z_2 z_3 = w_1 w_2$. 

This minimally reduced index corresponds to taking the following limit in the new fugacities
\beqn\label{eqn:limitgaitto}
z_3 , w_2 \to 0 .
\eeqn
This yields an index which only accounts for operators which transform trivially with respect to the symmetries that the fugacities $z_3,w_2$ correspond to. 
This will result in an index which has three remaining fugacities.

\begin{prop}
\label{prop:gaiotto}
The limit $z_3 , w_2 \to 0$ of the expression $\chi_{N}(z_i,w_a)$ is 
\beqn
\prod_{a=1}^N \prod_{b,c \geq 0} \frac{1-w_1^{a-1}z_1^{b+1} z_2^{c+1}}{1-w_1^a z_1^b z_2^c} .
\eeqn
\end{prop}
\begin{proof}
It is easy to see that the $z_3,w_2 \to 0$ limit of $g_{-1}$ is 
\beqn
g_{-1}(z_1,z_2,w_1) = \frac{w_1 - z_1 z_2}{(1-z_1)(1-z_2)} 
\eeqn
and the $z_3,w_2 \to 0$ limit of $g_0$ is 
\beqn
g_0(z_1,z_2,w_3) = w_1 g_{-1}(z_1,z_2,w_1) .
\eeqn

In the coordinates $z_i,w_a$ the expression $g_k$, for $k \geq 1$, in \eqref{eqn:gk} can be written as
\beqn
\label{eqn:gk}
\begin{array}{lllll}
g_k (z_i,w_a) \define & \left( z_1z_2z_3 p_k(w_1,w_2) (z_1+z_2+z_3) + p_{k+2}(w_1,w_2)  \right. \\
&\displaystyle \frac{\left.  -z_1z_2z_3 p_{k-1}(w_1,w_2) - p_{k+1}(w_1,w_2) (z_1z_2+z_2z_3+z_1z_3) \right)}{(1-z_1)(1-z_2)(1-z_3)} .
\end{array}
\eeqn
Here $p_k(w_1,w_2) = \sum_{i+j=k} w_1^i w_2^j$. 

From this expression it is easy to see that $\lim_{z_3,w_2 \to 0} g_k$ is 
\beqn
g_k(z_1,z_2,w_1) = w_1^{k+1} g_{-1}(z_1,z_2,w_1) .
\eeqn
The result follows from applying the plethystic exponential.
\end{proof}

The $z_3,w_2 \to 0$ limit of our index is quite similar, though not exactly, the index of a four-dimensional $\cN=1$ theory on $\C^2$ where the fugacities $z_1,z_2$ count holomorphic derivatives in each of the complex directions.
Also, note that this minimally reduced index further reduces to the Schur limit (so the character of the $W_N$ vertex algebra) by specializing $z_1 = w_1$.

\subsection{Comparisons to expansions of superconformal indices}

In the final section we would like to exhibit a series of direct consistency checks with our conjectural exact formula for the index of the non-abelian six-dimensional superconformal theory with a number of expansions that have appeared in recent literature. 

\parsec
Let us first focus on the superconformal theory associated to the Lie algebra $\lie{sl}(2)$ (so type $A_1$).
Our conjecture for the superconformal index in this case is the plethystic exponential of $\til f_2 (y_i,y,q)$ from equation \eqref{eqn:6dtwo}.
We expand the formal single particle index $\til f_2 (y_i, y, q)$ as a series in the variable~$q$, yielding
\begin{align*}
\til f_2 (y_i,y,q) & = y^2 q^2 + \left(1 - \chi_{[0,1]}(y_i) y + \chi_{[1,0]} y^2 \right) q^3 \\
& + \left(y^{-2} - \chi_{[0,1]}(y_i) y^{-1} + 2 \chi_{[1,0]}(y_i) - \chi_{[0,1]} (y_i) y + \chi_{[2,0]}(y_i) y^2 \right) q^4 + O(q^5) .
\end{align*}
From this expression, we obtain the $q$-expansion of the index $\til \chi_2(y_i,y,q) = \text{PExp}[\til f_2]$ as 
\begin{align*}
\til \chi_2(y_i,y,q) & = 1 + y^2 q^2 + \left(1-\chi_{[0,1]}(y_i) y + \chi_{[1,0]}(y_i)y^2\right)q^3 \\ 
& + \left(y^{-2} - \chi_{[0,1]}(y_i) y^{-1} + 2 \chi_{[1,0]}(y_i) - \chi_{[0,1]} (y_i) y + \chi_{[2,0]}(y_i) y^2 + y^4\right)q^4 + O(q^5)
\end{align*}

Similarly, for the $\lie{gl}(2)$ theory $\chi_2 = \text{PExp}[f_1 + \til f_2] = \chi_1 \cdot \til \chi_2$ we find the expansion
\begin{align*}
\chi_2 (y_i,y,q) & = y q + \left(y^{-1} - \chi_{[0,1]}(y_i) + \chi_{[1,0]}(y_i) y + 2y^2 \right) q^2 \\ 
& + \left( \chi_{[1,0]}(y_i) y^{-1} - (\chi_{[1,1]}(y_i)-2) + (\chi_{[2,0]}(y_i) - 2 \chi_{[0,1]}(y_i)) y + 2 \chi_{[1,0]}(y_i) y^2 + 2y^3\right) q^3 \\ & + O(q^4) .
\end{align*}

We observe that these $q$-expansions agree precisely with the expansions in \cite{Kim:2013nva} for the $\lie{gl}(2)$ theory  
(see equations (3.51) and (3.65) of \textit{loc. cit.}).

\parsec

We proceed to compare expansions of our exact expression for the $\lie{gl}(3)$ theory to those in \cite{Kim:2013nva}. 
Recall that our conjectural $\lie{gl}(3)$ index is given by the local character of the holomorphic factorization algebra $\Obs_3$:
\beqn
\chi_3 (y_i,y,q) = \text{PExp}[f_3(y_i,y,q)] = \chi_2(y_i,y,q) \cdot \text{PExp}[g_3(y_i,y,q)]  .
\eeqn
Here, $f_3(y_i,y,q)$ is the single particle local character for the holomorphic factorization algebra $\Obs_3$ and $g_3(y_i,y,q)$ is given in equation \eqref{eqn:gk}. 

Since $g_3(y_i,y,q) = y^3 q^3 + O(q^4)$ we see that $\chi_3$ and $\chi_2$ agree up to order $q^2$ and the difference at order $q^3$ is simply
\beqn
\chi_3(y_i,y,q) - \chi_2(y_i,y,q) = y^3q^3 + O(q^4) .
\eeqn
This is again in exact agreement with the index for the $\lie{gl}(3)$ theory computed \cite{Kim:2013nva} up to order $q^3$ (see equation (3.79) of \textit{loc. cit.}). 

\parsec

Next, we compare to expansions for the $\lie{sl}(N)$ theory computed in \cite{Imamura}, where the method of the `giant graviton' expansion is used.
It will be convenient to change the variables $(y_i, y, q) \to (y_i,x,q)$ where 
\beqn
x = qy .
\eeqn 
We will again expand in powers of $q$.\footnote{To match precisely with the equations in \cite{Imamura} we note that it is necessary to relable the variables $y_i \leftrightarrow u_i$, $x \leftrightarrow \check{x}$, and $q \leftrightarrow y$ where the variable $y$ is distinct from the one we use in this paper!}

Starting with the $\lie{sl}(2)$ theory we find that up to order $q^4$ the single particle index is
\begin{align*}
\til f_2 (y_i,x,q) & = x^2 + \chi_{[1,0]}(y_i) x^2 q \\
& + \left(-\chi_{[0,1]}(y_i) x + \chi_{[2,0]}(y_i) x^2 \right) q^2 +  \left(1-x-\chi_{[1,1]}(y_i) x + \chi_{[3,0]}(y_i) x^2 \right) q^3 \\
& + \left(2 \chi_{[1,0]}(y_i) - \chi_{[2,1]}(y_i) x + \chi_{[4,0]}(y_i)x^2 \right)q^4 + O(q^5) .
\end{align*}

It follows that the plethystic exponential $\til \chi_2(y_i,y,q)$ of this expression has $q$-expansion
\begin{align*}
\til \chi_2(y_i,y,q) & = \frac{1}{1-x^2} +  \frac{x^2}{1-x^2} \chi_{[1,0]}(y_i) q \\
& + \left(- x \chi_{[0,1]}(y_i) +  x^2 (1+x^2)\chi_{[2,0]}(y_i)\right) \frac{1}{1-x^2} q^2 \\ 
& + \left( 1 - x - x^3 + x^6 + (-x - x^3 + x^4) \chi_{[1,1]}(y_i)  + (x^2 + x^4 + x^6)\chi_{[3,0]}(y_i)  \right) \frac{1}{1-x^2} q^3 \\
& + O(q^4) 
\end{align*}
This agrees with the expansion in \cite{Imamura} (see equation (68)) except for the $\lie{sl}(3)$-scalar term at order $q^3$. 
We find $(1-x-x^3+x^6) / (1-x^2) = (1-x^3-x^4-x^5) / (1+x)$ whereas Imamura's result is $1 / (1+x)$. 

Similarly, we can obtain the $q$-expansions for the local character $\chi_3(y_i,x,q)$ of the factorization algebra $\Obs_3$ and compare it to the $q$-expansion for the superconformal index of the $\lie{sl}(3)$ theory in \cite{Imamura}. 
Up to order $q^2$ we have
\begin{align*}
\chi_3(y_i,x,q) & = \frac{1}{(1-x^2)(1-x^3)} + \frac{x^2}{(1-x)(1-x^3)} \chi_{[1,0]}(y_i) q \\
& + \left((-x -x^2 + x^5)\chi_{[0,1]}(y_i) + (x^2 + x^3 + x^4 + x^5+ x^6) \chi_{[2,0]}(y_i) \right) \frac{1}{(1-x^2)(1-x^3)} q^2 \\
& + O(q^3) .
\end{align*}
Again, we find a discrepancy of our expansion compared to \cite{Imamura} at order~$q^3$.
It would be interesting to explain the physical or representation theoretic sources of these discrepancies in each of these cases.



\printbibliography

\end{document}